\def\arXiv#1{\href{http://arxiv.org/abs/#1}{arXiv:#1}}
\newcolumntype{P}[1]{>{\centering\arraybackslash}m{#1}}
\def\?[#1]{\textbf{[#1]}\marginpar{\Large{\textbf{??}}}}
\def\smallsection#1{\smallskip\noindent\textbf{#1}.}
\let\epsilon=\varepsilon % sorry Knuth
\newcommand{\RR}{{\mathbb R}}
\newcommand{\CC}{{\mathbb C}}
\newcommand{\ZZ}{{\mathbb Z}}
\newtheorem{theo}{Theorem}
\newtheorem{prop}{Proposition}[section]
\newtheorem{lemm}[prop]{Lemma}
\numberwithin{equation}{section}
\DeclareMathOperator{\Spec}{Spec}
\let\Im=\Imag
\let\Re=\Real
\DeclareMathOperator{\tr}{tr}
\newcommand\reallywidehat[1]{\arraycolsep=0pt\relax%
\begin{array}{c}
\stretchto{
  \scaleto{
    \scalerel*[\widthof{\ensuremath{#1}}]{\kern-.5pt\bigwedge\kern-.5pt}
    {\rule[-\textheight/2]{1ex}{\textheight}} %WIDTH-LIMITED BIG WEDGE
  }{\textheight} % 
}{0.5ex}\\           % THIS SQUEEZES THE WEDGE TO 0.5ex HEIGHT
#1\\                 % THIS STACKS THE WEDGE ATOP THE ARGUMENT
\rule{-1ex}{0ex}
\end{array}
}
\begin{document}

%\title[Dirac point dynamics \& in-plane fields]{In-plane magnetic field Dirac point dynamics in twisted bilayer graphene}

\title[Dirac points for TBG with in-plane magnetic field]{Dirac points for twisted bilayer graphene with in-plane magnetic field}

\author{Simon Becker}
\email{simon.becker@math.ethz.ch}
\address{ETH Zurich, 
Institute for Mathematical Research, 
Rämistrasse 101, 8092 Zurich, 
Switzerland}

\author{Maciej Zworski}
\email{zworski@math.berkeley.edu}
\address{Department of Mathematics, University of California,
Berkeley, CA 94720, USA}

\begin{abstract}
  We study Dirac points of the chiral model of twisted bilayer graphene (TBG) with constant
  in-plane magnetic field. For a fixed
  small magnetic field, 
  we show that as the angle of twisting  varies between magic angles,  the Dirac points move between $ K, K' $ points
  and the $ \Gamma $ point. The Dirac points for zero magnetic field and non magic
  angles lie at $ K $ and $ K'$, while in the presence of a non-zero magnetic field 
  and near magic angles, they lie near the $ \Gamma $ point. For special directions of the magnetic field, we show that the Dirac points move, as the twisting angle varies, along straight lines and bifurcate orthogonally at distinguished points. At the bifurcation points, the linear dispersion relation of the merging Dirac points disappears and exhibit a quadratic band crossing point (QBCP). The results are illustrated
  by links to animations suggesting interesting additional structure. 
\end{abstract}
 
\maketitle 

\section{Introduction}
\label{s:int}
We consider the chiral model for twisted bilayer graphene in the form 
considered by Tarnopolsky--Kruchkov--Vishwanath \cite{magic} and then 
studied mathematically by Becker et al \cite{beta,bhz1,bhz2}. Following 
Kwan et al \cite{kps} and Qin--MacDonald \cite{wema}, see also \cite{RY} for bilayer graphene, we introduce an additional 
term $ B = b e^{ 2 \pi i \theta } $ corresponding to an in-plane magnetic field of strength $ b $
and direction $ 2 \pi \theta $ -- see \eqref{eq:defDB}. 

We concentrate on the case of {\em simple} magic $ \alpha $'s ($ \alpha $ is a dimensionless 
parameter roughly corresponding to the reciprocal of the angle of twisting of the 
two graphene sheets; see \S \ref{s:mult} for the discussion of simplicity). For the 
Bistritzer--MacDonald potential $ U_{\rm{BM}} ( z ) $ (see the caption to Figure \ref{f:1})
the real magic angles are expected to be simple (see Remark 1 after Theorem \ref{th:Gamma}).

\begin{figure}
%\includegraphics[width=7.5cm]{JPEG_Dirac.png}
%\includegraphics[width=7.5cm]{JPEG_Dirac_2bands.png}
%{\begin{tikzpicture}
%\includegraphics[width=7.5cm]{JPEG_Dirac.png}
%\includegraphics[width=7.5cm]{JPEG_Dirac_2bands.png}
%\node at (-2,0) {
\includegraphics[width=7.6cm]{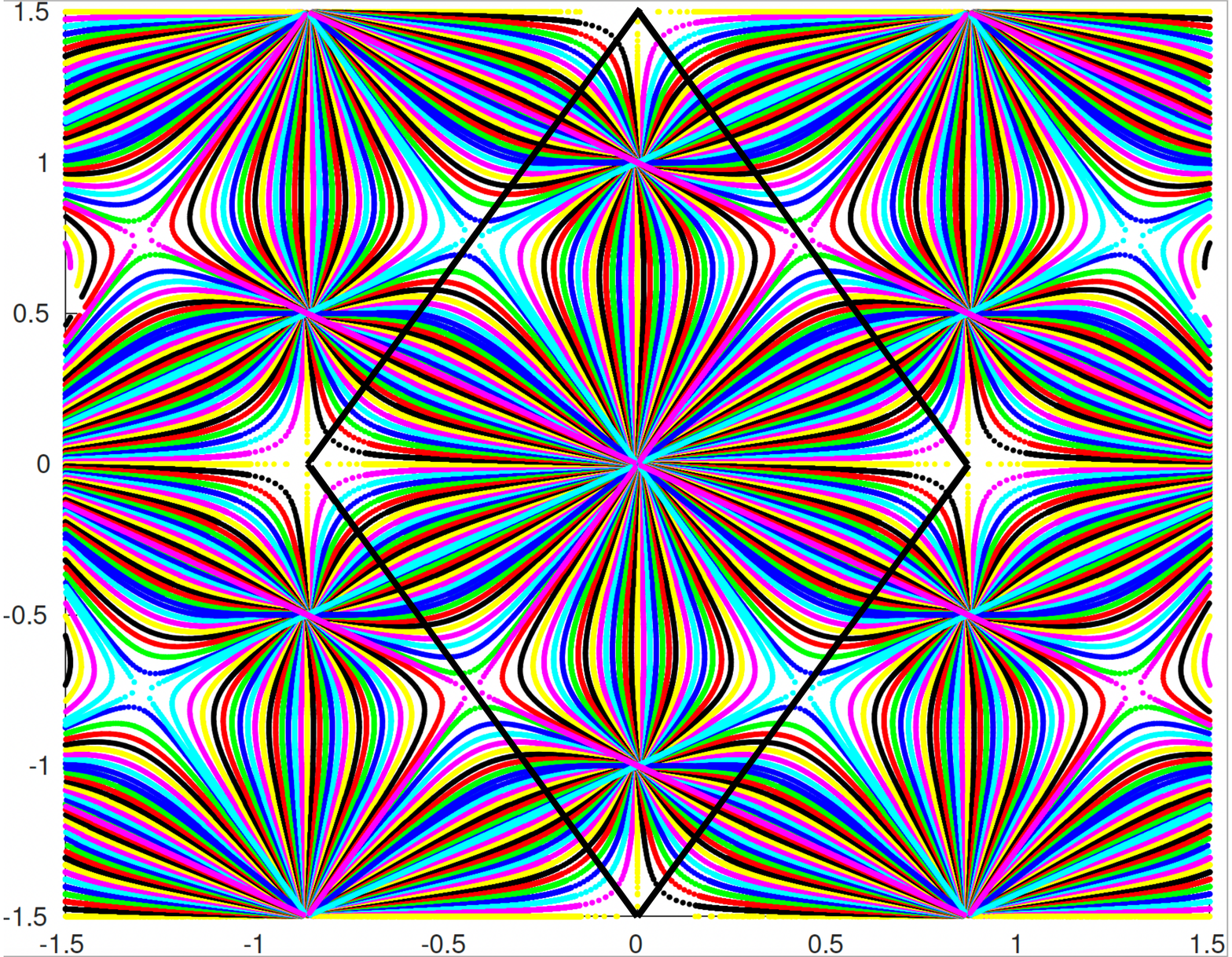}
%};
%\node at (7.6-2,0) {
\includegraphics[width=7.6cm]{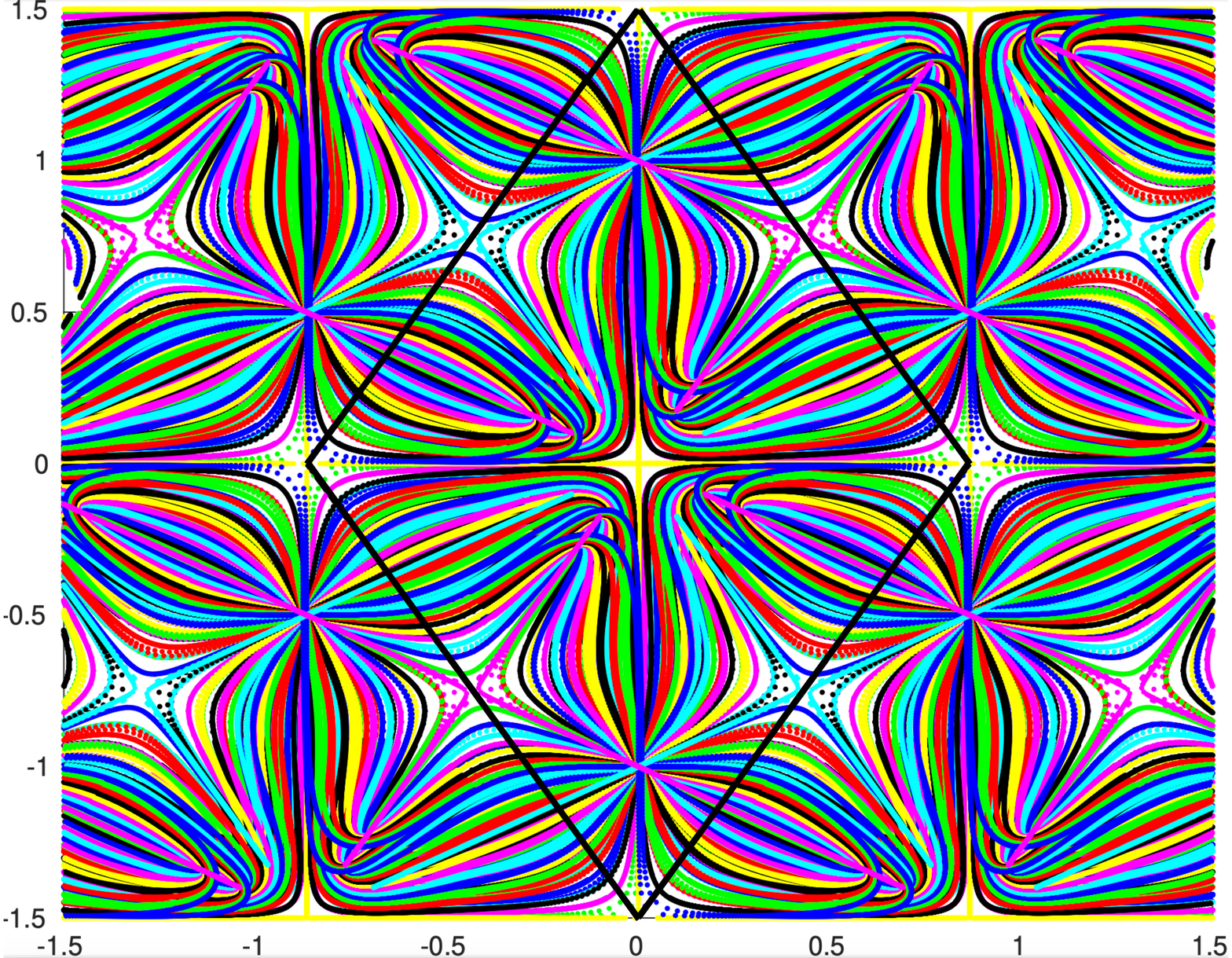}
%};
%\node at (-2,-3.3) {$\Re k$};
%\node at (-4-2,0) {$\Im k$};
%\node at (7.6-2,-3.3) {$\Re k$};
%\end{tikzpicture}}
\caption{\label{f:1} We 
show the movement  of Dirac points as  $\alpha $ varies in $ (0,2.3)$ 
for the Bistritzer--MacDonald potential $ U ( z ) = U_{\text{BM}} = \sum_{k=0}^2 \omega^k e^{\frac12 ( z\bar \omega^k - \bar z \omega^k )} $
 (left) and $\alpha \in (0,2.7)$
 and $ U ( z ) = 2^{-\frac12} ( U_{\text{BM}} (z) - \sum_{k=0}^2 \omega^k e^{-z\bar \omega^k - \bar z \omega^k}) $
  (right). (Here we use the convention of \cite{magic,beta} -- see \eqref{eq:z2zeta}.) The magnetic field is given by 
$B=B_0 e^{2\pi i \theta}$ with $ B_0 = 0.1 $ and curves of different colour correspond to 
different  $\theta \in [0,\tfrac12]$.
In the case on the left $ \alpha $ passes two simple magic $\alpha$'s; on the right, it passes two double magic $ \alpha$'s. The $ \Gamma $ point corresponds to $ 0 $ and $ K $, $ K' $ points to 
$ \pm i $. The boundary of the Brillouin zone, a fundamental domain of $ \Lambda^* $, is outlined in black. See 
\url{https://math.berkeley.edu/~zworski/B01.mp4} and
\url{https://math.berkeley.edu/~zworski/B01_double.mp4}
for the corresponding animations.
}
\end{figure}

We have the following combination of mathematical and numerical observations:

\begin{itemize}

\item We show (Theorem \ref{th:Gamma}) that a small in-plane magnetic field
destroys flat bands corresponding to simple magic $ \alpha$'s (under an 
additional non-degeneracy assumption);

\medskip

\item For small magnetic fields, the motion of Dirac points appears quasi-periodic for
$ \alpha \in [ \alpha_j , \alpha_{j+1} ] $ where $ \alpha_j $ are the magic angles for
the Bistritzer--MacDonald potential \cite{magic}. That is most striking for 
$ \theta = 0 , \frac23 $ for which the motion is linear -- see the Remark after 
Theorem \ref{t:symD} and also Figure \ref{fig:Bifurcation}.

\medskip

\item Theorem \ref{t:symD} shows that most of the action 
takes place near the magic angles: the Dirac points
get close to $ \Gamma $ point (Theorem \ref{th:Gamma}; they meet there for $ \theta = 0 $, Proposition \ref{p:SB} and $\theta = \frac23 $, Proposition \ref{p:SB1}) at simple magic angles
-- see \url{https://math.berkeley.edu/~zworski/magic_billiard.mp4} for an animation. When the Dirac cones meet, they exhibit a quadratic band crossing point (QBCP), see Figure \ref{f:c2w} and Proposition \ref{prop:well} 
(its formulation requires introduction of Bloch--Floquet spectra in \S \ref{s:BF}) -- 
for the discussion of such phenomena in the physics literature see \cite{dGGM12,kps,MLFP}.

\medskip

\item Figure \ref{f:2} (right) shows that for fixed $ \alpha$'s and varying directions of the
magnetic field, we have ``fixed points" at $ \Gamma $ and $ K , K' $ with ``normal 
crossings" and the vertices and middle of points of edges of the boundary 
of the Brillouin zone. These points are precisely the intersection of the rectangles
(other than $ \Gamma, K , K' $). 

\item The situation is more complicated for double (protected) magic angles: 
see the right panel in Figure \ref{f:1}: at magic $\alpha$'s, Dirac points are now
close to $ K$ and $K'$.

\end{itemize}

\begin{figure}
{\begin{tikzpicture}
\node at (-2,0) {\includegraphics[width=7.6cm]{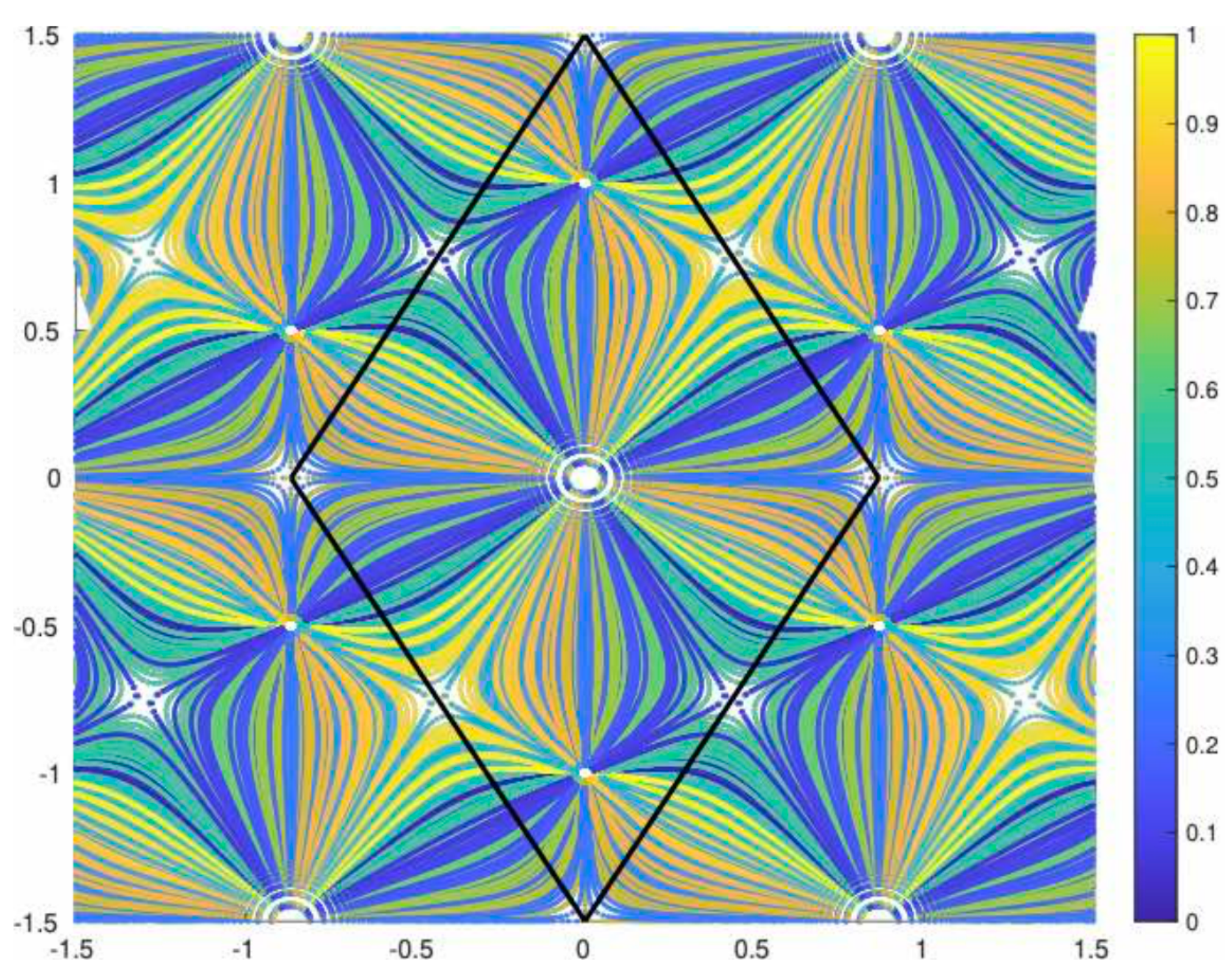}};
\node at (7.6-2,0) {\includegraphics[width=7.6cm]{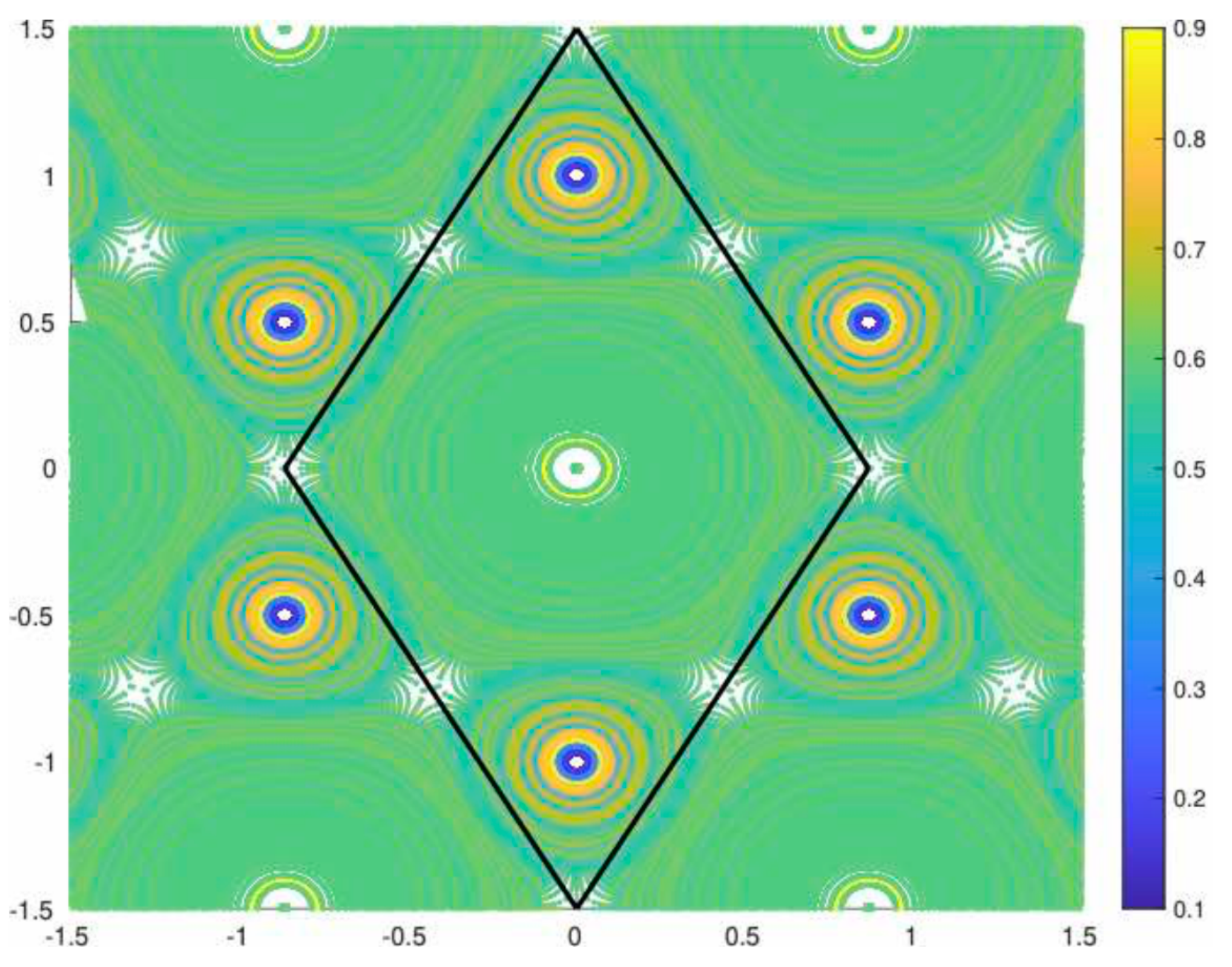}};
%\node at (-2,-3.3) {$\Re k$};
%\node at (-4-2,0) {$\Im k$};
%\node at (7.6-2,-3.3) {$\Re k$};
%\node at (3.5-2,-3.0) {$\alpha$};
\node at (1.3,-3) {$\theta$}; 
\node at (7.6+3.5-2.25,-3.0) {$\alpha$};
\end{tikzpicture}}
\caption{\label{f:2} The dynamics of Dirac points 
for the Bistritzer--MacDonald potential $ U ( z ) = U_{\text{BM}} = \sum_{k=0}^2 \omega^k e^{\frac12 ( z\bar \omega^k - \bar z \omega^k )} $. The magnetic field  given by 
$B=B_0 e^{2\pi i \theta}$ with $ B_0 = 0.1 $ 
On the left different colours correspond to different values of $\theta $ shown in the
colour bar and $ \alpha $ varies between $ 0.1 $ and $ 0.9 $ (this is a colour map version of
the left panel of Figure \ref{f:1}). On the right, the colours correspond to different 
values of $ \alpha $ shown in the colour bar and $ \theta $ varies. The predominance of green 
(corresponding to the range between 0.5 and 0.6) means that most of the motion happens
near the (first) magic alpha -- see \url{https://math.berkeley.edu/~zworski/first_band.mp4} 
for $ E_1 ( \alpha, k ) /\max_k E_1( \alpha, k )$ for fixed $ B $ as $ \alpha $ varies. 
}
\end{figure}

This note is organized as follows: we present the Hamiltonian and the definition of 
Dirac points in \S \ref{s:inplane}. We also establish basic symmetry properties
 of Dirac points and a perturbation result valid away from magic $ \alpha$'s. The next section
 reviews the theory of magic angles following \cite{beta,bhz2} but in a more invariant
 and general way. In \S \ref{s:proof} we set up Grushin problems needed for the
 understanding the small in-plane magnetic fields as a perturbation. 
 We then specialize, in \S \ref{s:bif}, to directions of the magnetic field for which the Dirac
 points move linearly  as $ \alpha $ changes. In particular, they meet at 
 special points and we describe the resulting quadratic band crossing.
 We conclude in \S \ref{s:proofs} with the proofs of the main theorems.

\smallsection{Acknowledgements} 
We would like to thank Allan MacDonald for suggesting the in-plane magnetic field
problem to us, and Charles Epstein for a helpful discussion. 
MZ was partially supported by National Science
Foundation under the grant DMS-1901462 and by the Simons Foundation Targeted Grant Award No.
896630.

\begin{figure}
{\begin{tikzpicture}
   \node at (-11,0) {\includegraphics[trim={2.7cm 0 0 0},width=6.2cm]{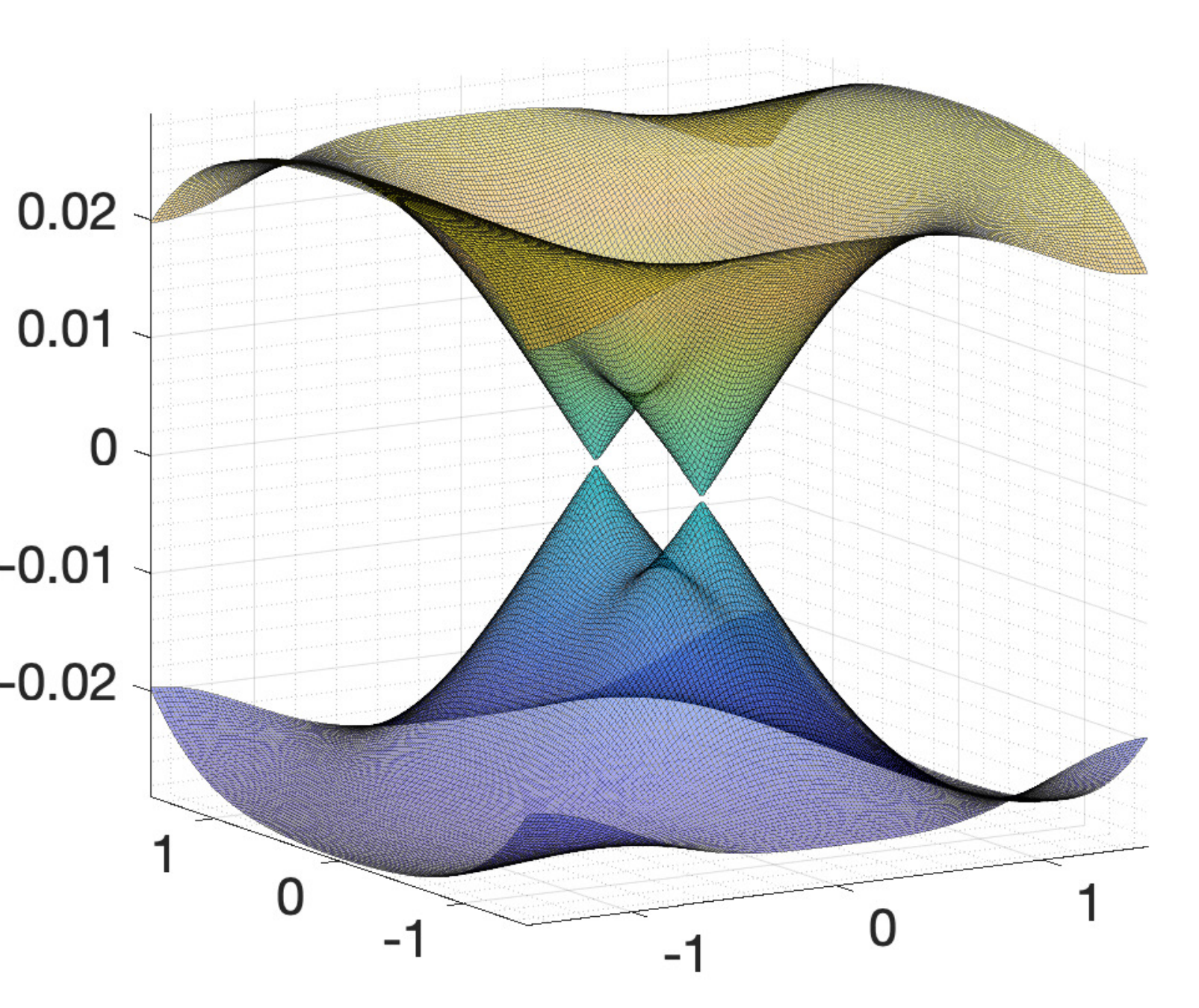}};  
    \node at (-4,0){\includegraphics[trim={1.7cm 0 0 0},width=6.5cm]{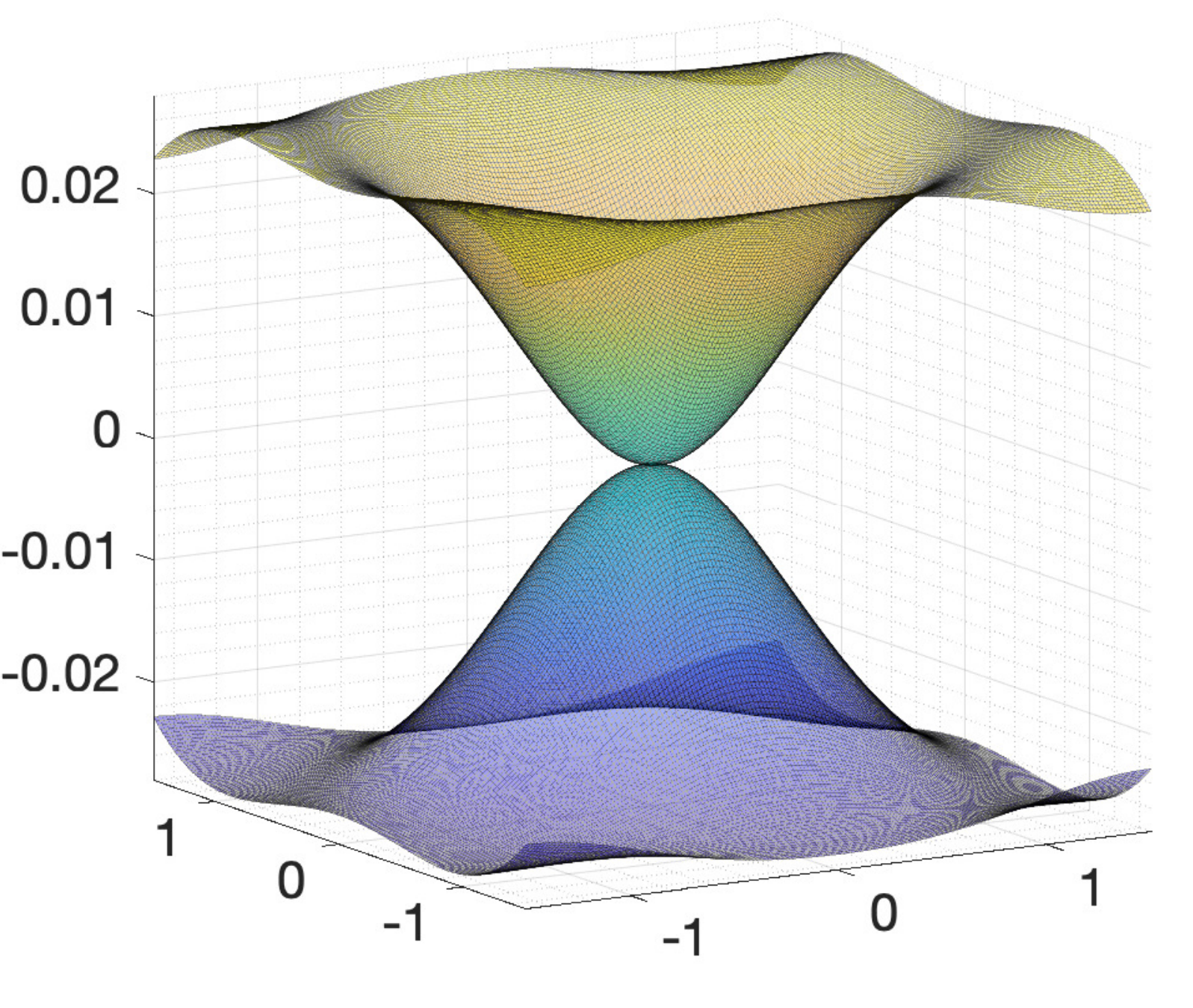}};
     \node at (-3.5+1,-3){$\Im k$};
     \node at (-8+0.8,-2.7){$\Re k$};
     \node at (-9+1.4,0.2){$E$};
       \node at (-3.5-6,-3){$\Im k$};
     \node at (-8-6.3,-2.7){$\Re k$};
     \node at (-16+1,0.2){$E$};
  \end{tikzpicture}}
\caption{\label{f:c2w} 
When $ B $ is real
(in the convention of \eqref{eq:defDB}), two Dirac cones approach $ \Gamma $ point as $ \alpha \to \alpha^* = \underline \alpha + \mathcal O ( B^3 ) $ ($ \underline \alpha $ a simple
real magic parameter) on the line $ \Im k= 0 $ (left). For 
$ \alpha = \alpha^* $, the quasi-momentum $ k $ at which the bifurcation happens are 
the boundary of the Brillouin zone and the $ \Gamma $-point which is shown in the figure (right).
The animation \url{https://math.berkeley.edu/~zworski/Rectangle_1.mp4} shows the motion of 
Dirac points in this case.}
\end{figure}

\section{In-plane magnetic field}
\label{s:inplane}
Adding a constant in-plane magnetic field \cite{kps,wema} with magnetic vector potential $A = z_{\perp}B\times \hat e_{z_{\perp}}$, where $z_{\perp}$ is the coordinate perpendicular to the two-dimensional plane of TBG and $\hat e_{z_{\perp}}$ the unit vector pointing in that direction, to the chiral model of TBG \cite{magic} results for layers at positions $z_{\perp} = \pm 1$,
in the Hamiltonian $H_B(\alpha) $ in \eqref{eq:defHB} build from non-normal operators
\begin{equation}
\label{eq:defDB} D_B ( \alpha ) := D( \alpha ) + \mathcal B , \ \ \ D ( \alpha ) = \begin{pmatrix}
2 D_{\bar z } & \alpha U ( z ) \\
\alpha U ( - z )& 2 D_{\bar z } \end{pmatrix} , \ \ \ 
\mathcal B := \begin{pmatrix} B & \ 0 \\
0 & - B \end{pmatrix} , \end{equation}
where we make the following assumptions on $ U $:
\begin{equation}
\label{eq:defU}
\begin{gathered} 
  U ( z + \gamma ) = e^{ -2  i \langle \gamma , K  \rangle } U ( z ) , \ \ U ( \omega z ) = \omega U(z) , \ \ \overline {U ( \bar z ) } = - U ( - z ) , \ \ \ 
\omega = e^{ 2 \pi i/3},  \ \\ \gamma \in \Lambda := \omega \mathbb Z \oplus \mathbb Z , \ \ 
\omega K \equiv K \not \equiv 0 \!\!\! \mod \Lambda^* , \ \ 
\Lambda^* :=  \frac {4 \pi i}  {\sqrt 3}  \Lambda , \ \ \langle z , w \rangle := \Re ( z \bar w ) .
\end{gathered}
\end{equation}
In this convention the Bistritzer--MacDonald potential used in \cite{magic,beta} corresponds to
\begin{equation}
\label{eq:BMU}
U ( z ) =  - \tfrac{4} 3 \pi i \sum_{ \ell = 0 }^2 \omega^\ell e^{ i \langle z , \omega^\ell K \rangle }, \ \ \ K = \tfrac43 \pi  . 
\end{equation}

For a discussion of a perpendicular constant magnetic field in the chiral model of twisted bilayer graphene we refer to \cite{BKZ}.

\noindent
%\begin{rem}
{\bf Remark.}
We adapt here a more mathematically straightforward convention  of coordinates than that of  \cite{beta,bhz1}
where we followed \cite{magic} (with some, possibly also misguided, small changes; our motivation 
comes from a cleaner agreement with theta function conventions). 
The translation between the two conventions is as follows: the operator considered in \cite{beta}, and rigorously derived in \cite{CGG, Wa22} was
\begin{equation}
\label{eq:oldD}  \begin{gathered}  \widetilde D ( \alpha ) := 
\begin{pmatrix}
2 D_{\bar \zeta } & \alpha U_0 ( \zeta ) \\
\alpha U_0 ( - \zeta )& 2 D_{\bar \zeta } \end{pmatrix} , \ \ \ \overline{ U_0 ( \bar \zeta ) } = U_0 ( \zeta ) , \\
 U_0 \left( \zeta + \tfrac{ 4 \pi i } 3 ( a_1 \omega + a_2 \omega^2  ) \right) = 
\bar \omega^{a_1 + a_2 } U_0 ( \zeta ) , \ \ \  U_0 ( \omega \zeta ) = \omega U_0 ( \zeta). \end{gathered} 
\end{equation}
We then have a (twisted) periodicity with respect to $ \tfrac13 \Gamma$ and periodicity with respect to 
\[   \Gamma := 4 \pi i ( \omega \mathbb Z + \omega^2 \mathbb Z ) = 4 \pi i \Lambda \text{ such that } \Gamma^*:=\frac{1}{\sqrt{3}} (\omega \ZZ \oplus \omega^2 \ZZ) = \frac{\Lambda}{\sqrt{3}} \]
This means that to switch to (twisted) periodicity with respect to $ \Lambda $ we need a
change of variables:
\begin{equation}
\label{eq:z2zeta}     \zeta = \tfrac 4 3 \pi i z , \ \   \tfrac13  \Gamma = \tfrac 4 3 \pi i \Lambda, \ \ 
3 \Gamma^* = ( \tfrac13 \Gamma)^* = \sqrt 3 \Lambda = \frac{ 3 }{ 4 \pi i }  \Lambda^* 
. \end{equation}
Then 
\begin{equation}
\label{eq:unit} \begin{gathered}  
\widetilde D ( \alpha ) = - \frac{ 3}{ 4 \pi i } 
\begin{pmatrix}
2 D_{\bar z  }  & \alpha  U ( z )  \\
\alpha U ( - z ) & 2 D_{\bar z }  \ \end{pmatrix}    , 
\ \ \ U ( z ) := - \tfrac43  \pi i 
U_0 \left(  \tfrac 43 \pi i z  \right).
\end{gathered} \end{equation}
The twisted periodicity condition in \eqref{eq:oldD} corresponds to the condition 
in \eqref{eq:defU} since 
 $ \bar \omega^{a_1 + a_2} = 
e^{ i \langle a_1 \omega + a_2 \omega^2 , K \rangle } $, $ K
= 4 \pi i (- \frac13 - \frac23 \omega )/\sqrt 3 = 4 \pi/3 $. 
See the caption to Figure \ref{f:1} for examples of $ U_0 ( z ) $ in the coordinates of \cite{magic,beta}.

\medskip

The self-adjoint Hamiltonian built from \eqref{eq:oldD} is given by 
\begin{equation} \label{eq:defHB} 
H_B ( \alpha ) = \begin{pmatrix} 0 & D_B ( \alpha )^* \\ D_B ( \alpha) & 0 \end{pmatrix}
\end{equation}
and the Dirac points are given by the spectrum of 
\[  D_B ( \alpha ) : H_0^1 \to L^2_0 , \ \ \ L^2_0 := \{ u \in L^2_{\rm{loc}} ( \mathbb C ; \mathbb C ) :
u ( x + \gamma ) = {\rm{diag}} ( e^{ - i \langle \gamma , K \rangle }  , 
e^{  i \langle \gamma, K \rangle } ) u ( x ) \}, \]
with a similar definition of $ H_0^1 $ (replace $ L^2_{\rm{loc}}$ with $ H^1_{\rm{loc}}$) -- see
\S \ref{s:BF} for a systematic discussion and explanations.

\begin{figure}
{\begin{tikzpicture}
   \node at (0,0) {\includegraphics[width=12cm]{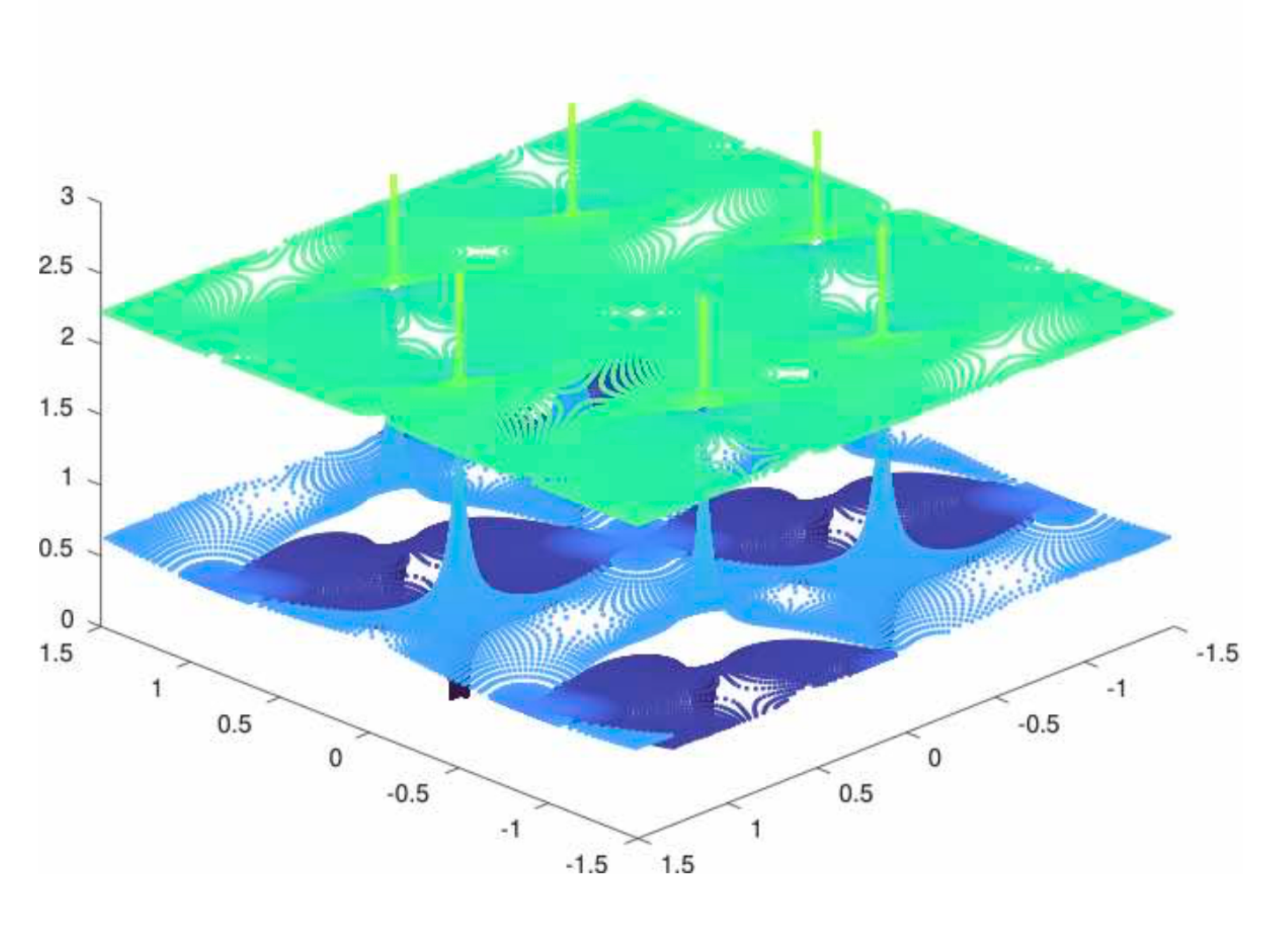}};  
     \node at (-0.6,3.6){$K'$};
     \node at (2.3,2.3){$K'$};
     \node at (-1.6,2){$K'$};
      \node at (1.7,3.3){$K$};
       \node at (0.6,1.7){$K$};
        \node at (-2.3,2.9){$K$};
    \node at (3.1,-3.5){$\Im k$};
   \node at (-3.5,-3.5){$\Re k$};
    \node at (-6,0){$\alpha$};
  \end{tikzpicture}}
  \caption{Dirac point dynamics for $B=0.1 e^{2\pi i \theta}$ with $\theta \in [0,1/2]$. Close to the first two magic angles ($\alpha \approx 0.585,2.221$), the dynamics spreads out in space.}
\end{figure}

We recall (see \S \ref{s:magic}) that there exists a discrete set $ \mathcal A \subset \mathbb C $ 
such that 
\begin{equation}
\label{eq:magic_angles}
    \Spec_{L^2_0 } ( D_0 ( \alpha ) ) = \left\{ \begin{array}{ll} ( K +  \Lambda^*) 
\cup ( - K + \Lambda^* ) & \alpha \notin \mathcal A  \\ 
\ \ \ \ \ \ \ \ \ \ \mathbb C & \alpha \in \mathcal A .  \end{array} \right. 
\end{equation}
The elements of $ \mathcal A $ are reciprocals of {\em magic angles} and the real ones
 are of physical interest. As recalled in Proposition \ref{p:magicD}, elements of 
 $ \mathcal A $ are characterized by the condition that $ \alpha^{-1} \in \Spec_{L^2_0}
 T_k $, where $ \mathbb C \setminus \{ K , - K \} \mapsto  T_k $
 is a (holomorphic) family of compact operators given in \eqref{eq:defTk} (the spectrum is independent
 of $ k $ and so are its algebraic multiplicities). In this paper we will use the following notion of simplicity (see also \S \ref{s:mult}):
 \begin{equation}
 \label{eq:strongm}
 \text{ $ \alpha \in \mathcal A $ is said to be simple } \ \Longleftrightarrow \ 
 \text{ $ 1/\alpha $ is a simple eigenvalue of $ T_k $.} \end{equation}
Here simplicity of an eigenvalue is meant in the algebraic sense.

The first result is a consequence of simple perturbation theory and of 
symmetries of $ D_B ( \alpha )$: 
\begin{theo}
\label{t:symD}
Suppose that $ \Omega \Subset \mathbb C  \setminus \mathcal A $ is an open set. 
Then there exists $ \delta = \delta ( \Omega ) $ such that for $ |B | < \delta  $ there exists
$ \alpha \mapsto k_B ( \alpha)  \in C^\omega ( \Omega  ) $ such that
\[   \Spec_{ L^2_0 } ( D_B ( \alpha ) )  = ( k_B ( \alpha  ) + \Lambda^*  ) \cup 
( - k_B ( \alpha) + \Lambda^* ) , \] 
and $ k_B  ( \alpha  ) = K + \mathcal O ( B ) $. In addition, for 
$ \alpha, B \in \mathbb C $, 
\begin{equation}
\label{eq:symD}
\begin{split} 
&    \Spec_{ L^2_0 } D_{\omega B}  ( \alpha )  = \omega \Spec_{L^2_0 } D_B ( \alpha )  , \\
&  \Spec_{ L^2_0 } D_{ B}  ( - \alpha )  =   \Spec_{L^2_0 } D_B ( \alpha )  = - \Spec_{L^2_0 } D_B ( \alpha ), \\
& \Spec_{L^2_0 } D_{  \bar B } ( \bar \alpha ) =  \overline{ \Spec_{ L^2_0 } D_{   B}  ( \alpha )} . 
\end{split} 
\end{equation}
\end{theo}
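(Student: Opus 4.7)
The plan is to split the proof into two parts: first establish the three symmetries \eqref{eq:symD} by exhibiting explicit (linear or anti-linear) intertwining operators on $L^2_0$, and then use the $\lambda\mapsto-\lambda$ symmetry together with analytic perturbation theory around $B=0$ to produce the holomorphic family $k_B(\alpha)$ and identify the full spectrum.

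For the symmetries, the guiding idea is that each of the four operations $B\mapsto\omega B$, $\alpha\mapsto-\alpha$, $\lambda\mapsto-\lambda$ and $(B,\alpha)\mapsto(\bar B,\bar\alpha)$ should be realized as a conjugation by an operator built from an internal matrix ($I$, $\sigma_y$ or $\sigma_z$), an argument change ($z\mapsto\omega z,\,-z,\,-\bar z$), and possibly complex conjugation. I would try
\[
\mathcal R u(z)=u(\omega z),\quad \mathcal U u(z)=\sigma_z u(z),\quad \mathcal V u(z)=\sigma_y u(-z),\quad \mathcal C u(z)=\sigma_z\overline{u(-\bar z)},
\]
and verify them one at a time. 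The arithmetic conditions in \eqref{eq:defU} (namely $\omega\Lambda=\Lambda$, $\omega K\equiv K$ and $\bar K\equiv K\bmod\Lambda^*$, the last holding because $K=4\pi/3\in\RR$) ensure each intertwiner preserves $L^2_0$; combining $U(\omega z)=\omega U(z)$ and $\overline{U(\bar z)}=-U(-z)$ with chain-rule identities for $2D_{\bar z}$ under $z\mapsto\omega z,-z,-\bar z$ then yields
\begin{gather*}
\omega\mathcal R^{-1}D_B(\alpha)\mathcal R=D_{\omega B}(\alpha),\quad \mathcal U^{-1}D_B(\alpha)\mathcal U=D_B(-\alpha),\\
\mathcal V^{-1}D_B(\alpha)\mathcal V=-D_B(\alpha),\quad \mathcal CD_{\bar B}(\bar\alpha)=D_B(\alpha)\mathcal C,
\end{gather*}
and the identities in \eqref{eq:symD} follow on taking spectra (the last one uses anti-linearity of $\mathcal C$ to flip $\lambda\leftrightarrow\bar\lambda$).

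For the existence of $k_B(\alpha)$, by \eqref{eq:magic_angles} we have $\Spec_{L^2_0}D_0(\alpha)=(K+\Lambda^*)\cup(-K+\Lambda^*)$ for $\alpha\in\Omega$, and $\pm K$ are algebraically simple eigenvalues in a fundamental domain of $\Lambda^*$, uniformly on the compact set $\overline\Omega$ (standard for the chiral model, cf.~\S\ref{s:magic}). Kato's analytic perturbation theory applied to the bounded perturbation $\mathcal B$ then gives, for $|B|<\delta(\Omega)$, a holomorphic $k_B(\alpha)=K+\mathcal O(B)$ with $k_B(\alpha)\in\Spec_{L^2_0}D_B(\alpha)$. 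Multiplication by $e^{i\langle\mu,z\rangle}$ ($\mu\in\Lambda^*$) intertwines $D_B(\alpha)$ with $D_B(\alpha)+\mu$ and preserves $L^2_0$, so the spectrum is $\Lambda^*$-periodic; the $\mathcal V$-symmetry just established then contributes $-k_B(\alpha)$ as well, giving the inclusion $(k_B(\alpha)+\Lambda^*)\cup(-k_B(\alpha)+\Lambda^*)\subseteq\Spec_{L^2_0}D_B(\alpha)$. For the reverse inclusion, a Neumann-series argument shows $(D_0(\alpha)-k)^{-1}$ is uniformly bounded for $\alpha\in\Omega$ and $k$ in a fundamental domain at fixed distance from $\pm K$, so $D_B(\alpha)-k=(D_0(\alpha)-k)+\mathcal B$ remains invertible there for small $|B|$, ruling out spurious eigenvalues.

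The most delicate step will be the anti-linear intertwiner $\mathcal C$: the sign in $\overline{U(\bar z)}=-U(-z)$ from \eqref{eq:defU} forces the insertion of $\sigma_z$ (rather than $I$) in $\mathcal C$, and this choice must mesh correctly both with the chain-rule computation $2D_{\bar z}[\overline{u(-\bar z)}]=\overline{(2D_{\bar w}u)(-\bar z)}$ and with the $L^2_0$-twist condition, whose preservation depends specifically on $\bar K\equiv K\bmod\Lambda^*$ and $\bar\Lambda=\Lambda$. The other two symmetries and the perturbative existence argument are comparatively routine.
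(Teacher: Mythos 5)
Your proposal is correct and follows essentially the same route as the paper: realize each symmetry by an explicit (anti)linear intertwiner on $L^2_0$ (your $\mathcal R$, $\mathcal V=i\mathscr E$ and $\mathcal C$ match the paper's $\mathcal R$, $\mathscr E$ and the composition of $F$ with a diagonal conjugation), and obtain $k_B(\alpha)$ by perturbing the simple eigenvalues $\pm K$ of $D_0(\alpha)$ for $\alpha\in\Omega\Subset\CC\setminus\mathcal A$. The only (harmless) deviations are that you get $\Spec D_B(-\alpha)=\Spec D_B(\alpha)$ by a direct $\sigma_z$-conjugation where the paper composes the two antilinear symmetries $F$ and $\mathscr Q$, and that you spell out the uniform resolvent bound excluding spurious eigenvalues, which the paper leaves implicit.
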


\begin{proof}[Proof of Theorem \ref{t:symD}]
Proposition \ref{p:magicD} shows that for $ \alpha \in \Omega $ the spectrum of
$ D ( \alpha ) $ is given by $ \pm K + \Lambda^* $ and for small
$ B $ we have two eigenvalues for $ D_B ( \alpha ) $.
The structure of $ D ( \alpha ) $ implies that 
\begin{equation}
\label{eq:defEsc}  \mathscr E D ( \alpha ) = - D ( \alpha ) \mathscr E , \ \ \
\mathscr E v ( z) := J v ( - z ) ,  \ \  J := \begin{pmatrix}   0 & -1 \\ 1 & \ 0 \end{pmatrix}\end{equation}
and since $ J \mathcal B = - \mathcal B J $ we also have 
\begin{equation}
\label{eq:reflection}  \mathscr E  (D_B ( \alpha )+k) \mathscr E^* = - (D_B ( \alpha )-k), \end{equation}
that is the spectrum is invariant under reflection $ k \mapsto - k $. 
%The properties follow from the symmetries of $ D ( \alpha ) $ -- see \S\S \ref{s:rots},\ref{s:anti}.

Since $ \mathcal R D ( \alpha ) \mathcal R^* = \omega D ( \alpha ) $, $\mathcal R u ( z ) := u ( \omega z ) $, 
 we have %\red{ $ D_{\bar z } [ u ( \bar \omega z ) ] = \omega [ D_{\bar z } u ] ( \bar \omega z ) $} 
$ \mathcal R D_B ( \alpha )\mathcal R^* = \omega D_{ \bar \omega B } ( \alpha ) $
%\red{ $\Spec D_B ( \alpha ) = \omega \Spec D_{\bar \omega B  } ( \alpha )  $ }
which gives
the first identity in \eqref{eq:symD}.
 We now recall the following antilinear symmetries:
\begin{equation}
\label{eq:antil} 
\begin{gathered} F D  ( \alpha ) F =  D ( - \bar \alpha ) , \ \ \
  F {v} ( z ) := \overline{ {v} ( - \bar z )} , \\
 \mathscr Q D ( \alpha ) \mathscr Q  = D ( - \alpha )^* , \ \ \  Q v ( z ) : = \overline{ v ( - z ) } , \ \ \ 
 \mathscr Q := \begin{pmatrix} Q & 0 \\
 0 & - Q \end{pmatrix} . 
 \end{gathered} 
 \end{equation}
%\[ \begin{split} D ( \alpha ) F v ( z ) & = \begin{pmatrix} 2 D_{\bar z } ( \overline{ v_1( - \bar z ) } )  + \alpha U ( z ) 
%\overline { v_2 ( - \bar z )}  \\
%2 D_{\bar z } ( \overline{ v_2( - \bar z ) } ) + \alpha U (-  z ) 
%\overline { v_1 ( - \bar z )} \end{pmatrix} \\
%& = \begin{pmatrix}  \overline{ ( 2D_{\bar z } v_1 ) ( - \bar z ) ) } + \overline { (-\bar \alpha U ( - \bar z ) v_2 ( - \bar z ) ) } \\
%  \overline{ ( 2D_{\bar z } v_2 ) ( - \bar z ) )} + \overline { (-\bar \alpha U (  \bar z ) v_1 ( - \bar z )) } \end{pmatrix} 
%   =  F D ( - \bar \alpha ) v ( z ) . 
%\end{split}
%\]
%\[ \begin{split} D ( \alpha ) Q v & = \begin{pmatrix} 2 D_{\bar z } ( \overline{ v_1 ( - z ) } ) + 
%\alpha U ( z ) \overline{ v_2 (- z )}  \\
%2 D_{\bar z } ( \overline{ v_2 ( - z ) } ) + 
%\alpha U (-  z ) \overline {v_2 (- z )} \end{pmatrix} \\
%& = \begin{pmatrix}  \overline{ (2 D_{z} v_1 )( - z ) } + \overline{ \bar \alpha (\overline{ U (-\bullet)  }  v_2 ) (-z ) } \\
% \overline{ (2 D_{z} v_2 )( - z ) } + \overline{  \bar \alpha( \overline{U ( \bullet ) }   v_1 ) (-z ) } \end{pmatrix} =
% Q D( \alpha)^* .
%\end{split} \]
Since $ \bullet  \mathcal B = \mathcal B^*  \bullet $, $ \bullet = F, \mathscr Q $, 
we have 
\[ F (  D_{{\bar B}  }   ( - \bar \alpha ) - \bar k ) F=  ( D_B ( \alpha ) - k )   = 
 \mathscr Q (  D_{ B } ( - \alpha )^* - \bar k ) \mathscr Q ,  \ \  \ \mathscr Q^2 = F^2 = I , \]
 which shows that (since the spectrum is invariant under $ k \mapsto - k $), 
\begin{equation}
\label{eq:specs}   \Spec_{L^2_0}  ( D_B ( \alpha ) ) = \overline{ \Spec_{L^2_0} ( D _{ \bar B } ( -\bar \alpha ) )} = 
 \Spec_{L^2_0} (D_{  B } ( - \alpha )) ,  \end{equation}
and that gives the rest of \eqref{eq:symD}.% (We also have
%$ \mathscr A D_B ( \alpha ) = - D_{B}  ( \alpha )^* \mathscr A $ but that only gives 
%$ \Spec_{L^2_0 } D_B ( \alpha ) = - { \Spec_{L^2_0 } D_B ( \alpha ) } $)
\end{proof} 

We now state a result valid near simple $ \underline \alpha \in \mathcal A $. 
\begin{theo}
\label{th:Gamma}
Suppose $ \underline \alpha \in \mathcal A $ is simple and $ g_0 ( \underline{\alpha} ) \neq 0 $
where $ g_0 $ is defined in \eqref{eq:G2g}. 
Then there
exists $ \delta_0 > 0 $  such that for $ 0 < |B | < \delta_0 $ and  $  | \alpha - \underline{\alpha } | <  \delta_0  $,
the spectrum of $ D_B ( \alpha )$ on $ L^2_0 $ is discrete and 
\begin{equation}
\label{eq:count}  \vert \Spec_{L^2_0}(D_B(\alpha)) \cap \CC/\Gamma^* \vert =2  ,
\end{equation} 
where the elements of the spectrum are included according to their (algebraic) multiplicity. 
In addition, for a fixed constant $ a_0 > 0 $ and for every $ \epsilon $ there exists $ \delta $ such that
for $ 0 < |B | < \delta $, $ |\alpha - \underline \alpha | < a_0 \delta |B| $, 
\begin{equation}
\label{eq:nearG}  \Spec_{ L^2_0 } ( D_B ( \alpha ) ) \subset \Lambda^* + D (0 , \epsilon ) ,
\end{equation}
where we recall that elements of $ \Lambda^* $, in particular $ 0 $, correspond to the $ \Gamma $ point.
\end{theo}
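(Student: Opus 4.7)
The plan is to set up a Grushin problem adapted to the simple magic angle $\underline{\alpha}$, following the framework developed in Section \ref{s:proof}, and to extract from it a small effective equation whose zeros control the spectrum of $D_B(\alpha)$ on $L^2_0$. The key input is simplicity of $\underline\alpha$ in the sense of \eqref{eq:strongm}: the generalized kernel of $D(\underline\alpha) - k$ on $L^2_0$ is one-dimensional at each $k$, and combined with the reflection symmetry \eqref{eq:reflection} relating $k$ and $-k$ this produces a rank-two Grushin problem for $D_B(\alpha) - k$ that is uniformly well posed for $(k,\alpha,B)$ in a small neighborhood of $(0,\underline\alpha,0)$. Its Schur complement $E_{-+}(k,\alpha,B)$ is a $2 \times 2$ holomorphic matrix, and the spectrum of $D_B(\alpha)$ on $L^2_0$ modulo $\Lambda^*$ coincides with the zero set of $\det E_{-+}(\cdot,\alpha,B)$.

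Next, I would expand $E_{-+}$ in its three arguments. The characterization \eqref{eq:magic_angles} of magic angles forces $E_{-+}(k,\underline\alpha,0) \equiv 0$, so to leading order
\[
E_{-+}(k,\alpha,B) \;=\; (\alpha - \underline\alpha)\, M_\alpha(k) \;+\; L(k, B) \;+\; O\bigl((\alpha-\underline\alpha)^2 + |B|^2\bigr),
\]
where $L(k,B)$ is real-linear in $B$, and both $M_\alpha$ and $L$ are computed by first-order perturbation theory from inner products against the Bloch eigenfunctions of $D(\underline\alpha)$. The symmetries \eqref{eq:defEsc}--\eqref{eq:antil} fix the algebraic form of these coefficients, and the non-degeneracy hypothesis $g_0(\underline\alpha) \neq 0$ from \eqref{eq:G2g} is precisely the statement that the linear-in-$\alpha$ coefficient does not vanish at $k=0$. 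Consequently $\det E_{-+}$ has a non-degenerate leading behavior, and by the argument principle on a fundamental domain of $\Lambda^*$ the number of zeros of $\det E_{-+}(\cdot,\alpha,B)$, counted with multiplicity, equals that of the leading model, which is two; this yields \eqref{eq:count}.

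For \eqref{eq:nearG} I would balance the sizes: under the scaling $|\alpha - \underline\alpha| \le a_0 \delta |B|$ with $\delta$ small, any solution $k$ of $\det E_{-+}(k,\alpha,B)=0$ is pushed into a disk around $0$ whose radius is controlled by the ratio of the $B$-contribution $L(0,B)$ to the $k$-derivative of the $\alpha$-coefficient $M_\alpha$, both of size $|B|$. Choosing $|B| < \delta$ sufficiently small then fits this disk inside $D(0,\epsilon)$, and hence inside $\Lambda^* + D(0,\epsilon)$. The main technical obstacle is the explicit symmetry and non-degeneracy analysis of $M_\alpha$ and $L$, in particular verifying that $g_0(\underline\alpha) \neq 0$ prevents any accidental higher-order cancellation in $\det E_{-+}$; this is what makes the Dirac-point count exactly two and the localization near the $\Gamma$ point robust.
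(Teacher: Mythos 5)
Your overall strategy (Grushin reduction, expansion of the effective Hamiltonian in $(k,\alpha,B)$, argument principle on a fundamental domain) is indeed the paper's, but two of your key assertions are wrong in a way that breaks the argument. First, the hypothesis $g_0(\underline\alpha)\neq 0$ is \emph{not} about the linear-in-$\alpha$ coefficient at $k=0$: by \eqref{eq:EBpm} and \eqref{eq:G2g}, $g_0$ governs the linear-in-$B$ term, which equals $-B\,c(k)c^*(k)\bigl(g_0\,\theta(z(k))^2/\theta_2(0)^2+\mathcal O(B)\bigr)$ and therefore \emph{vanishes to second order precisely at} $k\in\Lambda^*$ (since $\theta(0)=0$), while being bounded away from zero for $k$ outside $\Lambda^*+D(0,\epsilon)$. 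This theta-function identification is the whole content of the theorem: it is why the zeros are forced into $\Lambda^*+D(0,\epsilon)$ (giving \eqref{eq:nearG}) and why the local count at $k=0$ is exactly $2$, via the winding number $\frac{1}{2\pi i}\int_{\partial D(0,\epsilon)}(\zeta^2+\mathcal O(B))^{-1}(2\zeta+\mathcal O(B))\,d\zeta=2$ coming from the double zero of $\zeta\mapsto\theta(z(\zeta))^2$. Your claim that the ``leading model has two zeros'' is asserted without this mechanism, and your balancing argument for \eqref{eq:nearG} gives no reason for the zeros to concentrate at the $\Gamma$ point rather than anywhere else. (The linear-in-$\alpha$ coefficient is $g_1$ of \eqref{eq:newE-+}, whose nonvanishing is \emph{derived} in the proof of Proposition \ref{p:SB}, not assumed.)

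Second, two structural points. The natural Grushin problem here is rank one, not rank two: simplicity \eqref{eq:strongm} gives $\dim\ker_{H^1_0}(D(\underline\alpha)+k)=1$ for every $k$, and the reflection symmetry \eqref{eq:reflection} relates $k$ with $-k$ but does not enlarge the kernel at a fixed $k$; in the paper $E_{-+}^B(k,\alpha)$ is a scalar, and the multiplicity $2$ comes from the order of its zero in $k$, not from a $2\times2$ determinant (the rank-two problem \eqref{eq:GrH} is for the self-adjoint $H^B_k(\alpha)$, a different operator). Finally, you must prove discreteness of $\Spec_{L^2_0}D_B(\alpha)$ before any counting: at $(\alpha,B)=(\underline\alpha,0)$ the spectrum is all of $\mathbb C$, so one needs $E_{-+}^B(\cdot,\alpha)\not\equiv 0$. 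The paper establishes this by covering the punctured parameter region with the two sets $|\alpha-\underline\alpha|<a_1|B|$ (where the $k$-dependence of the $B$-linear term gives a point of invertibility) and $A_1|B|^2<|\alpha-\underline\alpha|<\delta_2$ (where $E_{-+}^B(0,\alpha)\neq 0$), and then invokes analytic Fredholm theory; your write-up skips this step entirely.
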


\noindent 
{\bf Remarks.}
1. Existence of the first real magic angle $ \underline \alpha \simeq 0.585 $ was
proved by Watson--Luskin \cite{lawa} and its simplicity (including the simplicity as an 
eigenvalue of the operator $ T_k $ defined in \eqref{eq:defTk}) in \cite{bhz1}, with computer assistance in 
both cases. Numerically,
the simplicity is valid at the computed real elements of $ \mathcal A $ for the Bistritzer--MacDonald potential used
in \cite{magic}.

\noindent
2. The constant $ g_0 ( \alpha ) $ can be evaluated numerically (and its non-vanishing
for the first magic angles could be established via a computer
assisted proof) and here are the results for the (numerically) simple magic angles 
for the potential $ U_{BM}$ in Figure \ref{f:1}:
\medskip

\begin{table}[h!]
\begin{center}
\begin{tabular}{||c|c c c c c c c||} 
 \hline
 Magic angle $\underline{\alpha}$ & 0.585 & 2.221 & 3.751 & 5.276 & 6.794 & 8.312 &9.829  \\ [0.5ex] 
 \hline
 $ \vert g_0(\underline{\alpha})\vert \simeq $ & 7e-02%20
  & 5%21
  e-04 & 7%36
  e-04 & 2%10
  e-05 & 3%71
  e-05 &  9%23
  e-07 & 6%18
  e-06\\
  \hline
$\vert g_1(\underline \alpha) \vert \simeq$  &   1.3035 & 0.2881 & 0.0880 & 0.0252 & 0.0068 & 0.0017 & 1.7326e-04\\ \hline
\end{tabular}
\end{center}
\caption{Values of $g_0(\underline \alpha)$ and $g_1(\underline \alpha)$ at first magic angles.}
\label{table:2}
\end{table}

\medskip

\noindent
3. The combination of Theorems \ref{t:symD} and \ref{th:Gamma} shows that
for any $ U \Subset (\mathbb C \setminus \mathcal A ) \cup \{ \underline \alpha \} $
(with $ \underline \alpha $ satisfying the assumptions of Theorem \ref{th:Gamma})
there exists $ \delta = \delta ( U ) $ such that $ 0 < |B| < \delta $, the spectrum of
$ D_B ( \alpha ) $ is discrete and $ \vert \Spec_{L^2_0}(D_B(\alpha)) \cap \CC/\Gamma^* \vert =2 $.

\medskip

From the symmetries in \eqref{eq:symD} we conclude that for special values of $ 
\theta = 0, \pm \frac23 $ the spectrum of $ D_B ( \alpha ) $ has a particularly nice
structure as $ \alpha $ varies. We state the result for $ \theta = 0 $, as
we can use the first identity in \eqref{eq:symD} to obtain the other two.
\begin{theo}
\label{th:lin}
\noindent
For $ 0 < B \ll 1 $, 
\begin{equation}
\label{eq:rect1} 
\Spec_{L^2_0} ( D_B  (\alpha ) ) 
\subset  \mathscr R :=   { 2 \pi }   ( i \mathbb R +    \mathbb Z ) \cup \tfrac{2 \pi }{ \sqrt 3} (  \mathbb R +  
  i  \mathbb Z ), \ \ \ \alpha \in \mathbb R \setminus \mathcal A . \end{equation}
Moreover, if the assumptions of Theorem \ref{th:Gamma} are satisfied 
at $ \underline \alpha \in \mathbb R $
then for every $ \epsilon >0 $ there are $\delta_0, \delta_1  >0$ such that 
\begin{equation}
\label{eq:bigcup}\mathscr R \setminus \bigcup_{k \in \mathcal K_0}  B (k  , \epsilon ) \subset \bigcup_{ \underline \alpha - \delta_1 < \alpha < \underline \alpha + \delta_1 } \Spec_{L^2_0} ( D_B  (\alpha ) )  \subset \mathscr R, \ \ \  0 < B < \delta_0. \end{equation}
In addition, for every $k \in \mathscr R \setminus \bigcup_{k \in \mathcal K_0}  B (k  , \epsilon ) $ there is a unique $\alpha \in (\underline \alpha - \delta_1 < \alpha < \underline \alpha + \delta_1)$ such that $k \in \Spec_{L^2_0} ( D_B  (\alpha ) ).$ 
\end{theo}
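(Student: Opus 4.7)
The statement naturally splits into the containment \eqref{eq:rect1} (a symmetry statement valid for all $\alpha \in \mathbb R \setminus \mathcal A$) and the sweeping plus uniqueness claims \eqref{eq:bigcup} (a local statement near $\underline\alpha$). I plan to settle the containment by pure symmetry, and reduce the sweep to the linear-motion analysis set up in \S\ref{s:bif}.

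For the containment I will combine the second and third identities of \eqref{eq:symD}. Since $B \in \mathbb R$ we have $\bar B = B$, and for $\alpha \in \mathbb R$ we have $\bar \alpha = \alpha$, so $\Spec_{L^2_0} D_B(\alpha)$ is invariant under both $k \mapsto -k$ and $k \mapsto \bar k$. By Theorem \ref{t:symD} this spectrum equals $(k_B(\alpha) + \Lambda^*) \cup (-k_B(\alpha) + \Lambda^*)$, and a direct calculation starting from $\Lambda^* = \tfrac{4\pi i}{\sqrt 3}(\omega \mathbb Z \oplus \mathbb Z)$ shows $\overline{\Lambda^*} = \Lambda^*$. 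Conjugation must therefore permute the two cosets, forcing either $2i\,\Im k_B(\alpha) \in \Lambda^*$ or $2\,\Re k_B(\alpha) \in \Lambda^*$. A short computation gives $\Lambda^* \cap i\mathbb R = \tfrac{4\pi i}{\sqrt 3}\mathbb Z$ and $\Lambda^* \cap \mathbb R = 4\pi\mathbb Z$, so the two alternatives translate to $\Im k_B(\alpha) \in \tfrac{2\pi}{\sqrt 3}\mathbb Z$ or $\Re k_B(\alpha) \in 2\pi\mathbb Z$, respectively. Either way $k_B(\alpha) \in \mathscr R$, and since $\mathscr R + \Lambda^* = \mathscr R$ the entire orbit $\pm k_B(\alpha) + \Lambda^*$ lies in $\mathscr R$.

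For the sweeping statement I would then combine two endpoint descriptions. By Theorem \ref{th:Gamma}, as $|\alpha - \underline\alpha| \to 0$ the Dirac point $k_B(\alpha)$ enters any prescribed $\epsilon$-neighborhood of $\Gamma \in \Lambda^*$, while Theorem \ref{t:symD} (applied on the complement of a small disc around $\underline\alpha$) gives $k_B(\alpha) = K + \mathcal O(B)$ for $\alpha$ at bounded distance from $\mathcal A$. Continuity of $\alpha \mapsto k_B(\alpha)$ in $\mathscr R$ then produces a continuous path in the one-dimensional grid connecting a neighborhood of $K$ (or $K'$) to a neighborhood of $\Gamma$ and back, as $\alpha$ crosses $\underline\alpha$. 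The linear-motion result of \S\ref{s:bif}, obtained from the Grushin reduction at $\underline\alpha$ under the nondegeneracy $g_0(\underline\alpha) \neq 0$, shows that on each line segment of $\mathscr R$ visited by this path the dependence on $\alpha$ is affine with nonzero slope. This gives injectivity on each segment and, via the intermediate value theorem, coverage of every point of that segment outside $\epsilon$-neighborhoods of its endpoints (which lie in $\mathcal K_0$).

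The main obstacle is upgrading this to the full sweep of $\mathscr R$: one has to show the parameterized path together with its reflection $-k_B(\alpha) + \Lambda^*$ does not remain confined to a single component of $\mathscr R \setminus \mathcal K_0$, but actually exhausts the grid minus $\epsilon$-neighborhoods of $\mathcal K_0$. This should come out of the explicit $2\times 2$ effective operator constructed in \S\ref{s:bif}: its two eigenvalue branches (representing $\pm k_B$) cover complementary edges of the planar graph $\mathscr R/\Lambda^*$, and the linear-in-$\alpha$ speed (controlled by $g_0$) guarantees that for $\delta_1$ chosen after $\epsilon$, every edge is traversed past its endpoint neighborhoods as $\alpha$ ranges over $(\underline\alpha - \delta_1, \underline\alpha + \delta_1)$. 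Combining this coverage with the segmentwise injectivity yields the uniqueness assertion.
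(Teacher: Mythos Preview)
Your containment argument for \eqref{eq:rect1} is correct and essentially the paper's own. For the sweep \eqref{eq:bigcup}, however, there is a genuine gap and one incorrect claim. The assertion that along each segment of $\mathscr R$ the Dirac point depends on $\alpha$ ``affinely with nonzero slope'' is wrong: \eqref{eq:morela} and \eqref{eq:morela1} give $k^2 B \sim c(\underline\alpha - \alpha)$ near $k=0$, i.e.\ a square-root dependence of $k$ on $\alpha-\alpha^*$, and Proposition~\ref{p:SB1} gives the same quadratic behaviour at the vertex $\underline k_1$. More seriously, your acknowledged ``main obstacle''---showing the path actually exhausts every edge of $\mathscr R/\Lambda^*$---is not resolved; the sentence about complementary edges of a $2\times2$ effective operator is not an argument, and the path passes through bifurcation points (Propositions \ref{p:SB} and \ref{p:SB1}) where the two Dirac points merge and split orthogonally, so a continuity/intermediate-value argument on a single branch $k_B(\alpha)$ does not go through without substantial extra work.

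The paper bypasses all of this by \emph{reversing the roles of $k$ and $\alpha$}. Rather than following $\alpha\mapsto k_B(\alpha)$, it fixes $k \in \mathscr K := \mathscr R \setminus \bigcup_{k'\in\mathcal K_0} B(k',\epsilon)$ and perturbs the compact operator $T_k(B)$ of \eqref{eq:BS}. Since $1/\underline\alpha$ is a \emph{simple} eigenvalue of $T_k(0)$ for every $k\notin\mathcal K_0$ and $\mathscr K/\Lambda^*$ is compact, uniform continuity of the spectrum in $B$ gives, for all $0<B<\delta_0$, a unique eigenvalue $\lambda(k,B)$ of $T_k(B)$ within $\delta_1$ of $1/\underline\alpha$. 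The symmetries in \eqref{eq:propla} (namely $\lambda(k,B) = \overline{\lambda(\bar k,\bar B)} = \lambda(-k,B)$ together with $\Lambda^*$-periodicity) force $\lambda(k,B)\in\mathbb R$ whenever $k\in\mathscr R$ and $B\in\mathbb R$. Setting $\alpha = 1/\lambda(k,B)$ then yields both coverage and uniqueness simultaneously, with no path-tracking, no edge-by-edge topology, and no bifurcation analysis.
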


\noindent
{\bf Remarks.} 1. A more precise statement about the behaviour at $ \mathscr R $ is given in  
Propositions \ref{p:SB} and \ref{p:SB1} -- the implicit formulas for $ \lambda = 1/\alpha $ 
in terms of $ k $ and $ B $ describe a bifurcation phenomenon. {In particular, when 
$ B $ is real, the bifurcation of the eigenvalues of $ D_B ( \alpha )$ at $ 0 $ (at the 
specific value of $ \alpha $) is given by \eqref{eq:morela1}. For the bifurcation at the 
vertices of the boundary of the Brillouin zone, see \eqref{eq:morela3}.}
\begin{figure}
{\begin{tikzpicture}
   \node at (0,0) {\includegraphics[width=7.6cm]{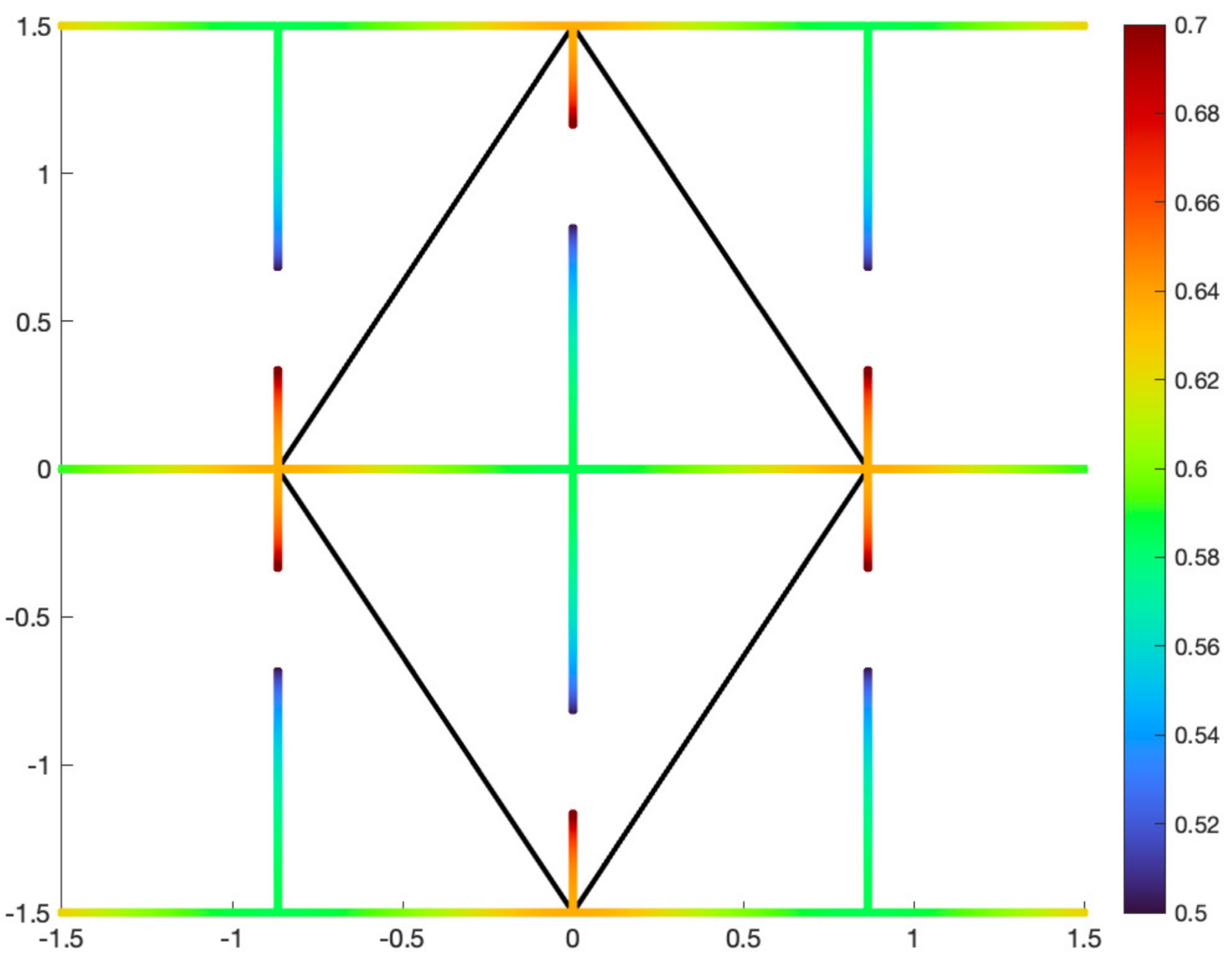}
};  

\node at (7.0,0){\includegraphics[width=7.6cm]{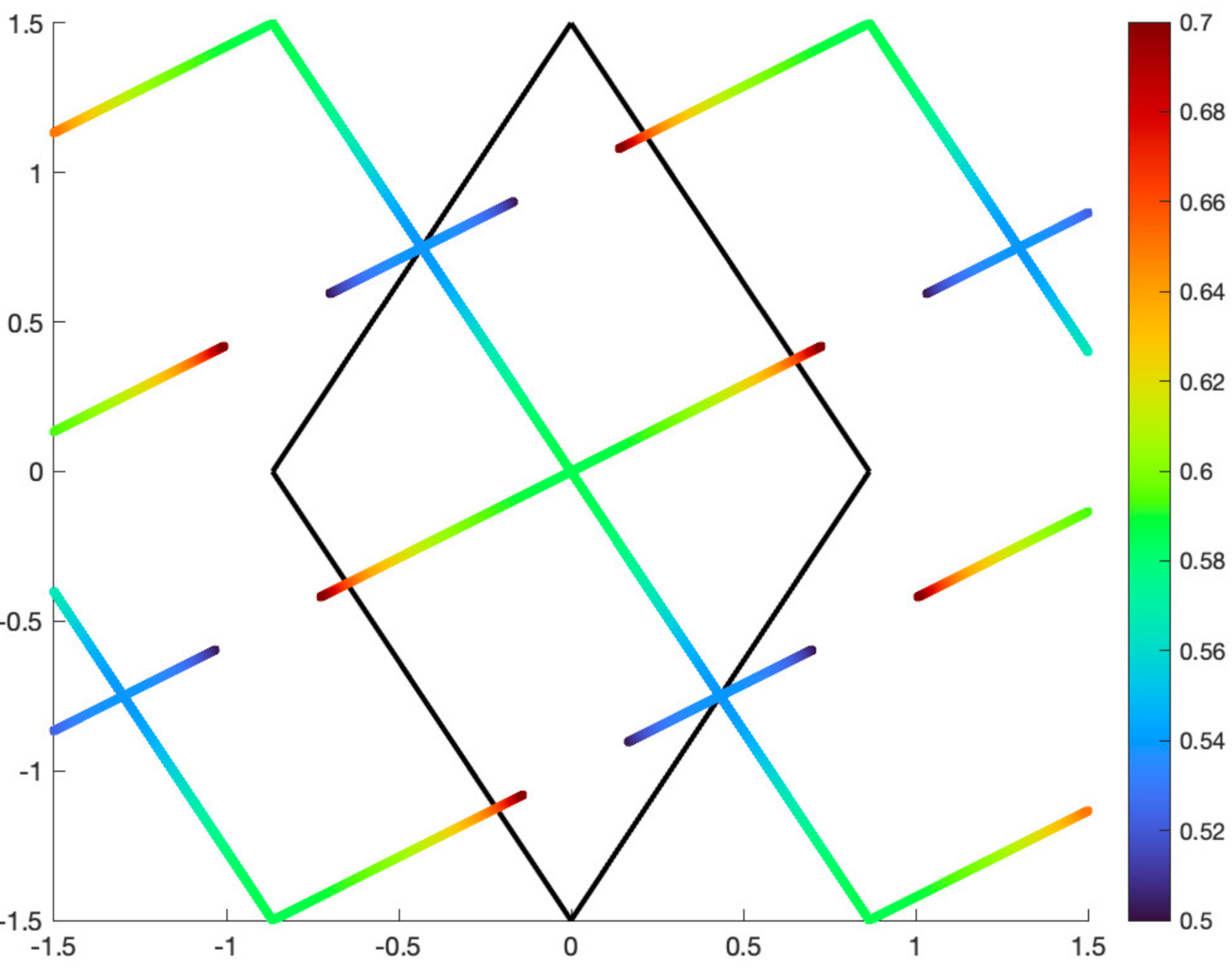}};
\node at (-1+0.5,-2.7) {$\Re k$};
\node at (-3.5,0) {$\Im k$};
\node at (7,-2.7) {$\Re k$};
\node at (2.8,-2.7) {$\alpha$};
\node at (7.6+2.2,-2.7) {$\alpha$};
\node at (-0.4,1.8){$K$};
\node at (-0.4,-1.8){$K'$};
\node at (-0.4,0){$\Gamma$};
\node at (7.4,0){$\Gamma$};
\node at (7.4,1.8){$K$};
\node at (7.4,-1.8){$K'$};
  \end{tikzpicture}} 
\caption{\label{fig:bifurc}Dirac point trajectory for $B=0.1$ (left) and $B=0.1 \omega $ (right). The bifurcation happens at $\Gamma$ and one additional point (modulo $\Lambda^*$) in each figure, respectively. The colors indicate the position of the Dirac cones for given values of $\alpha.$ 
The exclusion of $ K$ and $K' $ points in the statement of Theorem \ref{th:lin} seems to be
a technical issue, as shown in \url{https://math.berkeley.edu/~zworski/Rectangle_2.mp4} 
(for the case of the figure on the right).}
\end{figure}

\noindent
2. The inclusion \eqref{eq:rect1} means that the spectrum lies on a grid of straight lines parallel to the $ x $ and $ y $ axes -- see \url{https://math.berkeley.edu/~zworski/Rectangle_1.mp4}. 
To obtain the sets of other rectangles we
use the the first identity in \eqref{eq:symD}, that is take $ B = \omega b$, $ b > 0 $.

\section{Review of magic angle theory}

We start with a general discussion of operators arising in chiral TBG models.
\subsection{Bloch--Floquet theory}
\label{s:BF}
We recall that 
\[ \Lambda := \mathbb Z \oplus \omega \mathbb Z , \ \ \ \omega := e^{ 2 \pi i/3 } ,
\ \ \   \omega \Lambda = \Lambda , \ \ \ 
  \Lambda^* = \frac {4 \pi i}  {\sqrt 3}  \Lambda . \] 
(The dual basis of $ \{ 1, \omega \} $ is given by $ \{ - 4 \pi i \omega / \sqrt 3, 4 \pi i / \sqrt 3 \} $.) 

We then consider a generalization of \eqref{eq:defDB}:
\[ D ( \alpha ) := 2 D_{\bar z} + \alpha V ( z ) : H^1_{\rm{loc} } ( \mathbb C ; \mathbb C^n ) 
\to L^2_{\rm{loc}} ( \mathbb C ; \mathbb C^n ) , \ \ \ 
H( \alpha ) := \begin{pmatrix} \ \ 0 & D ( \alpha)^* \\
D ( \alpha ) & 0 \end{pmatrix} ,  \]
where $ V ( z ) := C^\infty ( \mathbb C ; \mathbb C^n \otimes \mathbb C^n ) $.
Let $ \rho : \Lambda \to U (n ) $ be a unitary representation and assume that
\begin{equation} 
\label{eq:defV}   V ( z + \gamma ) = \rho( \gamma)^{-1} V ( z ) \rho ( \gamma ) .
\end{equation}
We note that without loss of generality (amounting to a basis change on $ \mathbb C^n$)
 we can assume that
\begin{equation}
\label{eq:rho}  \rho( \gamma ) = {\rm{diag}} \, \left[ ( \chi_{k_j } ( \gamma ) )_{j=1}^n \right] , \ \ \
k_j \in \mathbb C/ \Lambda^* , \ \ \ \ \chi_k ( \gamma ) := \exp ( i \langle \gamma , k \rangle ) . 
\end{equation}

If in the corresponding basis,  $ V ( z ) = ( V_{ij} ( z ) )_{ 0 \leq i, j \leq k } $,  then \eqref{eq:defV} means that
\begin{equation}
\label{eq:Vij}   V_{ij} ( z + \gamma ) = \exp ( i \langle \gamma, k_j - k_i \rangle )  V_{ij} ( z ) . 
\end{equation}
If we define 
\[   \rho ( z ) := {\rm{diag}} \, \left[ ( e^{ i \langle z, k_j \rangle }  )_{j=1}^n \right], \]
then 
\[   V_\rho ( z + \gamma ) = V_\rho ( z ) , \ \ \  V_\rho ( z ) :=  \rho ( z ) V ( z ) \rho ( z )^{-1}  , \]
and
\begin{equation}
\label{eq:Prho}  \rho ( z )  D ( \alpha ) \rho ( z )^{-1}   = D_\rho ( \alpha )  , \ \ \ 
D_\rho ( \alpha ) :=  {\rm{diag}} \, \left[ ( 2D_{\bar z } - k_j )_{j=1}^n \right] + V_\rho ( z ) , \end{equation}
which is a periodic operator. 
In view of this standard Bloch-Floquet theory applies, which can be presented using 
modified translations:
\[ \mathscr L_\gamma u ( z ) := \rho ( \gamma ) u ( z + \gamma ) , \ \ \ 
\mathscr L_\gamma : \mathscr S' ( \mathbb C, \mathbb C^n )  \to \mathscr S' ( \mathbb C, 
\mathbb C^n ) . \]
We have
\[   \mathscr L_\gamma D ( \alpha ) = D ( \alpha ) \mathscr L_{\gamma }. \]
Thus, we can define a generalized Bloch transform
\[ \begin{gathered}    \mathcal  B u ( z, k ) := \sum_{ \gamma \in \Lambda } e^{ i \langle  z + \gamma, k \rangle}   \mathscr L_\gamma  u ( z ) ,  
 \ \ \  \mathcal B u ( z , k + p ) =  
e^{ i \langle z , p \rangle}\mathcal B u  ( z, k )  , \ \ p \in \Lambda^*, \ \ \ u \in \mathscr S ( \mathbb C ) , 
\\
\mathscr L_{\alpha  } \mathcal  B u ( \bullet, k ) = 
\sum_{\gamma} e^{ i \langle z + \alpha + \gamma , k \rangle } \mathscr L_{\alpha + \gamma } 
u ( z ) =  \mathcal B u ( \bullet, k ) , \ \ \alpha  \in \Lambda
\end{gathered} \]
such that (extending the actions of $ \mathscr L_\gamma $ and $ \mathcal B $
to $ \mathbb C^n \times \mathbb C^n $-valued functions diagonally) 
\begin{equation}
\label{eq:conj}    \begin{gathered} \mathcal B D ( \alpha ) = ( D ( \alpha ) - k ) \mathcal B , \ \ \
D ( \alpha ) - k = e^{ i \langle z, k \rangle } D ( \alpha )  e^{ - i \langle z, k \rangle }, \\
\mathcal B H ( \alpha ) = H_k ( \alpha ) \mathcal B , \ \ \ 
H_k ( \alpha )  := e^{ i \langle z, k \rangle } H( \alpha ) e^{ - i \langle z, k \rangle } = 
\begin{pmatrix} \ \ 0 & D ( \alpha)^* - \bar k \\
D ( \alpha ) -k  & 0 \end{pmatrix}  . \end{gathered}  \end{equation}
We check that
\[  \begin{split} \int_{ \mathbb C/\Lambda } \int_{\mathbb C/\Lambda^* } 
| \mathcal B u ( z, k ) |^2 dm ( z ) d m ( k ) & = 
%\sum_{ \gamma, \gamma' } e^{ i \langle \gamma - \gamma', k \rangle } 
%\rho ( \gamma - \gamma' ) u ( z - \gamma ) \overline{ u ( z - \gamma' ) } dm ( k ) dm ( z )  \\
%& = | \Lambda / \Gamma^* |  \sum_{\gamma} \int_{ \mathbb C/\Lambda } | u ( z- \gamma ) |^2 d m ( z ) 
%\ & = 
| \mathbb C / \Lambda^* |\int_{ \mathbb C } | u ( z ) |^2 dm ( z ) , 
\end{split} 
\]
and that
\[ \mathcal C v (z ) := | \mathbb C / \Lambda^* |^{-1} \int_{ \mathbb C/ \Lambda^* } v ( z, k ) e^{ - i \langle z, k \rangle} dm ( k ) \]
is the inverse of $ \mathcal B $. 
%In fact, 
%\[  \begin{split} \int_{\mathbb  C/\Lambda^*} \mathcal  B (z , k )  e^{ - i \langle z, k \rangle}  dm ( k ) & =
%\sum_\gamma \int_{\mathbb C /\Lambda^* } e^{ i \langle \gamma , k \rangle } dm ( k ) 
%\mathscr L_\gamma u ( z ) = \sum_\gamma \delta_{\gamma 0 } \mathscr L_\gamma u ( z ) \\
%& = 
%| \Lambda / \Gamma^* | u ( z ) .  \end{split} \]
We now define 
\[   H^s _0  = H^s _0 ( \mathbb C; \mathbb C^k ) := \{ u \in H^s_{\rm{loc} } ( \mathbb C ; \mathbb C^k ) : 
\mathscr L_{\gamma } u = u , \ \gamma \in \Lambda \}  , \ \ 
 L^2 _0  :=  H^0 _0 , \ \ \   k = n, 2n, \]
We have a unitary operator identifying $ L^2 _0 $ with $ L^2 ( \mathbb C / \Lambda ) $
\begin{equation}
\label{eq:symmetry_good}
   \mathcal U_0 u ( z ) := \rho ( z ) u ( z) , \ \ \  \mathcal U_0 : L^2 _0 \to L^2 ( \mathbb C/\Lambda; \mathbb C^n  ) , \ \ \  \mathcal U _0 D(\alpha ) \mathcal U _0^* = D_\rho ( \alpha ) ,  
   \end{equation}
where used the notation of \eqref{eq:Prho}. 

In view of this,
$ \Spec_{ L^2 _0 } ( H_k ( \alpha ) ) $ (with the domain given by $H^1 _0 $) 
is discrete and 
\[ \Spec_{ L^2 ( \mathbb C ; \mathbb C^{2n} ) } ( H ( \alpha ) ) = \bigcup_{ k \in \mathbb C/\Lambda^* } 
\Spec_{ L^2 _0 } H_k ( \alpha ) . \]

Since for $ p \in \Lambda^* $, 
\begin{equation}
\label{eq:deft}  \tau( p ) : L^2_0 \to L^2_0 , \ \ [ \tau ( p ) u ] ( z ) := e^{ i \langle z , p \rangle } u ( z ) , \ \ \
\tau ( p )^{-1} =\tau ( p )^* ,\end{equation}
and 
\[ \tau ( p )^* D ( \alpha ) \tau ( p ) = D ( \alpha ) + p , \]
we have 
\begin{equation}
\label{eq:persp} \Spec_{L^2  _0 }  D ( \alpha ) = \Spec_{L^2  _0 } D ( \alpha ) + \Lambda^* . 
\end{equation}
Finally, we use \eqref{eq:Prho} and 
$ \Spec_{ L^2 ( \mathbb C/\Lambda; \mathbb C ) } ( 2 D_{\bar z } ) = \Lambda^* $ (with simple eigenvalues)
to see that (for $ \rho $ given by \eqref{eq:rho})
\begin{equation}
\label{eq:specD}  \Spec_{ L^2 _0 } ( 2 D_{\bar z }  ) = \bigsqcup_{j=1}^n ( \Lambda^* - k_j ) , \ \ \ 
\text{ Domain of $  2 D_{\bar z }  = H^1 _0 . $} \end{equation}
\subsection{Rotational symmetries}
\label{s:rots} 

We now introduce
\[   \Omega u ( z ) := u ( \omega  z) , \ \  u \in \mathscr S' ( \mathbb C; \mathbb C^n ) , \]
and in addition to \eqref{eq:defV} assume that
\begin{equation}
\label{eq:Vom}    V ( \omega z ) = \omega V ( z ) .\end{equation}
(We do not have many options here as $ \Omega D_{\bar z } = \omega D_{\bar z} \Omega $). Then
\[  \Omega D ( \alpha ) = \omega D ( \alpha ) \Omega , \]
and
\[  \mathscr C H ( \alpha ) = H( \alpha ) \mathscr C , \ \ \ \mathscr C := \begin{pmatrix}
\Omega &  \ 0 \\
 0 & \bar \omega \Omega \end{pmatrix} : \mathscr S' ( \mathbb C ; \mathbb C^n \times
\mathbb C^n ) \to \mathscr S' ( \mathbb C ; \mathbb C^n \times
\mathbb C^n ) . \]
We have the following commutation relation 
\[ \begin{split}    \mathscr L_{\gamma} \Omega u ( z )  & = \rho ( \gamma ) u ( \omega ( z + \gamma ) ) 
= \rho (  \gamma - \omega \gamma   )  \rho ( \omega \gamma ) u ( \omega z + \omega \gamma )
\\
& = \rho ( \gamma - \omega \gamma ) \Omega  \mathscr L_{\omega \gamma } u  ( z ) . 
\end{split} \]
A natural case to consider is given by 
\begin{equation}
\label{eq:rhom}  \rho ( \gamma ) = \rho ( \omega \gamma ) , \ \ \forall \, \gamma \in \Lambda,
\end{equation}
which implies that
\begin{equation}
\label{eq:cube}  \rho ( \gamma )^3 = \rho ( \gamma + \omega \gamma + \omega^2 \gamma  ) = \rho ( 0 )   = I_{\mathbb C^n}  . \end{equation}
In the notation of \eqref{eq:rho}, condition \eqref{eq:rhom} means that
\begin{equation}
\label{eq:defK}   \bar \omega k_j \equiv k_j \!\!\!\!\! \mod \Lambda^* \ \Longleftrightarrow \
k_j  \in \mathcal K :=  
\frac{4 \pi i }{\sqrt 3 } \left( \left\{ 0 , \pm ( \tfrac 1 3 + \tfrac 2 3 \omega ) \right\} + \Lambda \right)  . \end{equation}
We see that $ \mathcal K / \Lambda^* $ is the subgroup of fixed points of multiplication
$ \omega : \mathbb C/\Lambda^* \to \mathbb C/\Lambda^*  $ and it is 
isomorphic to $ \mathbb Z_3 $.

Since \eqref{eq:rhom} implies that 
\[  \mathscr L_\gamma \Omega = \Omega \mathscr L_{\omega \gamma} , \ \ \
\mathscr L_\gamma \mathscr C = \mathscr C \mathscr L_{\omega \gamma}, 
\ \ \  \mathscr C \mathscr L_\gamma  =  \mathscr L_{\bar \omega \gamma} \mathscr C, 
\]
we follow \cite[\S 2.1]{beta} combine the two actions into a group of unitary action which
commute with $ H ( \alpha ) $:
\begin{equation}
\label{eq:defG}  
\begin{gathered}  G := \Lambda \rtimes \ZZ_3 , \ \ 
  \ZZ_3 \ni \ell : \gamma \to \bar \omega^\ell  \gamma , \ \ \ ( \gamma , \ell ) \cdot ( \gamma' , \ell' ) = 
( \gamma + \bar \omega^\ell \gamma' , \ell + \ell' ) ,
\\  ( \gamma, \ell ) \cdot u = 
\mathscr L_{\gamma } \mathscr C^\ell u, \ \ \ u \in L^2_{\rm{loc}} ( \mathbb C ; \mathbb C^n \times 
\mathbb C^n ) . 
\end{gathered}
\end{equation}
By taking a quotient by $ 3 \Lambda $ we obtain a finite group  which acts 
unitarily on $ L^2 ( \CC/ 3\Lambda ) $, and that action commutes with $ H ( \alpha ) $:
\begin{equation}
\label{eq:defG3}
G_3 :=  G/3 \Lambda  = \Lambda /3\Lambda \rtimes \ZZ_3 \simeq \ZZ_3^2 \rtimes \ZZ_3. 
\end{equation}

By restriction to the first two components, $ G $ and $ G_3 $ act on 
$\CC^n  $-valued function and 
use the same notation for those actions.

The key fact
(hence the name \emph{chiral model}) is that
\begin{equation}
\label{eq:defW}
\begin{gathered}   H (\alpha) = - \mathscr W H (\alpha) \mathscr W, \ \ \ \mathscr W := \begin{pmatrix}
1 & 0 \\
0 & -1 \end{pmatrix} : \mathbb C^n \times \mathbb C^n \to  \mathbb C^n \times \mathbb C^n , 
\\  
\ \ \ \mathscr W  \mathscr C  = 
 \mathscr C  \mathscr W , 
 %\ \ \mathscr Q \mathscr W = \mathscr W \mathscr Q,
 \  
 \ \ \mathscr L_{\gamma } \mathscr W = \mathscr W \mathscr L_{\gamma }.
\end{gathered}
\end{equation}

%In general, in a basis for which \eqref{eq:rho} holds, the situation is more complicated
%and we have a larger group of symmetries:
%\[   \mathscr L_\gamma \Omega =  \varphi ( \gamma) \Omega \mathscr L_{\omega \gamma } , \ \ \
%\varphi (\gamma ) = {\rm diag} \, \left[ ( e^{ i \langle \gamma , k_j - \bar \omega k_j \rangle } )_{j=1}^n
%\right] . \]

\subsection{Protected states} 
We now make the assumption \eqref{eq:rhom} and consider the question of protected 
states. We are looking for the set $ \mathcal K_0 \subset \mathbb C $ such that 
\begin{equation}
\label{eq:K0}  \forall \, \alpha \in \mathbb C, \ k \in \mathcal K_0,  \ \  0 \in \Spec_{ L^2 _0 }  H_k ( \alpha ). 
\end{equation}
This condition is equivalent to 
\[  k \in \Spec_{L^2  _0 } D ( \alpha ) 
\ \Longleftrightarrow \ k \in \Spec_{L^2 ( \mathbb C/\Lambda; \mathbb C^n) } D _0 ( \alpha )
, \]
where we used the notation of \eqref{eq:Prho}.  Putting $ \alpha = 0 $ we see
that $ \mathcal K_0  \subset \mathcal K  $.

The following simple lemma is used a lot. To formulate it we introduce the following spaces:
\begin{equation}
\label{eq:defHk}  
H^s_{k}  := \{ u \in H^s ( \mathbb C/ 3 \Lambda ; \mathbb C^2 \times \mathbb C^2 ): 
\mathscr L_\gamma u = e^{ i \langle k , \gamma \rangle } 
u \}, \ \ 
  k \in \mathcal K/ \Lambda^*\simeq \mathbb Z^3, \ \ 
p \in \mathbb Z^3 , \end{equation} 
(with the corresponding definition of $ L^2_{k} $). 

\begin{lemm}
\label{l:ev2ker}
Suppose that $ k, k'  \in \mathcal K $ and $ \tau ( k ) $ is defined as in \eqref{eq:deft}. Then
in the notation of \eqref{eq:defHk}, 
$   \tau ( k ) : H^s_{ k' } \to H^s_{k' + k } $ and 
\begin{equation}
\label{eq:ev2ker}  
\begin{gathered}   \tau ( k ) :  \ker_{ H^1_{0} } ( D ( \alpha ) + k ) = \ker_{ H^1_{ k} } D ( \alpha ) , \\
\tau (k ) :  \ker_{ H^1_ 0  } H_{-k} ( \alpha ) = \ker_{ H^1_{k} } H ( \alpha ) . 
\end{gathered}
\end{equation}
\end{lemm}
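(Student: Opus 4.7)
The plan is to reduce everything to the conjugation identity already recorded in \eqref{eq:conj}, namely
\[ \tau(-k)\,D(\alpha)\,\tau(k) \;=\; D(\alpha)+k, \]
together with the direct computation of how $\mathscr L_\gamma$ interacts with $\tau(k)$. The proof splits naturally into three short steps, of which only the first requires a small check.

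First I would verify the mapping property $\tau(k):H^s_{k'}\to H^s_{k+k'}$. For $u\in H^s_{k'}$ (so $\rho(\gamma)u(z+\gamma)=e^{i\langle k',\gamma\rangle}u(z)$), one computes
\[ \mathscr L_\gamma(\tau(k)u)(z) = \rho(\gamma)\,e^{i\langle z+\gamma,k\rangle}u(z+\gamma) = e^{i\langle \gamma,k\rangle}\,e^{i\langle z,k\rangle}\rho(\gamma)u(z+\gamma) = e^{i\langle k+k',\gamma\rangle}\,\tau(k)u(z), \]
so $\tau(k)u$ transforms with the right character. Well-definedness on $\mathbb C/3\Lambda$ requires $e^{3i\langle\gamma,k\rangle}=1$ for all $\gamma\in\Lambda$, i.e.\ $3k\in\Lambda^*$, which is exactly the content of the description \eqref{eq:defK} of $\mathcal K$. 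Hence $\tau(k)$ maps $H^s_{k'}$ into $H^s_{k+k'}$, and its inverse is $\tau(-k)$, so the map is a linear isomorphism.

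Second, for the $D(\alpha)$-statement, I would apply \eqref{eq:conj}, which gives $D(\alpha)\tau(k)=\tau(k)(D(\alpha)+k)$. If $u\in\ker_{H^1_0}(D(\alpha)+k)$, then $D(\alpha)\tau(k)u=\tau(k)(D(\alpha)+k)u=0$, and by Step 1 we have $\tau(k)u\in H^1_k$; conversely, any $v\in\ker_{H^1_k}D(\alpha)$ yields $u=\tau(-k)v\in H^1_0$ with $(D(\alpha)+k)u=\tau(-k)D(\alpha)v=0$. This proves the first equality in \eqref{eq:ev2ker}.

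Third, for the $H(\alpha)$-statement, I would note that $\tau(k)^*=\tau(-k)$ (since $\langle z,k\rangle$ is real), so taking the adjoint of the intertwining identity yields $\tau(-k)D(\alpha)^*\tau(k)=D(\alpha)^*+\bar k$. Writing
\[ H_{-k}(\alpha) = \begin{pmatrix} 0 & D(\alpha)^*+\bar k \\ D(\alpha)+k & 0 \end{pmatrix}, \qquad \mathrm{diag}(\tau(k),\tau(k))\,H_{-k}(\alpha) = H(\alpha)\,\mathrm{diag}(\tau(k),\tau(k)), \]
the second identity in \eqref{eq:ev2ker} follows exactly as in Step 2, applied componentwise. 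There is no real obstacle here; the only point that needs attention is the $3\Lambda$-periodicity of $\tau(k)u$, which is precisely why the hypothesis $k\in\mathcal K$ (equivalently $3k\in\Lambda^*$) is imposed.
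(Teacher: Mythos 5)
Your proof is correct and follows essentially the same route as the paper: the same direct computation of $\mathscr L_\gamma(\tau(k)u)$ for the mapping property, and the same conjugation identity $D(\alpha)+k=\tau(-k)D(\alpha)\tau(k)$ from \eqref{eq:conj} to transport kernels (the paper leaves the $H(\alpha)$ case implicit, which you spell out via adjoints). One small slip: $k\in\mathcal K$ is not \emph{equivalent} to $3k\in\Lambda^*$ (the latter condition describes $\tfrac13\Lambda^*$, nine classes mod $\Lambda^*$, whereas $\mathcal K/\Lambda^*$ has three), but the implication you actually use, namely $k\in\mathcal K\Rightarrow 3k\in\Lambda^*$, is correct.
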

\begin{proof}
We have $ \tau ( k ) = e^{ i \langle k , z \rangle } $ (as a multiplication operator) and for 
$ u \in H^s_{k'} $, 
\[ \mathscr L_\gamma ( \tau ( k ) u ) ( z ) = e^{ i \langle k , z +\gamma \rangle } \mathscr L_\gamma u ( z ) 
= e^{ i \langle k +k' , \gamma \rangle } \tau ( k ) u ( z ) , \]
which proves the mapping property of $ \tau ( k ) $. 
Also, 
$ D ( \alpha) w = e^{ i \langle z , k \rangle} ( D  ( \alpha ) + k ) ( 
e^{- i \langle z, k \rangle } w ) $. Hence if  $  (D ( \alpha ) + k )  u = 0 $ and $ 
\mathscr L _\gamma u = u $ then
$ w := e^{  i \langle z, k \rangle } u \in H^1 ( \mathbb C/3 \Lambda; \mathbb C^{2n} ) $, 
$ D ( \alpha ) w = 0 $, and $ \mathscr L_\gamma w = \mathscr L_\gamma ( e^{  i \langle z, k \rangle } u)
= e^{  i  \langle z + \gamma , k \rangle } \mathscr L_\gamma u = e^{ i \langle \gamma , k \rangle } w $,
that is $ w \in H_k^1 $. 
\end{proof}

We are interested in the case of $ n = 2 $ and obtain the following reinterpretation of 
earlier protected states statements -- see \cite{magic}.
\begin{prop}
\label{p:prot}
If $ n = 2 $ (in the notation of \eqref{eq:rho} and \eqref{eq:K0})
and $ k_1 \not \equiv k_2 \!\!\! \mod \Lambda^*  $, 
$ k_j \in \mathcal K $,  then $ \mathcal K_0 = \{- k_1,- k_2 \} + \Lambda^* $.
\end{prop}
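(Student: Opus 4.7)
I would prove the two inclusions of $\mathcal K_0 = \{-k_1,-k_2\}+\Lambda^*$ separately.

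For the inclusion $\mathcal K_0 \subset \{-k_1,-k_2\}+\Lambda^*$, I would specialize the defining condition \eqref{eq:K0} to $\alpha=0$: then \eqref{eq:specD} gives $\Spec_{L^2_0}(2D_{\bar z}) = (\Lambda^*-k_1)\sqcup(\Lambda^*-k_2)$, so any $k\in\mathcal K_0$ must lie in this spectrum. Combined with the already-noted inclusion $\mathcal K_0\subset\mathcal K$ and the hypothesis $k_1\not\equiv k_2\pmod{\Lambda^*}$, which places $-k_1$ and $-k_2$ into distinct classes of the three-element group $\mathcal K/\Lambda^*\simeq\mathbb Z_3$, the inclusion follows.

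For the reverse inclusion, I would show $-k_j\in\Spec_{L^2_0}D(\alpha)$ for every $\alpha\in\mathbb C$ and $j\in\{1,2\}$. Lemma \ref{l:ev2ker}, applied with $k=k_j$, reduces this to $\ker_{H^1_{k_j}}D(\alpha)\neq\{0\}$ for every $\alpha$. My plan is to exploit the $C_3$-rotational symmetry: since $k_j\in\mathcal K$, the operator $\Omega u(z):=u(\omega z)$ preserves both $H^1_{k_j}$ and $L^2_{k_j}$, and hypothesis \eqref{eq:Vom} yields $\Omega D(\alpha)=\omega D(\alpha)\Omega$. Decomposing into the three $\Omega$-isotypic components
\[ H^1_{k_j}=\bigoplus_{\ell=0}^{2} H^1_{k_j,\ell}, \qquad L^2_{k_j}=\bigoplus_{\ell=0}^{2} L^2_{k_j,\ell}, \]
the intertwining splits $D(\alpha)$ into a family of Fredholm maps $D_\ell(\alpha):H^1_{k_j,\ell}\to L^2_{k_j,\ell+1}$ (indices $\bmod 3$) whose Fredholm indices are independent of $\alpha\in\mathbb C$ by homotopy invariance.

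The decisive step is the computation at $\alpha=0$, where $D(0)=2D_{\bar z}$. The Bloch--Floquet diagonalization \eqref{eq:Prho} combined with the translation $\tau(k_j)$ from Lemma \ref{l:ev2ker} identifies $\ker(2D_{\bar z}|_{H^1_{k_j}})$ with the one-dimensional line spanned by the constant section $e_j$ (the $j$-th standard basis vector of $\mathbb C^n$ viewed as a constant on $\mathbb C/3\Lambda$). A parallel analysis for the formal adjoint $2D_z$ shows that $\mathrm{coker}(2D_{\bar z}|_{H^1_{k_j}})$ is also one-dimensional and spanned by the same constant. Since $e_j$ is manifestly $\Omega$-fixed, the kernel lies in $H^1_{k_j,0}$ and the cokernel in $L^2_{k_j,0}$, yielding indices $(\mathrm{ind}\,D_0,\mathrm{ind}\,D_1,\mathrm{ind}\,D_2)=(1,0,-1)$ at $\alpha=0$. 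The persistent value $\mathrm{ind}\,D_0(\alpha)=1$ then forces $\dim\ker D_0(\alpha)\geq 1$ for every $\alpha$, which is the desired non-triviality.

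The main obstacle is the cokernel bookkeeping in the $\Omega$-graded decomposition: one must verify that the single generator of $\mathrm{coker}(2D_{\bar z}|_{H^1_{k_j}})$ is $\Omega$-invariant, so that the one-dimensional defect is absorbed entirely by the $\ell=2$ component (where $\ell+1=0$) and leaves the $\ell=0$ component deficit-free. This reduces, after conjugating away the quasi-periodicity via $\tau(k_j)$ and passing to the diagonalized form in \eqref{eq:Prho}, to the observation that the only $\Lambda$-periodic elements of $\ker 2D_z$ in the relevant slots are constants, together with the fact that $k_1\not\equiv k_2\pmod{\Lambda^*}$ excludes any cross-contribution from the $i\neq j$ components. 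Everything else -- Fredholmness on the compact space $\mathbb C/3\Lambda$ and homotopy invariance of the index -- is standard.
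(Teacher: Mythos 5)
Your argument is correct, and for the substantive (reverse) inclusion it takes a genuinely different route from the paper. The paper passes to the self-adjoint Hamiltonian $H(\alpha)$, decomposes $\ker_{H^1(\mathbb C/3\Lambda;\mathbb C^4)}H(\alpha)$ into the $G_3$-representation spaces $L^2_{k,p}$ of \eqref{eq:defHkp}, observes via the chiral symmetry \eqref{eq:defW} that the spectrum of $H(\alpha)$ on each $L^2_{k_j,p}$ is symmetric about $0$ with a one-dimensional kernel at $\alpha=0$ spanned by a coordinate vector $\mathbf e_{j+2p}$, and concludes $\dim\ker_{L^2_{k_j,p}}H(\alpha)\geq 1$ by continuity of eigenvalues (a simple zero eigenvalue cannot leave the origin while the spectrum stays even). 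You instead stay with the non-self-adjoint $D(\alpha)$ and run a Fredholm index computation in the $\mathbb Z_3$-grading induced by $\Omega$: the intertwining $\Omega D(\alpha)=\omega D(\alpha)\Omega$ shifts the isotypic index by one, the Liouville argument at $\alpha=0$ places both the one-dimensional kernel and the one-dimensional cokernel on the $\Omega$-fixed constants $\mathbf e_j$, and homotopy invariance gives $\operatorname{ind}D_0(\alpha)=1$, hence a nontrivial kernel for all $\alpha$; Lemma \ref{l:ev2ker} then converts this into $-k_j\in\Spec_{L^2_0}D(\alpha)$. The two arguments are morally the same topological-protection mechanism, but yours replaces the parity/spectral-flow step by an honest integer index, which is arguably more robust (it needs neither self-adjointness of $H(\alpha)$ nor the eigenvalue-continuity step) and localizes the protection in a single graded block; the paper's version has the advantage of setting up exactly the representation-theoretic bookkeeping ($L^2_{k,p}$, evenness of the spectrum on each block) that is reused later, e.g.\ in the proof of Proposition \ref{p:mult1}. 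Your treatment of the forward inclusion (specializing \eqref{eq:K0} to $\alpha=0$ and invoking \eqref{eq:specD}) matches what the paper does in the discussion preceding the proposition, and is complete as stated.
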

\begin{proof} 
We use \eqref{eq:ev2ker} and decompose $ \ker_{H^1 ( \mathbb C / 3 \Lambda; \mathbb C^{4} )}
H ( \alpha) $ into representations of $ G_3 $ given by \eqref{eq:defG}. 
From \eqref{eq:defW} we see that the spectrum of $ H( \alpha ) $ restricted to a representation of
$ G_3 $ is symmetric with respect to the origin. If (see \cite[\S 2.2]{beta} for a review
of representations of $ G_3 $)%; we use a slightly different convention here),
\begin{equation}
\label{eq:defHkp}  H^s_{ { k , p } } := \{ u 
\in  H^s ( \mathbb C/ 3 \Lambda ; \mathbb C^2 \times \mathbb C^2 ): 
\mathscr L_\gamma \mathscr C^\ell  u = e^{ i \langle k , \gamma \rangle } 
\bar  \omega^{\ell p} u \},  
\end{equation}
$ k \in \mathcal K/ \Lambda^*\simeq \mathbb Z^3$, $ 
p \in \mathbb Z^3 $,  (with the corresponding definition of $ L^2_{ {k,p} } $)
then the constant
functions (given by the standard basis vectors in $ \mathbb C^4 $) satisfy
\[  \mathbf e_1 \in H^1_{ {k_1, 0} }, \ \  \mathbf e_2 \in H^1_{{k_2, 0 } }, \ \ 
\mathbf e_3 \in H^1_{{k_1, 1} } , \ \  \mathbf e_4 \in H^1_{{k_2, 1 } }, \]
and since $ k_1 \not \equiv k_2 \!\! \mod \Lambda^* $, all these spaces are different. 
The spectrum of $ H ( \alpha ) |_{  L^2_{ {k,p} } } $ is even (see \eqref{eq:defW})
and $  \ker_{ H^1_{  { k_j, p }}} H( 0 ) = \mathbb C \mathbf e_{j+2p} $, $j = 1,2$, 
$ p = 0,1 $. Continuity of eigenvalues shows that
\begin{equation}
\label{eq:dimker}     \dim \ker_{ L^2_{ k_j, p } } H( \alpha ) \geq 1, \ \ \ \alpha \in \mathbb C , \ \ \ 
j=1,2 ,  \ \ p = 0,1 . \end{equation}
which in view of Lemma \ref{l:ev2ker} concludes the proof.
\end{proof} 

\noindent
{\bf Remark.} Under the assumptions of Proposition \ref{p:prot} the
corresponding $ - k_1 ,- k_2 \in \mathbb C/ \Lambda^* $ are called the $ K $ and $ K' $ points
in the physics literature. The remaining element of $ \mathcal K / \Lambda^* $ is 
called the $ \Gamma $ point. 

Existence of protected shows that we have a natural labelling for the eigenvalues of 
$ H( k ) $ on $ L^2 _0 $:
    \begin{equation}
\label{eq:eigs} 
\begin{gathered} \Spec_{ L^2 _0 } ( H( k ) ) = 
 \{ E_{ j } ( \alpha, k ) \}_{ j \in  \mathbb Z^* } ,  \ \ \  E_{j } ( \alpha, k ) 
= - E_{-j} ( \alpha , k ) , \\ 0 \leq E_1 ( \alpha,  k ) \leq E_2 ( \alpha, k ) \leq \cdots , \ \ \  E_{\pm 1} ( \alpha, -k_1 ) =  E_{\pm 1} ( \alpha, -k_2  ) = 0 . 
\end{gathered} \end{equation}
where the eigenvalues are included according to their multiplicities (and $ \mathbb Z^* := 
\mathbb Z \setminus \{ 0 \} $).

\subsection{Magic angles}
\label{s:magic}

We recall the main result of \cite{beta}, the spectral characterization of
{\em magic angles}. See also proof of \cite[Proposition 2.2]{bhz2}. 

\begin{prop}
\label{p:magicD}
Suppose that $ n = 2 $ and that the condition \eqref{eq:rhom} holds. 
Then, in the notation of Proposition \ref{p:prot} there exists a discrete set $ \mathcal A$ such that
\begin{equation}
\label{eq:magic} 
\Spec_{ L^2 _0 } D ( \alpha ) = \left\{ \begin{array}{ll}  \mathcal K_0  & \alpha \notin 
\mathcal A , \\
 \mathbb C & \alpha \in \mathcal A . \end{array} \right.
\end{equation} 
Moreover, 
\begin{equation}
\label{eq:specc}  \alpha \in \mathcal A \ \Longleftrightarrow \ \exists \, k \notin \mathcal K_0, \ 
\alpha^{-1} \in \Spec_{ L^2 _0 } T_k \ \Longleftrightarrow \forall \, k \notin \mathcal K_0, \ 
\alpha^{-1} \in \Spec_{ L^2 _0 } T_k, 
\end{equation}
where $ T_k $ is a compact operator given by 
\begin{equation}
\label{eq:defTk}   T_k := R ( k )  V ( z ) : L^2 _0 \to L^2 _0 , \ \ \ R( k ) :=  ( 2 D_{\bar z } - k )^{-1}  
\end{equation}
\end{prop}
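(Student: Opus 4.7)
The plan is to reduce the spectral analysis of $D(\alpha)$ on $L^2_0$ to the study of the compact operator family $T_k$ via a resolvent factorization, then invoke analytic Fredholm theory together with a theta-function construction that transports eigenfunctions between quasi-momenta.

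First, combining \eqref{eq:specD} with Proposition \ref{p:prot} identifies $\mathcal K_0$ with $\Spec_{L^2_0}(2D_{\bar z})$, so for $k \in \mathbb C \setminus \mathcal K_0$ the resolvent $R(k) : L^2_0 \to H^1_0$ is bounded; via the unitary $\mathcal U_0$ of \eqref{eq:symmetry_good}, $H^1_0 \hookrightarrow L^2_0$ is compact, making $T_k = R(k) V(z)$ compact on $L^2_0$ and holomorphic in $k$ on $\mathbb C \setminus \mathcal K_0$. The factorization $D(\alpha) - k = (2D_{\bar z} - k)(I + \alpha T_k)$ then yields, for $k \notin \mathcal K_0$,
\[
k \in \Spec_{L^2_0} D(\alpha) \iff -\alpha^{-1} \in \Spec_{L^2_0} T_k,
\]
matching the second equivalence of \eqref{eq:specc} up to the symmetry $\Spec T_k = -\Spec T_k$ inherited from the chiral structure (compare \eqref{eq:specs}).

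The heart of the argument is to prove that $\Spec_{L^2_0} T_k$ is independent of $k \in \mathbb C \setminus \mathcal K_0$. Analytic Fredholm theory applied to the holomorphic family $k \mapsto I + \alpha T_k$ on the connected set $\mathbb C \setminus \mathcal K_0$ gives at once that the non-invertibility set is either discrete or all of $\mathbb C \setminus \mathcal K_0$, but the stronger all-or-nothing dichotomy needs a constructive transfer mechanism. Starting from any nonzero $u \in \ker_{H^1_0}(D(\alpha) - k_0)$ with $k_0 \notin \mathcal K_0$, one produces an element of $\ker_{H^1_0}(D(\alpha) - k)$ for every other $k \notin \mathcal K_0$ by multiplying the components of $u$ by appropriate ratios of Jacobi theta functions on $\mathbb C/\Lambda$, which shift the quasi-momentum while preserving the twisted periodicities defining $L^2_0$. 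I expect this theta-function construction — whose principal subtlety is to arrange the cancellation of poles against the (forced) vanishing locus of $u$ together with the correct transformation law under $\Lambda$ — to be the main technical obstacle, and it is essentially the content of \cite[\S 2]{beta}.

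Finally, with $k$-independence in hand I would fix any $k_0 \notin \mathcal K_0$ and define $\mathcal A := \{\alpha \in \mathbb C : \alpha^{-1} \in \Spec_{L^2_0} T_{k_0}\}$; by analytic Fredholm theory applied to $\alpha \mapsto I + \alpha T_{k_0}$ on $\mathbb C$, which is invertible at $\alpha = 0$, this $\mathcal A$ is a discrete subset of $\mathbb C$. The dichotomy \eqref{eq:magic} then follows: for $\alpha \notin \mathcal A$, $I + \alpha T_k$ is invertible for every $k \notin \mathcal K_0$, so $\Spec_{L^2_0} D(\alpha) \subset \mathcal K_0$, with equality from Proposition \ref{p:prot}; for $\alpha \in \mathcal A$, $k$-independence forces $\ker(D(\alpha) - k) \neq \{0\}$ for every $k \in \mathbb C \setminus \mathcal K_0$, and combined with $\mathcal K_0 \subset \Spec_{L^2_0} D(\alpha)$ this yields $\Spec_{L^2_0} D(\alpha) = \mathbb C$.
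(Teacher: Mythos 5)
Your outline is essentially the intended argument: the paper itself gives no proof of Proposition \ref{p:magicD}, deferring entirely to \cite{beta} and \cite[Proposition 2.2]{bhz2}, and the machinery it does develop (the Fredholm discussion in \S \ref{s:mult} and, above all, Lemma \ref{l:theta}) is exactly your factorization $D(\alpha)-k=(2D_{\bar z}-k)(I+\alpha T_k)$ combined with the theta-function transfer $u\mapsto F_{k'}(\cdot-z(k''))u$ between quasi-momenta, and your compactness and discreteness arguments are correct. Two remarks. The sign: your factorization yields $-1/\alpha\in\Spec T_k$, and reconciling this with \eqref{eq:specc} does require that $\Spec T_k$ be even; this holds because $V$ may be taken off-diagonal (see \S \ref{s:anti}), so that conjugation by $\mathrm{diag}(1,-1)$ on $\mathbb C^2$ sends $D(\alpha)$ to $D(-\alpha)$ --- you flag this correctly and it is harmless. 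More importantly, the step you describe as ``arranging the cancellation of poles against the (forced) vanishing locus of $u$'' is not a technical refinement of the transfer mechanism but the crux of the whole theorem: $F_{k'}(z-z(k''))$ has a pole at $z(k'')$ coming from $\theta(z)$ in the denominator of \eqref{eq:defFk}, so \eqref{eq:ker2ker} applies only once one knows $u(z(k''))=0$, and a kernel element has no a priori reason to vanish anywhere. In \cite{beta} this vanishing is extracted by pairing the hypothesized kernel element at some $k_*\notin\mathcal K_0$ with a protected state via a Wronskian $u_1v_2-u_2v_1$, which (using the off-diagonal form of $V$) solves a $\bar\partial$-equation with twisted periodicity and is therefore either identically zero or has zeros at prescribed points; either alternative produces the common zero that starts the theta-function cascade. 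Since you explicitly name this as the main obstacle and cite where it is carried out, your proposal is at the same level of completeness as the paper's own (absent) proof; but without that Wronskian argument the all-or-nothing dichotomy in \eqref{eq:magic} and the $k$-independence in \eqref{eq:specc} are not actually established.
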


\subsection{Antilinear symmetry} 
\label{s:anti}

We will make the following assumption:
\begin{equation}
\label{eq:defAsc}  \mathscr A D ( \alpha )  = -   D ( \alpha )^* \mathscr A , \ \ 
\mathscr A :=  \begin{pmatrix} \ \   0 & \Gamma  \\ - \Gamma & 0  \end{pmatrix}, \ \ 
\Gamma v ( z ) = \overline{ v ( z ) } .
 \end{equation}
 %\red{ (To be commented out until \eqref{eq:mapA}?) 
 %What are the mapping properties of $ \mathscr A$? Suppose that
 %$ u \in L^2_{k,p} $, $ k \in \mathcal K $ and $ p \in \mathbb Z_3 $:
 %\[     u_j ( z + \gamma ) = e^{ i \langle \gamma, k - k_j \rangle } 
 %u_1 ( z ) , \ \  u_j ( \omega z ) \in \overline \omega^p u_j ( z ) . \]
% We have $ ( \mathscr A u)_1 ( z ) = \overline{ u_2 ( z ) } $ and 
 %$ ( \mathscr A u)_2( z ) = - \overline{ u_1 ( z ) } $. Hence
% \[ \begin{split}  ( \mathcal L_\gamma  \mathscr A u )_1 ( z ) & := 
% e^{ i \langle \gamma , k_1 \rangle } ( \mathscr A u )_1 ( z + \gamma ) = 
 %e^{ i \langle \gamma , k_1 \rangle  } \overline{ u_2 ( z + \gamma ) } = 
 %e^{ i \langle \gamma , k_1 + k_2 - k \rangle } \overline {u_2 ( z ) } \\
 %& = 
 %e^{ i \langle \gamma , k_1 + k_2 - k \rangle } ( \mathscr A u)_1 ( z ) ,\\
 % ( \mathcal L_\gamma  \mathscr A u )_2 ( z ) & := 
 %e^{ i \langle \gamma , k_2 \rangle } ( \mathscr A u )_1 ( z + \gamma ) = 
 %- e^{ i \langle \gamma , k_2 \rangle  } \overline{ u_1 ( z + \gamma ) } = 
%-  e^{ i \langle \gamma , k_1 + k_2 - k \rangle } \overline {u_1 ( z ) } \\
% & = 
 %e^{ i \langle \gamma , k_1 + k_2 - k \rangle } ( \mathscr A u)_2 ( z )  ,
% \\
 %\mathscr A u ( \omega z ) & = \omega^p \mathscr A u ( z) .
 %\end{split} \]
 A calculation based on the definition of $ \mathscr L_\gamma $ gives
  \begin{equation}
 \label{eq:mapA} \mathscr A : L^2_{ k, p } \to L^2_{ - k + k_1 + k_2 ,-  p } , \ \ \ 
 k \in \mathcal K , \ \ p \in \mathbb Z_3. 
  \end{equation}
 In particular if (as we assume) $ k_1 \not \equiv k_2 \!\! \mod \Lambda^* $ and
 $ k_0 \notin \{ k_1, k_2 \} + \Lambda^* $, then 
 $ -k_0 + k_1 +k_2 %\equiv 3k_0 -k_0 + k_1 +k_2 \equiv k_0 + (k_0+k_1+k_2) 
 \equiv  k_0 \!\! \mod \Lambda^* $, and consequently 
 \begin{equation}
 \label{eq:mapA1} \mathscr A : L^2_{ k_0, p } \to L^2_{ k_0 ,-  p } , \ \ \ 
 p \in \mathbb Z_3. 
  \end{equation}

Since (we put $ \alpha = 1 $ to streamline notation; that amounts to absorbing 
$ \alpha $ into $ V $)
\[  \mathscr A \begin{pmatrix} V_{11} & 0 \\
0 & V_{22} \end{pmatrix} 
=  - \begin{pmatrix} - \bar V_{22}  & \ 0 \\ 0 & - \bar V_{11}    \end{pmatrix}\mathscr A , \]
for \eqref{eq:defAsc} to hold we need $ V_{11} = - V_{22} =: W_1 $. From \eqref{eq:Vij} we see
that $ W_1 $ is $ \Lambda$-periodic and there exists $ \Lambda$-periodic $ W_0 $ such that
\begin{gather*}  
\begin{pmatrix} 2D_{\bar z } + W_1 & 0 \\
0 & 2D_{\bar z } - W_1 \end{pmatrix} = 
\begin{pmatrix} e^{W_0 ( z ) } & 0 \\ 0 & e^{-W_0 ( z ) } \end{pmatrix}
\begin{pmatrix} 2D_{\bar z }  & 0 \\
0 & 2D_{\bar z } \end{pmatrix}
\begin{pmatrix} e^{- W_0 ( z ) } & 0 \\ 0 & e^{ W_0 ( z )} \end{pmatrix} , \\
2 D_{\bar z } W_0 = W_1 , \ \ \
W_0 ( \omega z ) = W_0 ( z ) .
%- v_0 , \ \  v_0 = |\mathbb C/ \Lambda|^{-1} \int_{ \mathbb C/\Lambda } 
%W_1 ( z ) dm ( z ) 
 \end{gather*}
(From \eqref{eq:Vom} we see that $ W_1 ( \omega z ) = \omega W_1 ( z ) $ and 
hence the integral of $ W_1 $ over $ \mathbb C/\Lambda $ is equal to $ 0 $; this shows
that we can find $ W_0 $, which is unique up to an additive constant.) We conclude that
if we insist on \eqref{eq:defAsc} then we can, without loss of generality assume that
\begin{equation}
\label{eq:genV}  
\begin{gathered}   V ( z ) = \begin{pmatrix}  \ 0 & V_{12} ( z) \\
V_{21} ( z ) & \ 0 \end{pmatrix}, \\ V_{ i j } ( z + \gamma ) = e^{ i \langle k_j - k_i , \gamma \rangle } V( z ) , \ \ 
k_\ell \in \mathcal K, \ \ k_1 \neq k_2 , \ \ V_{ij } ( \omega z ) = \omega V_{ij} ( z ) .
\end{gathered}
\end{equation}

To verify the latter we check that, with $ w = (w_1 , w_2 )  $,  
\[ \begin{split}   \begin{pmatrix} 2 D_{\bar z} & V_{12} \\
V_{22} & 2 D_{\bar z }  \end{pmatrix} \mathscr A  w & = \begin{pmatrix} 2 D_{\bar z } \Gamma w_2 - 
 V_{12} \Gamma w_1 \\ 
- 2 D_{\bar z } \Gamma w_1 +  V_{21} \Gamma w_2  \end{pmatrix} 
= \begin{pmatrix} \Gamma \left( - 2 D_{ z } w_2 -  \bar V_{12} w_1\right) \\ 
\Gamma \left(  2 D_{ z } w_1 +   \bar V_{21}  w_2 \right)  \end{pmatrix} 
\\ & =- \begin{pmatrix} \ \ 0 &  \Gamma \\
 - \Gamma &  0 \end{pmatrix} \begin{pmatrix} 2 D_{ z } w_1 +  \bar V_{21}  w_2  \\
2 D_{ z } w_2 +   \bar V_{12} w_1 \end{pmatrix}  = 
-  \mathscr A   \begin{pmatrix} 2 D_{\bar z} & V_{12} \\
V_{22} & 2 D_{\bar z }  \end{pmatrix}^* w .
\end{split} 
\]

\noindent
{\bf Remarks.} 1. The antilinear symmetry is closely related to the $ C_{2z} T $ symmetry 
in the physics literature. 

\noindent
2. In the case when $ V_{21} ( z ) = V_{12} (- z ) $, 
we have another antilinear symmetry:
\begin{equation}
\label{eq:H2Q}  
\begin{gathered}
Q v ( z ) := - \mathscr A \mathscr E v ( z ) = \overline{ v ( - z ) } , \ \ \ 
Q D( \alpha ) Q =  D (\alpha )^* .
 \end{gathered}  \end{equation}
 The mapping property is simpler than \eqref{eq:mapA}: $ Q : L^2_{ k, p } ( \mathbb C/\Lambda ; \mathbb C^2 ) \to 
 L^2_{ k, -p } ( \mathbb C/\Lambda ; \mathbb C^2 ) $.

\renewcommand\thefootnote{\dag}%   

\subsection{Theta functions}
\label{s:theta} 
We now review properties of theta functions. To simplify notation we put
$ \theta ( z ) :=  \theta_{1} ( z | \omega ) := - \theta_{\frac12,\frac12} ( z | \omega ) $, 
and recall that 
\begin{equation}
\label{eq:theta}
\begin{gathered} 
\theta ( z )
= - \sum_{ n \in \mathbb Z } \exp ( \pi i (n+\tfrac12) ^2 \omega+ 2 \pi i ( n + \tfrac12 ) (z + \tfrac
12 )  ) , \ \ \ \theta ( - z ) = - \theta ( z ) \\
\theta  ( z + m ) = (-1)^m \theta   ( z ) , \ \ \theta  ( z + n \omega ) = 
(-1)^n e^{ - \pi i n^2 \omega - 2 \pi i z  n } \theta   ( z ) ,
\end{gathered}
\end{equation}
and that $ \theta $ has simple zeros at $ \Lambda $ (and no other zeros) -- 
see \cite{tata}. 

We now define 
\begin{equation}
\label{eq:defFk}  F_k ( z ) = e^{\frac i 2   (  z -  \bar z ) k } \frac{ \theta ( z - z ( k ) ) }{
\theta( z) } ,  \ \ \ z (k):=  \frac{ \sqrt 3 k }{ 4 \pi i } , \ \  z:  \Lambda^* \to \Lambda . 
\end{equation}
Then, using \eqref{eq:theta} and differentiating in the sense of distributions, 
\begin{equation}
\label{eq:propFk}
\begin{gathered}F_k ( z + m + n \omega ) = e^{  -  n k \Im \omega } 
e^{ 2 \pi i n z ( k ) } F_k ( z ) = F_k ( z ) , \\
  ( 2 D_{\bar z } + k ) F_k ( z ) %= e^{ \frac12  i (  z -  \bar z ) k } 
%2 D_{\bar z} \left( \frac{ \theta ( z - z ( k ) ) }{
%\theta( z) }\right)
 = c(k)   \delta_0 ( z ) , \ \ c(k) :=  2 \pi i {\theta ( z ( k ) ) }/{ \theta' ( 0 ) } .
\end{gathered} \end{equation}
(Here we used the fact that if $ f $ and $ g $ are
 holomorphic, $ g ( \zeta ) $ has a simple zero at $ 0 $ 
 and $ f ( 0 ) \neq 0 $ then, near $ 0 $,  $ \partial_{\bar \zeta } ( f ( \zeta ) / g ( \zeta ) )=
 \pi f ( 0)/g'( 0) ) \delta_0 ( \zeta ) $ -- see for instance \cite[(3.1.12)]{H1}.)

The following Lemma is now immediate. It reinterprets the theta function argument
in \cite{magic}.
\begin{lemm}
\label{l:theta}
Suppose that $ p \in \mathcal K $ and $ u \in \ker_{ H^1_{p} } ( D ( \alpha ) + k ) $. 
Then 
\begin{equation}
\label{eq:ltheta} ( D( \alpha ) + k + k' ) ( F_{k'} ( z - z(k'') ) ) u ( z) ) = c_{k' - k''} \delta_{z(k'')} 
( z ) u ( z ( k'' )), \ \ \ k, k', k'' \in \mathbb C , \end{equation}
where $ c ( k ) $ is given in \eqref{eq:propFk}.
In particular, if $ u ( z ( k'' ) ) = 0 $ then 
\begin{equation}
\label{eq:ker2ker}   F_{k'} ( z - z (k'') ) u ( z ) \in \ker_{ H^1_{p } } ( D( \alpha ) + k + k' ) .
\end{equation}
\end{lemm}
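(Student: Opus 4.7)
\medskip

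\noindent
\textbf{Proof proposal.} The plan is to carry out a direct distributional computation using the product rule for $2D_{\bar z}$ applied to the product $F_{k'}(z-z(k''))\,u(z)$, with the key input being the identity $(2D_{\bar z}+k')F_{k'}=c(k')\delta_0$ from \eqref{eq:propFk}, translated to the pole at $z(k'')$.

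First, I will justify that the pointwise value $u(z(k''))$ makes sense. Since $u\in H^1_p$ solves $(D(\alpha)+k)u=0$, one has $2D_{\bar z}u=-(k+\alpha V(z))u\in H^1_{\rm loc}$, so $u$ is in $H^2_{\rm loc}$, which by the two-dimensional Sobolev embedding $H^2_{\rm loc}\hookrightarrow C^0$ makes $u$ continuous. I will also note that $F_{k'}(z-z(k''))$ is scalar, $\Lambda$-periodic by \eqref{eq:propFk}, and meromorphic with simple poles at $z(k'')+\Lambda$.

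Next, I will write, using $V\cdot F_{k'}u=F_{k'}\cdot Vu$ (scalar),
\begin{equation*}
(D(\alpha)+k+k')(F_{k'}(z-z(k''))u(z))
=\bigl[(2D_{\bar z}+k')F_{k'}(z-z(k''))\bigr]u(z)+F_{k'}(z-z(k''))\bigl[(D(\alpha)+k)u(z)\bigr].
\end{equation*}
Shifting variables $w=z-z(k'')$ in \eqref{eq:propFk} gives $(2D_{\bar z}+k')F_{k'}(z-z(k''))=c(k')\delta_{z(k'')}(z)$, and the second bracket vanishes by hypothesis. Using $\varphi(z)\delta_{z(k'')}(z)=\varphi(z(k''))\delta_{z(k'')}(z)$ for continuous $\varphi$ then yields \eqref{eq:ltheta}.

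For the ``in particular'' statement, the RHS of \eqref{eq:ltheta} vanishes when $u(z(k''))=0$, so $F_{k'}(z-z(k''))u(z)$ lies in the distributional kernel of $D(\alpha)+k+k'$. To place it in $H^1_p$ I need to check (i) the equivariance $\mathscr L_\gamma(F_{k'}(z-z(k''))u)=e^{i\langle p,\gamma\rangle}F_{k'}(z-z(k''))u$, which follows from the $\Lambda$-periodicity of $F_{k'}$ combined with $\mathscr L_\gamma u=e^{i\langle p,\gamma\rangle}u$; and (ii) regularity at the pole locus $z(k'')+\Lambda$: from the equivariance the zero of $u$ at $z(k'')$ propagates (up to the invertible matrix $\rho(\gamma)$ and a unimodular phase) to every $z(k'')+\gamma$, so the simple poles of $F_{k'}(\cdot-z(k''))$ are cancelled and the product is holomorphic/$H^1$ near those points.

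The only mildly delicate step is justifying the product rule and the evaluation $u(z(k''))\delta_{z(k'')}$ at the distributional level, which is handled once $u\in C^0$ via Sobolev embedding as above; everything else reduces to bookkeeping with \eqref{eq:propFk} and the equivariance of the factors.
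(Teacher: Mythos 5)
Your proof follows exactly the route the paper leaves implicit (the lemma is introduced with ``The following Lemma is now immediate''): the Leibniz rule for $2D_{\bar z}$ applied to the product, the translated delta identity from \eqref{eq:propFk}, and continuity of $u$ from elliptic regularity; so the approach is the same and essentially correct. Two small points are worth recording. First, translating \eqref{eq:propFk} to the pole at $z(k'')$ yields the constant $c(k')$, not $c_{k'-k''}$ as written in \eqref{eq:ltheta}; your computation is the right one, and the discrepancy is immaterial since only the ``in particular'' statement \eqref{eq:ker2ker} is used. Second, in your step (ii) the bare vanishing of the smooth function $u$ at $z(k'')$ does not by itself place $F_{k'}(z-z(k''))u$ in $H^1$ near the pole: a smooth function vanishing at a point divided by $z-z(k'')$ behaves like $\bar z/z$, which is bounded but not in $H^1_{\rm loc}(\mathbb R^2)$. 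The clean repair is to observe that the product is bounded (hence $L^2_{\rm loc}$) and annihilated by $D(\alpha)+k+k'$ in the sense of distributions once the delta term vanishes, so ellipticity of $2D_{\bar z}$ upgrades it to $H^1$; alternatively, in the paper's application the structure $u_0(z)=zw(z)$ from \eqref{eq:vanu0} makes the quotient smooth directly.
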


\subsection{Multiplicity one}
\label{s:mult} 

The definition of the set of {\em magic} $ \alpha$'s based on 
Proposition \ref{p:magicD}
does {\em not} involve the notion of multiplicity. 
Here we will discuss the case of multiplicity one\footnote{A more general discussion 
is  presented in \cite{bhz3} -- generic simplicity 
presented there is modified in view of protected multiplicity two magic
angles -- see the proof of Proposition \ref{p:mult1}.}. 
One natural definition of multiplicity of 
magic angles is given in terms of eigenvalues of $ H_k ( \alpha ) $ in 
\eqref{eq:eigs}. We first note that
\begin{equation}
\label{eq:A2E}  \alpha \in \mathcal A \ \Longleftrightarrow \  \forall \, k \in \mathbb C / \Lambda^* , \  \ 
E_{\pm 1 } ( \alpha, k ) = 0 . \end{equation}
We then say, that the magic angle $ \alpha \in \mathcal A $ is simple/has multiplicity one, 
if and only if
\begin{equation}
\label{eq:def1}   \forall \, k \in \mathbb C  , \  j > 1, \ \ 
E_{j } ( \alpha, k ) > 0.  \end{equation}
As stated in \eqref{eq:strongm} we use a stronger definition in this paper.

The operators
\[ \mathbb C^2 \ni  ( \alpha , k )  \longmapsto 
D ( \alpha ) + k  : H^1 _0 \to L^2 _0 , \]
 form a continuous family
of Fredholm operators  of index is $ 0 $. (This follows
from the ellipticity of $ D ( \alpha ) $, the continuity of the index 
and then fact \eqref{eq:A2E} implies that $ D ( \alpha ) - k $ is invertible for
some $k $ and $ \alpha $.)
In particular, $ \dim \ker ( D ( \alpha ) + k ) = \dim {\rm{coker }} ( D ( \alpha ) ^* + \bar k ) 
= \dim \ker ( D ( \alpha )^* - \bar k ) $, and hence 
\begin{equation}
\label{eq:defmul1}  
\text{\eqref{eq:def1}} \ \Longleftrightarrow \ \forall \, k \in \mathbb C , \ 
 \dim \ker_{ H^1_0 }  ( D( \alpha ) + k ) = 1. 
  \end{equation}

In \cite[Theorem 2]{bhz2} we proved that
\begin{prop}
\label{p:bhz2} 
Suppose \eqref{eq:genV} holds and that 
\[  k_0 \in  \mathcal K
\setminus \{ k_1, k_2 \} , \ \  k_1 \not \equiv k_2 \! \!\!  \mod  \Lambda^* . \]
Then for $ \alpha \in \mathcal A $ we have
\[ \text{\eqref{eq:def1}} 
 \ \Longleftrightarrow \ \exists \, k \not \equiv k_1, k_2  \! \! \!\! \mod \Lambda^*, \ \ 
 \dim \ker_{ H^1 _0 } ( D ( \alpha ) + k )  = 1 . \]
 In particular, $ \alpha \in \mathbb C $ is a simple magic angle (in the sense of
 \eqref{eq:def1}) if and only if 
\begin{equation}
\label{eq:th7}     \dim \ker_{ H^1 _0 } ( D ( \alpha ) + k_0 )  = 1 .
%\ \Longleftrightarrow \ \forall \, k \in \mathbb C \,  \ \dim \ker_{ H^1 _0 } ( D ( \alpha ) - k )  = 1 . 
\end{equation}
\end{prop}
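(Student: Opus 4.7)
The forward implication is immediate: by \eqref{eq:defmul1}, condition \eqref{eq:def1} is equivalent to $\dim \ker_{H^1_0}(D(\alpha) + k) = 1$ for every $k \in \mathbb{C}$, so in particular at some $k \not\equiv k_1, k_2 \mod \Lambda^*$.

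For the converse, my plan is a Wronskian / Riemann--Roch characterization of the kernel. For two elements $v_i = (v_i^{(1)}, v_i^{(2)}) \in \ker_{H^1_0}(D(\alpha) + k)$, the scalar Wronskian $W(z) := v_1^{(1)}(z) v_2^{(2)}(z) - v_1^{(2)}(z) v_2^{(1)}(z)$ is $\Lambda$-periodic, since the opposite characters $\pm K$ in the definition of $L^2_0$ cancel in this combination. A Leibniz computation using the anti-diagonal form \eqref{eq:genV} of $V$ yields $(2 D_{\bar z} + 2k) W = 0$, so by \eqref{eq:specD} the function $W$ must vanish identically whenever $-2k \notin \Lambda^*$. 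For such (generic) $k$, the $v_i$ are pointwise proportional: $v_2 = f v_1$ with $f$ a meromorphic function on $\mathbb{C}/\Lambda$ which, via $(D(\alpha) + k)(f v_1) = 0$, is holomorphic off the zero set of $v_1$. Classical Riemann--Roch on the torus $\mathbb{C}/\Lambda$ then gives $\dim \ker_{H^1_0}(D(\alpha) + k) = d(k)$, where $d(k)$ is the zero-divisor degree of a nonzero kernel element on a fundamental domain.

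The core step is to show $d(k)$ is locally constant on $\mathbb{C} \setminus (\{k_1, k_2\} + \Lambda^*)$. After the unitary change \eqref{eq:symmetry_good}, a kernel element corresponds to a section of a holomorphic line bundle over $\mathbb{C}/\Lambda$ whose isomorphism class depends continuously on $k$; since the divisor degree is integer-valued and continuous, it is locally constant, and hence constant on the connected complement of $\{k_1, k_2\} + \Lambda^*$. The hypothesis $\dim \ker(D(\alpha) + k^\sharp) = 1$ therefore gives $d \equiv 1$, so $\dim \ker(D(\alpha) + k) = 1$ at every $k$ in this complement where the Wronskian argument applies. The residual discrete exceptional $k$ (the set $-\Lambda^*/2$ together with the cosets $\{k_1, k_2\} + \Lambda^*$) are handled by transporting $u^\sharp$ via Lemma \ref{l:theta}: the element $F_{k - k^\sharp}(z - z_0) u^\sharp(z)$, with $z_0$ the unique zero of $u^\sharp$, is a nonvanishing kernel element at $k$ with divisor degree $1$, pinning $\dim \ker(D(\alpha) + k) = 1$.

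The main obstacle will be making the line-bundle / Riemann--Roch identification rigorous in this operator-theoretic setting: this uses the theta calculus of \S\ref{s:theta} (in particular the sections $F_k$ of \eqref{eq:defFk}--\eqref{eq:propFk}) to construct explicit families of kernel elements parametrized by $k$, and the exclusion of $\{k_1, k_2\} + \Lambda^*$ is essential because at those $k$-values the protected-state constant contributions \eqref{eq:dimker} from $H^1_{k_j, p}$ in \eqref{eq:defHkp} augment the kernel in a way that invalidates the bundle-degree formula. The ``in particular'' statement \eqref{eq:th7} follows by specializing to the admissible $k^\sharp = k_0 \in \mathcal{K} \setminus \{k_1, k_2\}$.
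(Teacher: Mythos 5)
First, a point of reference: the paper does not reprove this proposition — it is quoted from \cite[Theorem 2]{bhz2}, with the remark that the proof rests on Proposition \ref{p:magicD} and the theta-function transport of Lemma \ref{l:theta}. Your Wronskian computation is correct ($W=u_1v_2-u_2v_1$ is $\Lambda$-periodic because the characters $k_1=-k_2$ cancel, and $(2D_{\bar z}+2k)W=0$), and the reduction to elliptic functions is in the right spirit, but there are genuine gaps. The identification $\dim\ker_{H^1_0}(D(\alpha)+k)=d(k)$ fails when a kernel element is nowhere vanishing: then $v=fu$ forces $f$ to be a constant, so $\dim\ker=1$ while $d(k)=0$; the correct statement is $\dim\ker=\max(d(k),1)$. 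More seriously, your ``core step'' is not a proof: elements of $\ker(D(\alpha)+k)$ solve a genuinely coupled elliptic system and are not holomorphic sections of a line bundle over $\mathbb C/\Lambda$, so there is no a priori continuity of the zero-divisor degree in $k$ — precisely because $\dim\ker$ could jump, which is what you are trying to exclude. The only mechanism available for comparing kernels at different $k$ is Lemma \ref{l:theta}, and it requires knowing that a kernel element \emph{vanishes somewhere}; you never establish this, and you cannot deduce it from $\dim\ker(D(\alpha)+k^\sharp)=1$ alone because of the $d=0$ possibility above. In the paper that vanishing comes from symmetry (cf.\ Proposition \ref{p:mult1}, where $u_0\in L^2_{k_0,2}$ forces $u_0(0)=0$), an ingredient absent from your argument.

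Second, the exceptional set where your Wronskian argument is silent, $\{k:2k\in\Lambda^*\}$, contains $k_0=0$ in the normalization \eqref{eq:KKG} — exactly the point needed for \eqref{eq:th7}. There $W$ is a $\Lambda$-periodic solution of $2D_{\bar z}W=0$, i.e.\ a possibly \emph{nonzero} constant, so two independent kernel elements need not be pointwise proportional, and exhibiting one transported element with divisor degree one does not ``pin'' $\dim\ker=1$; that inference is a non sequitur at precisely the $k$ where you invoke it. To close this one must either run the dichotomy on the constant Wronskian (if $W\equiv c\neq 0$ for some pair, the pointwise invertible matrix of solutions conjugates $D(\alpha)$ to $2D_{\bar z}$ and makes the spectrum discrete, contradicting $\alpha\in\mathcal A$; if $W\equiv 0$ for all pairs, the elliptic-function count applies after all), or argue as in the proof of Proposition \ref{p:mult1}, using the antilinear symmetry $\mathscr A$ and the decomposition $L^2_{k_0}=\bigoplus_{j}L^2_{k_0,j}$ to exclude multiplicity at $k_0$ directly. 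As written, the converse implication — and in particular \eqref{eq:th7}, the form actually used in the rest of the paper — is not established.
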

We recall that the proof is based on Proposition \ref{p:magicD} and theta function arguments
reviewed in \S \ref{s:theta}. 

A  symmetric choice of $\rho$ in \eqref{eq:rho} is given by:
\begin{equation}
\label{eq:KKG}    k_1 =  \frac{4  \pi } {i \sqrt 3 }  ( \tfrac 1 3 + \tfrac 2 3 \omega ) =  \tfrac{4 } 3 \pi =: K
, \ \ \ k_2 = 
- K   = \tfrac{ 4  } 3  \pi, 
  \ \ \ k_0 =  0. 
\end{equation}
This corresponds to $ \Gamma = 0 $ in the physics notation. In \cite{beta} we 
followed \cite{magic} and used a non-symmetric (equivalent) choice. This corresponds
to the assumptions in \eqref{eq:defU} with $ k_1 = K  $.

\begin{prop}
\label{p:mult1}
Suppose that  \eqref{eq:th7}  holds.
Then, in the notation of Lemma \ref{l:ev2ker}, 
\begin{equation}
\label{eq:symu0}
\ker_{ H^1 _0 }  ( D ( \alpha ) + k_0 )  
= \mathbb C \tau (k_0 )^* u_0, \ \  \| u_0 \|_{ L^2_{k_0}  } = 1 ,  \ \ 
\Omega u_0 =  \omega u_0 , 
\end{equation}
that is, in the notation of \eqref{eq:defHkp}, $ u_0 \in L^2_{k_0,2} $. In addition, 
\begin{equation}
\label{eq:vanu0}  u_0 ( z ) = z w ( z ) , \ \ w \in C^\infty ( \mathbb C ; \mathbb C^2 ) , \ \  w( 0 ) \neq 0, \ \ \ 
  u_0 ( z ) \neq 0, \ \  z \notin \Lambda . \end{equation}
\end{prop}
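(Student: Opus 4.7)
The plan combines rotational symmetry acting on the one-dimensional kernel with a zero-counting argument based on Lemma~\ref{l:theta}. First, Lemma~\ref{l:ev2ker} identifies $\ker_{H^1_0}(D(\alpha)+k_0)$ with $\ker_{H^1_{k_0}} D(\alpha)$ via multiplication by $\tau(k_0)$, so the hypothesis \eqref{eq:th7} becomes $\dim \ker_{H^1_{k_0}} D(\alpha) = 1$; pick a unit generator $u_0 \in L^2_{k_0}$, which is smooth by elliptic regularity applied to $D(\alpha)$. The identities $\Omega D(\alpha) = \omega D(\alpha)\Omega$ and $\mathscr L_\gamma \Omega = \Omega \mathscr L_{\omega\gamma}$, together with $\bar\omega k_0 \equiv k_0 \pmod{\Lambda^*}$ from $k_0 \in \mathcal K$, imply that $\Omega$ preserves $\ker_{H^1_{k_0}} D(\alpha)$; one-dimensionality then forces $\Omega u_0 = \bar\omega^p u_0$ for some $p \in \{0,1,2\}$, i.e.\ $u_0(\omega z) = \bar\omega^p u_0(z)$.

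Next I argue that $u_0$ has exactly one zero modulo $\Lambda$, counted with multiplicity. If $u_0$ had two distinct zeros $z_1, z_2$ in a fundamental domain, Lemma~\ref{l:theta} (applied at $k=0$ with $z(k'')=z_j$) would produce, for each $k'$, two kernel elements $F_{k'}(z - z_j) u_0(z) \in \ker_{H^1_{k_0}}(D(\alpha) + k')$; note that the exponents in \eqref{eq:propFk} cancel so $F_{k'}$ is genuinely $\Lambda$-periodic and multiplication preserves $H^1_{k_0}$. Their ratio is meromorphic on $\mathbb{C}/\Lambda$ with non-trivial divisor for generic $k'$ (inspecting \eqref{eq:defFk}), hence the two kernel elements are linearly independent, contradicting Proposition~\ref{p:bhz2} which, combined with the simplicity hypothesis, gives $\dim \ker_{H^1_0}(D(\alpha) + k') = 1$ for generic $k'$. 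On the other hand, a standard degree computation for the holomorphic line bundle on $\mathbb{C}/\Lambda$ whose sections are $\ker_{H^1_{k_0}} D(\alpha)$ forces $u_0$ to vanish somewhere.

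By rotational symmetry the unique zero $z_0 \in \mathbb{C}/\Lambda$ satisfies $\omega z_0 \equiv z_0 \pmod{\Lambda}$, so $z_0 \in \{0, \tfrac13(\bar\omega-1), \tfrac23(\bar\omega-1)\}$ modulo $\Lambda$. A further theta-shift argument in the spirit of \cite[proof of Thm~2]{bhz2}, transporting $u_0$ via $F_{k'}$ with $z(k') = z_0$ and checking incompatibility with simplicity at the non-origin fixed points, pins $z_0 = 0$. Writing $u_0(z) = z^m w(z)$ with $w(0) \neq 0$ and plugging into $u_0(\omega z) = \bar\omega^p u_0(z)$ yields $\omega^m = \bar\omega^p$; with $m = 1$ from the zero count this forces $p = 2$, so $u_0 \in L^2_{k_0, 2}$ in the notation of \eqref{eq:defHkp}. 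The precise form $u_0(z) = z w(z)$ with $w \in C^\infty$ and $w(0) \neq 0$ then follows by matching Taylor coefficients of $u_0$ at $z=0$ against $D(\alpha) u_0 = 0$, which combined with $u_0(\omega z) = \omega u_0(z)$ eliminates the $\bar z$-type contributions in the expansion.

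The main obstacle is Step~2 combined with the fixed-point exclusion in Step~3: establishing that $\ker_{H^1_{k_0}} D(\alpha)$ consists of sections of a degree-one line bundle on $\mathbb{C}/\Lambda$, and ruling out $\tfrac13(\bar\omega-1)$ and $\tfrac23(\bar\omega-1)$ as the location of the zero of $u_0$. Both pieces rest on the theta-function / line-bundle technology developed in \cite{magic, beta, bhz2}.
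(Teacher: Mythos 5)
Your overall strategy (one-dimensionality plus rotational symmetry gives $\Omega u_0=\bar\omega^p u_0$, then pin down $p$) is sound in outline, but it inverts the paper's logic at the crucial step, and the inversion creates a genuine gap. The paper determines the representation index \emph{first}, by a duality argument you never use: decomposing $L^2_{k_0}=\bigoplus_j L^2_{k_0,j}$, it shows that if the kernel lay in $L^2_{k_0,j}$ for $j\in\{0,1\}$, then Fredholm duality produces $w$ with $D(\alpha)^*R(0)^*w=0$, and the antilinear symmetry $\mathscr A$ of \eqref{eq:defAsc}, via the mapping properties $R(0)^*:L^2_{k_0,j}\to L^2_{k_0,j-1}$ and \eqref{eq:mapA1}, yields a second kernel element $\mathscr A R(0)^*w\in L^2_{k_0,-j+1}\neq L^2_{k_0,j}$, contradicting \eqref{eq:th7}. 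Hence $u_0\in L^2_{k_0,2}$, and $u_0(0)=\omega u_0(0)$ forces $u_0(0)=0$; the simplicity and uniqueness of that zero is then simply quoted from \cite[Theorem 3]{bhz2}. You instead try to locate the zero first and read off $p$ from it afterwards.

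The gap is your Step 3. Having placed the unique zero at a fixed point of $z\mapsto\omega z$ on $\mathbb C/\Lambda$, you dispose of the two non-origin fixed points with ``a further theta-shift argument in the spirit of \cite{bhz2}\dots checking incompatibility with simplicity.'' That is a placeholder, not a proof, and it sits exactly where the real content lies: whether $u_0$ vanishes at a given fixed point $z_S$ is governed by the relation $\bigl(\rho(\gamma)^{-1}-\bar\omega^{p}I\bigr)u_0(z_S)=0$ with $\gamma=(\omega-1)z_S$, i.e.\ it depends on the very index $p$ you are trying to compute, so the intended deduction is circular as stated (for the $K$-point representation the analogous eigenfunction does vanish at a non-origin fixed point, so no representation-blind ``simplicity'' argument can work). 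Secondary, lesser issues: your ``degree computation for the line bundle whose sections are $\ker_{H^1_{k_0}}D(\alpha)$'' does not parse as written (the relevant bundle lives over the $k$-torus, and in any case the existence/uniqueness of the zero is nontrivial and is what \cite[Theorem 3]{bhz2} supplies); and your two-zeros contradiction needs the hypothesis $u_0(z_j)=0$ before Lemma \ref{l:theta} applies, which is fine, but the linear-independence claim for the resulting kernel elements should be justified by evaluating at the respective zeros rather than by a ``generic $k'$'' divisor inspection. To repair the proof along the paper's lines, replace Step 3 by the $\mathscr A$-duality argument above and cite \cite[Theorem 3]{bhz2} for \eqref{eq:vanu0}.
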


\noindent
{\bf Remark.} The key insight in \cite{magic} was to use vanishing of $ u \in
\ker_{H^1_{k_1} }  D ( \alpha )  $ for magic $ \alpha$'s at a distinguished point $ z_S $ to show that 
$ \Spec_{ H^1_0 } ( D ( \alpha ) ) = \mathbb C $. In \cite[Theorems 1]{beta} this was shown equivalent to the spectral 
definition based on Proposition \ref{p:magicD}. Here we take a direct approach: only at magic $ \alpha $'s
we have $  \ker_{H^1_{k_0} }  D ( \alpha )  \neq \{ 0 \} $ and \eqref{eq:symu0} shows that
its elements  have to vanish at $ 0 $. 
Lemma \ref{eq:ltheta} then implies vanishing of other
eigenfunctions. 

\begin{proof}[Proof of Proposition \ref{p:mult1}]
From Lemma \ref{l:ev2ker} and \eqref{eq:th7} we conclude that $ \ker_{ H^1_{k_0} } D ( \alpha ) = 
\mathbb C u_0 $ and as 
$ L^2_{k_0} = \bigoplus_{ j=0}^2 L^2_{k_0, j } $ we can decompose the
kernel using these subspaces. 
Since $  D ( 0 ) + k_0 :H_0^1 \to L^2_0  $ is invertible (see \eqref{eq:specD}), 
 \eqref{eq:ev2ker}  shows that $ D ( 0 ) : H_{k_0 }^1 \to L^2_{k_0 } $ is invertible with the 
inverse given by $ R ( 0 ) $. It then follows 
that (see  \eqref{eq:defTk}) 
\[ I + \alpha T_0 = R ( 0 ) D ( \alpha ) : L^2_{k_0,j} \to L^2_{k_0, j } , \ \ \
\ker_{ H^1_{k_0,j} }D ( \alpha )  = \ker_{ L^2_{k_0,j} } R ( 0 ) D ( \alpha ) . \]
(We do use ellipticity of $ D ( \alpha) $ here: the element of the kernel on $ L^2 $ must automatically be smooth.)
Hence if $ \ker_{ L^2_{k_0,j} } ( R ( 0 ) D ( \alpha ) ) \neq \{ 0 \} $, $ j = 0, 1 $,  then 
 $ \ker_{ L^2_{k_0,j} }(  D (\alpha)^* R(0)^*) \neq \{ 0 \} $, and there exists $ w \in L^2_{k_0, j} $
such that 
$   D ( \alpha )^* R ( 0 )^* w = 0 $. 
Since $ R ( 0 )^* : L^2_{ k_0, j }  \to L^2_{k_0,j-1} $ 
%\red{to be commented out:
%\[ \begin{split} u ( \omega z ) = \bar \omega^j u ( z) & \ \Longrightarrow \ 
%\bar \omega^j  D_z u ( z ) = D_z [ u ( \omega z ) ] = \omega ( D_z u) ( \omega z ) \\
%& \ \Longrightarrow ( D_z u ) ( \omega z ) = \bar \omega^{j+1} D_z u ( z )  \ \Longrightarrow \ 
%D_z  : L^2_{k_0 , j } \to L^2_{k_0, j+1 } \\
%&   \ \Longrightarrow \  R ( 0 ) = ( 2 D_z )^{-1} : L^2_{k_0, j } \to L^2_{k_0, j-1} . 
%\end{split} \]}
and $ \mathscr A : L^2_{ k_0, j-1 } \to L^2_{k_0, -j+1 } $
(see \eqref{eq:mapA1}), we have 
\[  D ( \alpha ) \mathscr A R ( 0 )^* w = 0 , \ \   \mathscr A R(0)^* w \in L^2_{k_0, -j +1  } \neq L^2_{ k_0, j } 
\ \text{ when $ j = 0 , 1 $.} \]
 This means that $ \dim \ker_{H^1_{k_0} } D( \alpha) > 1 $, contradicting the simplicity assumption.
The simplicity and uniqueness of the zero of $ u_0 $ \eqref{eq:vanu0} 
 follows from \cite[Theorem 3]{bhz2}.
\end{proof}

%To prove \eqref{eq:vanu0} we need to review the theta function approach to eigenfunctions of
%$ D ( \alpha ) $. 

For an $ \alpha \in \mathcal A $, we assume that \eqref{eq:th7} holds. In that case
Proposition \ref{p:mult1} and Lemma \ref{l:theta} show that
\begin{equation}
\label{eq:u02uk}    \ker_{ H^1_0 }  ( D ( \alpha ) + k ) = \mathbb  C u ( k ) , \ \ \ 
%u ( k ) : = F_k u_0. \end{equation}
u ( k ) := \frac{ F_k u_0 }{ \| F_k  u_0 \| } . \end{equation}
Using \eqref{eq:defAsc} we see that (since $ \mathscr A^2 = - I $)
\[  ( D ( \alpha )^* + \bar k ) \mathscr A = - \mathscr A ( D ( \alpha ) - k ) , \]
which implies that
\begin{equation}
\label{eq:u02u*}
\ker_{ H^1_0 }  ( D ( \alpha )^* + \bar k ) = \mathbb  C \mathscr A u ( - k ) . 
\end{equation}

\noindent
{\bf Remark.} From \cite[(6.6)]{bhz2} we see that (note the difference of notation: $ u ( k ) 
$ there is not normalized) for the basis of $ \Lambda^* $ satisfying $ z ( e_1 ) = 1 $, 
$  z (e_2 ) = \omega $, we have for $ p =  m e_1 + n e_2 \in \Lambda^* $, 
\begin{equation}
\label{eq:k+p} 
u ( k + p ) = e_p ( k )^{-1} \tau ( p ) u ( k ) , \ \ \ 
e_p ( k ) := e^{ - \frac 12 \pi i n^2  +  \pi i ( k + \bar k)  n } ( -1 )^{n+m} ,
\end{equation}
where the unitary operator $ \tau( p ) $ was defined in \eqref{eq:deft}.

\section{Grushin problems}
\label{s:proof}

In this section we construct Grushin problems (see \cite{SZ} and \cite[\S 6]{notes}) which allow us to treat
small in-plane magnetic fields as perturbations. In \S \ref{s:bif} we combine that with the spectral characterization
of magic angles (Proposition \ref{p:magicD}) to analyze the behaviour at the $ \Gamma $ point
and at the vertices of the boundary of the Brillouin zone.

\subsection{Grushin problem for $ D_B ( \alpha ) $}
Suppose $ \underline \alpha \in \mathcal A $ is simple, in the sense that \eqref{eq:th7} holds.
We then put, in the notation of \eqref{eq:u02uk} and \eqref{eq:u02u*}, 
\begin{equation}
\label{eq:Grushin1} \begin{gathered}
\mathcal D_B  ( \underline \alpha , k ) = \begin{pmatrix} D ( \underline \alpha ) + k & R_- ( k ) \\
R_+ ( k ) & 0 \end{pmatrix} + \begin{pmatrix} \mathcal B & 0 \\ 0 & 0 
\end{pmatrix}  : H^1_0 \times \mathbb C \to L^2_0 \times \mathbb C , 
 \\   R_- (k ) u_- = u^* ( k ) u_- , \ \ \ R_+ ( k ) u = \langle u , u(k) \rangle, \\
( D ( \underline \alpha ) + k ) u( k ) = 0 , \ \ \  \| u ( k ) \| = 1, \ \ 
 u^* ( k ) = \mathscr A u ( -  k ) . \end{gathered}
\end{equation} 
We have 
\[ \mathcal D_B( \underline \alpha, k )^{-1} = \begin{pmatrix}
E^B ( k ) & E^B_+ (  k ) \\
E^B_- (  k ) & E_{-+}^B (  k )   \end{pmatrix}, \]
where
\begin{equation}
\label{eq:defE} \begin{gathered} E_+^0 v_+ := u ( k ) v_+ , \ \ \ E^0_- v := \langle v , u^* ( k ) \rangle, 
\ \ \ E_{-+}^0 = 0 , 
\\ 
E^0 v := \left( ( D ( \underline \alpha ) + k )|_{ ( \mathbb C u( k )) ^\perp \to ( \mathbb C u^*(k))^\perp} \right)^{-1}
( v - \langle v , u^* ( k ) \rangle u^* ( k ) ) . \end{gathered} \end{equation}
We now apply \cite[Proposition 2.12]{notes} to obtain
\begin{equation}
\label{eq:EBpm} 
\begin{gathered}  E^B_{-+} = - E_- \mathcal B E_+  + \mathcal O ( B^2 ) =
- c(k ) c^* ( k ) B (  G ( k )     + \mathcal O ( B  )   ) , \\
G ( k ) := (c(k ) c^* ( k ) )^{-1} \left( \langle u_1 ( k ) , u_1^* ( k ) \rangle - \langle u_2 ( k ) , u_2^* ( k ) \rangle \right) ,
\end{gathered}
\end{equation}
and, if  $ u_0  = ( \psi, \varphi )^t $, and $ u ( k) = ( u_1 ( k ), u_2 ( k ) )^t $ then 
\[ u_1 ( k ) = c ( k )   F_k \psi , \ \ u_2 ( k ) =  c ( k )  F_k \varphi , \ \ 
u_1^* ( k ) = c^*(k)  \overline { F_{- k } \varphi} , \ \ u_2^* = - c^*( k )  \overline{ F_{-k} \psi } , \]
where $ c(k), c^*(k) > 0 $ come from $ L^2$-normalizations of $ u $ and $ u^* $.

Hence,
\[ G ( k ) = 2 \int_{\mathbb C/\Lambda } F_k ( z ) F_{-k} ( z ) \varphi ( z ) \psi ( z ) dm ( z) . \]
%Since $ \Omega u = \omega u $, it follows that $ G ( 0 ) = 0 $.
%We would like to claim that 
%\[ G ( k ) \neq 0 , \ \ k \notin \Lambda^* . \]
%We take $ \mathbf e_j $ so that $ z ( \mathbf e_1 ) = 1 $ and $ z ( \mathbf e_2 ) = \omega $.
%Then \eqref{eq:theta} gives
%\[ 
%\begin{split} G ( k + m \mathbf e_1 + n \mathbf e_2 ) & = 
%2 \int_{\mathbb C/\Lambda } \prod_{\pm} \theta ( z \pm z(k) \pm m \pm n\omega ) 
%\theta ( z - z ( k ) - m - n \omega )   
%\frac{ \varphi ( z ) \psi ( z )}{ \theta ( z ) ^2 }  dm ( z) \\
%& = e^{ -2 \pi i n^2 \omega - 4 \pi i z( k ) n } G ( k ) 
%\end{split} \]
In fact $ G ( k ) $ is a multiple of  $ \theta( z (k ) )^2 $ which follows from a theta function identity (see \cite[(4.7a)]{voca} or \cite[\S I.5, (A.3)]{tata}):
\begin{equation}
\label{eq:Weier} \theta ( z + u ) \theta ( z - u ) \theta_2 ( 0 )^2= \theta^2 ( z ) \theta_2^2 ( u ) - \theta_2^2 ( z ) \theta^2 ( u) , \ \ \  \theta_2 ( z ) := \theta ( z + \tfrac12 ) .  \end{equation}
Since, (from $ u \in H^1_{0,2} $) 
\[   \int_{ \mathbb C/\Lambda } \varphi ( z ) \psi ( z ) dm ( z ) = 
 \int_{ \mathbb C/\Lambda } \varphi ( \omega z ) \psi ( \omega z ) dm ( z )  = 
 \omega^2 \int_{ \mathbb C/\Lambda } \varphi ( z ) \psi ( z ) dm ( z ) , \]
this integral vanishes, and \eqref{eq:Weier} gives
\begin{equation}
\label{eq:G2g}  G( k ) = g_0 \, \frac{ \theta ( z ( k ) )^2}{ \theta ( \tfrac12 ) ^2 } , \ \ \ 
g_0  =  g_0 ( \underline \alpha )  := 
2 \int_{\mathbb C/\Lambda }   \ \theta ( z + \tfrac12 )^2  \frac{ \varphi ( z ) \psi ( z )}{ \theta ( z ) ^2 }    dm ( z).
\end{equation}
Numerical evidence, see Table \ref{table:2}, suggests that for the Bistritzer--MacDonald potential and the first magic
angle, 
\[ |g_0| \simeq 0.07 \neq 0 . \]
(The number $ g_0 $ is determined up to phase which we can choose arbitrarily by 
modifying $ u_0 \mapsto e^{ i \theta } u_0 $.) Table \ref{table:2} shows approximate values of $ |g_0| $ for higher magic angles for the same potential.

\noindent
{\bf Remark.} We also see that the Grushin problem \eqref{eq:Grushin1} remains well posed
with $ \underline \alpha $ replaced with $ \alpha $, $ | \underline \alpha - \alpha | \ll 1$. 
The effective Hamiltonian \eqref{eq:EBpm} has to be modified by term (obtained
again using \cite[Proposition 2.12]{notes}) 
\begin{equation}
\label{eq:newE-+} 
\begin{gathered}   E_{-+}^B ( k ,\alpha  ) = E_{-+}^B ( k )  - ( \alpha - \bar \alpha) f_2  ( k, B, \alpha ) , \\ 
f_2 ( k , 0 , \underline \alpha ) := g_1(k,\underline \alpha) = - E_-^0 ( k ) \begin{pmatrix}
\ 0 & U ( z ) \\
U ( -  z ) & 0 \end{pmatrix} E_+^0 ( k ) ,
\end{gathered}
\end{equation} 
where, $g_1(\alpha):=g_1(0,\alpha)$ and in the notation following \eqref{eq:EBpm}, 
\begin{equation}
\label{eq;g1}
g_1 ( k, \underline \alpha) 
 : =  \int_{ \mathbb C/\Lambda } ( U (- z )  u_1(k,z)^2  - U ( z ) u_2(k,z)^2 ) dm ( z ) .
\end{equation}
An indirect argument presented in the proof of Proposition \ref{p:SB} shows that
if $ g_0 ( \underline \alpha ) \neq 0$ then $ g_1 (\underline \alpha ) \neq 0 $. 
This can also be verified numerically.

\subsection{Grushin problem for the self-adjoint Hamiltonian}
We now turn to the corresponding Grushin problem for $ H_k^B ( \alpha ) $ given in 
\eqref{eq:conj} (note the irrelevant change of sign of $ k $)
\begin{equation}
\label{eq:GrH} 
 \begin{gathered} 
\mathcal H_k^B ( \alpha , z  ) := 
\begin{pmatrix} H_k^B ( \alpha ) - z & \widetilde R_- (k )  \\
\widetilde R_+(k) & 0 \end{pmatrix} : 
H^1_0 ( \mathbb C/\Lambda ; \mathbb C^4 ) \times \mathbb C^2 \to 
L^2_0 ( \mathbb C/\Lambda ; \mathbb C^4 ) \times \mathbb C^2, \\ 
 H_k^B  ( \alpha ) := \begin{pmatrix} 0 & D_B ( \alpha ) ^* + \bar k  \\
D _B ( \alpha ) + k & 0 \end{pmatrix} , \\
\widetilde R_- ( k ) = \begin{pmatrix} 0 & R_+(k)^* \\
R_- ( k )  & 0 \end{pmatrix} , \ \ \
\widetilde R_+ ( k ) = \widetilde R_- ( k )^* , 
\end{gathered}
\end{equation}
where $ R_\pm ( k ) $ are the same as in \eqref{eq:Grushin1}. The operator
$ \mathcal H_k^B ( \alpha, z ) $ is invertible for all $ k$,  $ | B | \ll 1$, 
$ |\alpha - \underline \alpha | \ll 1$ and $ |z | \ll 1$. We denote the components
of the inverse by $ \widetilde E^B_\bullet ( k, \alpha , z ) $ and we have
\[   \widetilde E_+^0 ( k, \underline \alpha, 0 ) = 
\begin{pmatrix} 0 & E^0_+( k ) \\
E_-^0 ( k ) ^* & 0 \end{pmatrix} , \ \  \widetilde E_-^0 ( k, \underline \alpha, 0 ) = 
\widetilde E_+^0 ( k, \underline \alpha, 0 )^* , \ \ 
\widetilde E_{-+}^0 ( k, \underline \alpha, 0 ) \equiv 0  . \]
Using \cite[Proposition 2.12]{notes} again we see that (in the notation of
\eqref{eq:newE-+})
\[ \begin{split} \widetilde E^B_{-+}  ( k , \alpha, z ) & = 
\begin{pmatrix} z & E_{-+}^B ( k, \alpha) \\
E_{-+}^B ( k , \alpha )^* & z \end{pmatrix}  
+  \mathcal O ( |z|^2 + |B|^2 + |\alpha - \underline \alpha|^2 ) .
\end{split} \]
(Here we used the fact that
 $ E_-^0(k) E_-^0 ( k)^* \equiv 1 $ and 
 $ E_+^0 ( k )^* E_+^0 ( k )  \equiv 1 $ which follows from \eqref{eq:defE} and
 normalization of $ u(k) $ and $ u^* ( k ) $.)

Hence $ z = E_{1 }^B  ( k, \alpha ) = - E_{-1 }^B ( k ,  B) $ (the eigenvalues of $ H_k^B ( \alpha ) $ closest
$ 0 $) for $ k $ close to $ 0 $ are given by solutions of 
\begin{equation} 
\label{eq:det}
\begin{split}  & \det \widetilde E^B_{\pm } ( k , \alpha, z )  = 0 
\ \Longrightarrow \\
& \ \ \ \ \ \ \ \ \ \ \ \ \ \ \ \ \ \  z = \pm \left | \gamma_1 B k^2 + \gamma_0 ( \alpha - \underline \alpha ) + 
\mathcal O ( |B|^2 + | \alpha - \underline \alpha|^2  + |k|^4 ) \right | , \end{split} 
\end{equation}
where (under the assumption that $ g_0 ( \underline \alpha ) \neq 0 $) $ \gamma_0 \neq 0 $, 
$ \gamma_1 \neq 0 $. (The exact symmetry of signs follows from 
the extension of the chiral symmetry \eqref{eq:defW} to the Grushin problem \eqref{eq:GrH}
which 
shows that $ \det \widetilde E^B_{\pm } ( k , \alpha, z ) = \det \widetilde E^B_{\pm } ( k , \alpha, - z )$.)

\section{Bifurcation}
\label{s:bif}

\begin{figure}
{\begin{tikzpicture}
\node at (0,0) {\includegraphics[trim={2.1cm 0 0 0},width=7.5cm]{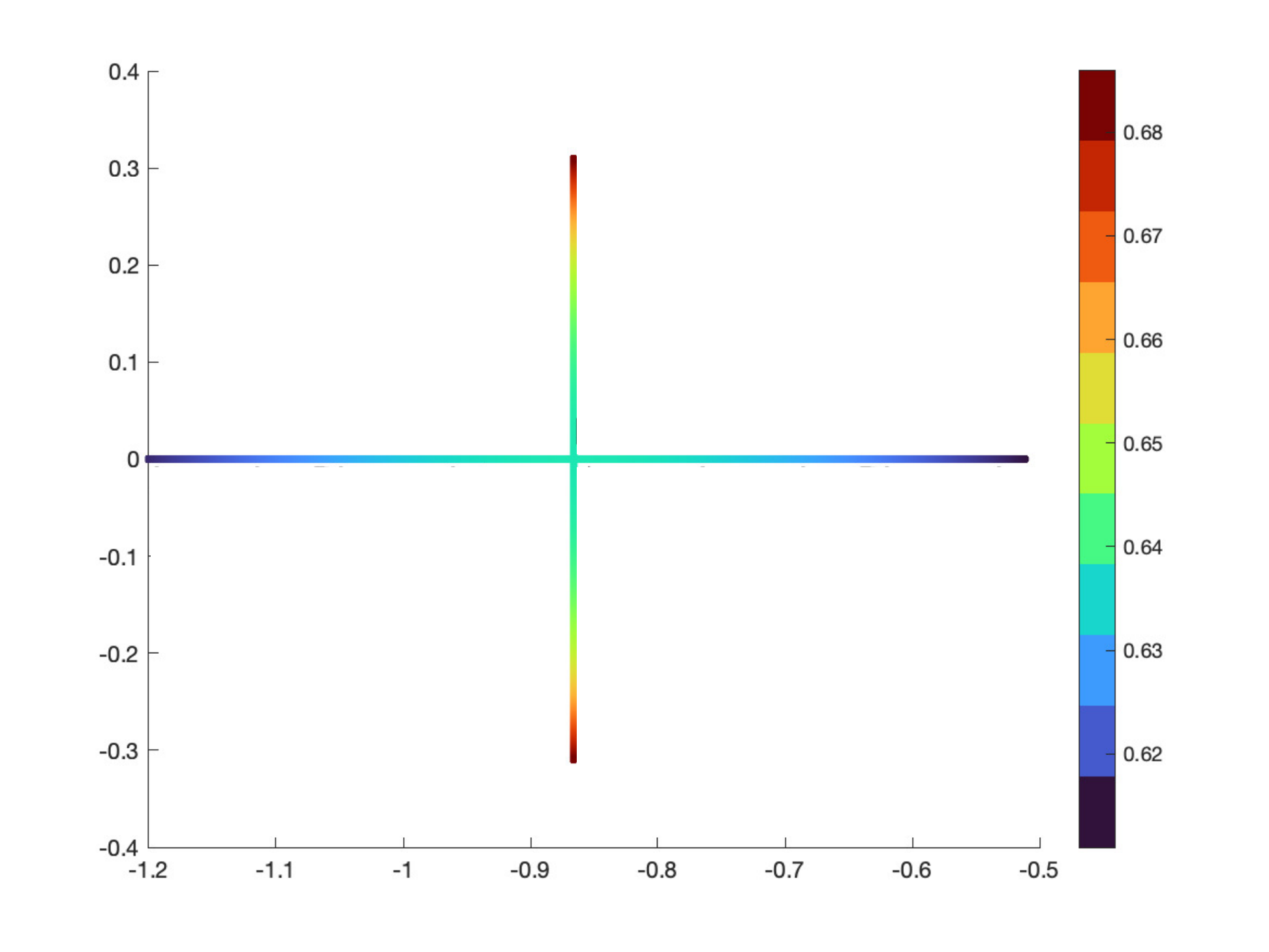}};
\node at (7.6,0){\includegraphics[trim={1.7cm 0 0 0},width=7.5cm]{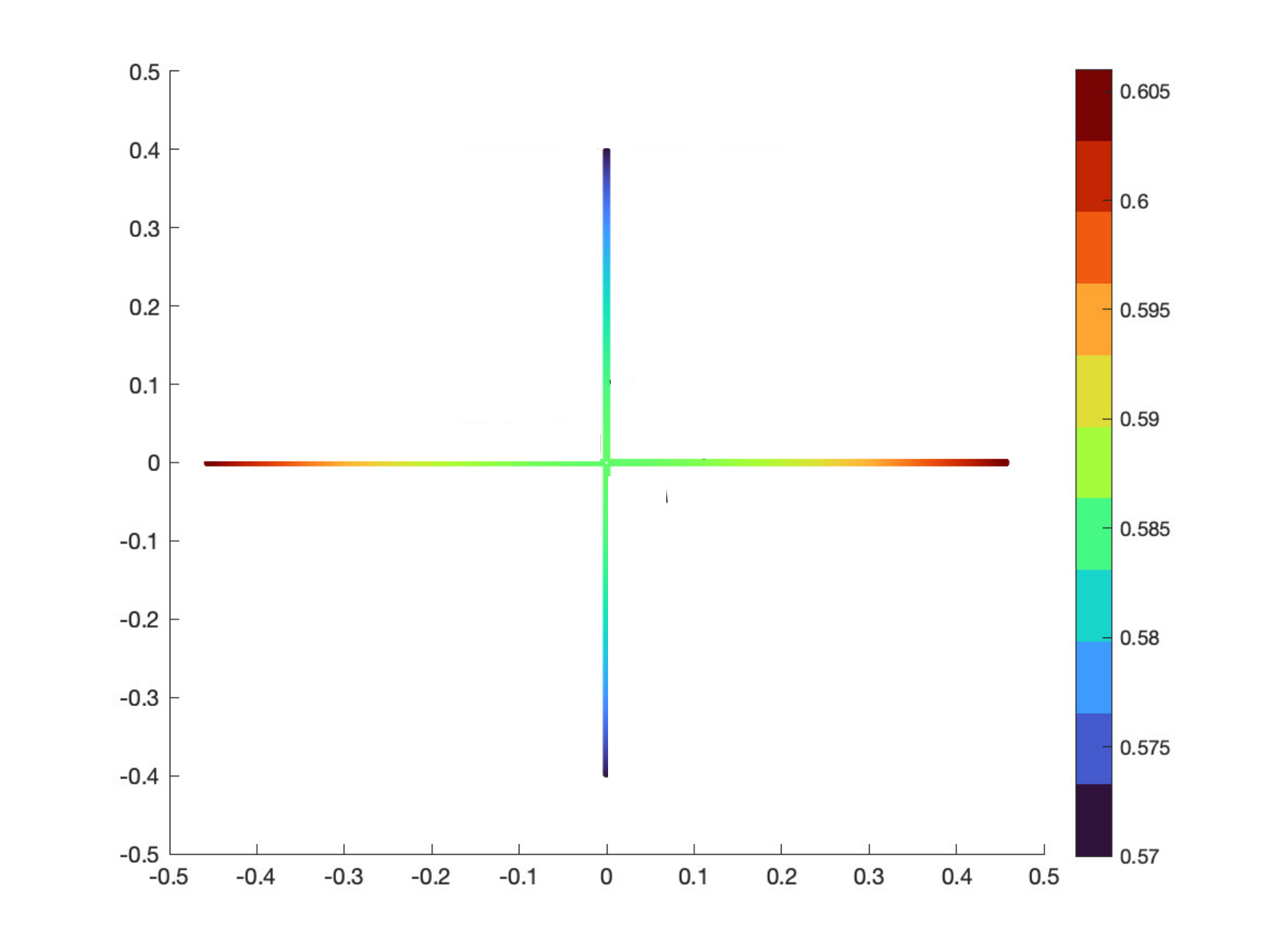}};
\draw [-{Stealth[length=3mm, width=2mm]}](-0.7,0)-- (-0.7,2.5);
\draw [-{Stealth[length=3mm, width=2mm]}](-0.7,0)-- (-0.7,-2.5);
\draw [-{Stealth[length=3mm, width=2mm]}](-2.9-0.6,-0.1)-- (-1,-0.1);
\draw [-{Stealth[length=3mm, width=2mm]}](2.7-0.6,-0.1)-- (-0.7,-0.1);
\node at (-0.7+1.3,0.3+0.2){\small Outgoing DPs};
\node at (-2-0.6,-0.4){\small Incoming DP};
\node at (2-0.6,-0.4){\small Incoming DP};
\draw [-{Stealth[length=3mm, width=2mm]}](-0.8+8,2.5)--(-0.8+8,0.1);
\draw [-{Stealth[length=3mm, width=2mm]}] (-0.8+8,-2.5) -- (-0.8+8,0.1);
\draw [-{Stealth[length=3mm, width=2mm]}](-1+8,-0.1)--(-2.9-0.6+8,-0.1);
\draw [-{Stealth[length=3mm, width=2mm]}] (-1+8,-0.1)--(2.7-0.6+8,-0.1);
\node at (-0.7+1.3+8,0.45){\small Incoming DPs};
\node at (-2-0.6+8,-0.4){\small Outgoing DP};
\node at (2-0.6+7.4,-0.4){\small Outgoing DP};
\node at (-1,-3) {$\Re k$};
\node at (-4.5,0.5) {$\Im k$};
\node at (7,-3) {$\Re k$};
\node at (2.5,-3) {$\alpha$};
\node at (7.6+2.5,-3) {$\alpha$};
   \end{tikzpicture}}
\caption{\label{fig:Bifurcation}Bifurcation for $B=0.1$. (Top): The color-coding indicates the position of the Dirac points for given values of $\alpha \in \mathbb R.$ The right figure illustrates the bifurcation at $\Gamma$ and the left figure at a non-equivalent (modulo $\Lambda^*$) bifurcation point that is a vertex of the boundary of the Brillouin zone, see Figure \ref{f:1}. }  \end{figure}
This section is devoted to showing \eqref{eq:bigcup} and giving a stronger version of Theorem \ref{th:lin}.
In view of \eqref{eq:specD}, for 
$k \notin \mathcal K_0 = \{K,-K\}+ \Lambda^*$, $ K = \frac 4 3 \pi > 0 $,  we have 
\begin{equation}
\label{eq:BS}
 \begin{gathered}  k \in \Spec_{L^2_0}(D_B(\alpha)) \Leftrightarrow 1/\alpha \in \Spec_{L^2_0}(T_k(B)), \\
T_k(B) := (2D_{\bar z}-k)^{-1} \begin{pmatrix} B & U(z) \\ U(-z) & -B \end{pmatrix}.
\end{gathered} 
\end{equation}
For $ k $ near $ \pm K $ (that is, near $ K $ and $K' $), 
and $ 0 < B \ll 1 $, we can use a modified
operator for which the equivalence in \eqref{eq:BS} still holds (near $ K $ and $ K'$):
\begin{equation}
\label{eq:wideT}  \widetilde T_k ( B ) := ( (2 D_{\bar z } - k)I_{\mathbb C^2 } + \mathcal B )^{-1}
\begin{pmatrix} 0 & U(z) \\ U(-z) & 0 \end{pmatrix}. \end{equation}
 In the notation of \eqref{eq:defTk}  $T_k(0)=T_k$  and its spectrum is given by 
 reciprocals of the magic $ \alpha$'s -- see Proposition \ref{p:magicD}. 

Combining the spectral characterization with the result of \S \ref{s:proof} we can 
obtain a rather precise characterization of the behaviour of eigenvalues of $ T_k ( B ) $:
\begin{prop}
\label{p:SB}
Suppose that $ \underline \lambda $ is a simple eigenvalue of $ T_k = T_k ( 0 )$
and that assumptions of Theorem \ref{th:Gamma} hold for $ \underline \alpha = 1/\underline \lambda$.
Then for
every $ \epsilon > 0 $ there exists $ \delta > 0  $ and a holomorphic function $ \lambda ( k, B ) $, 
such that $   \lambda ( k , B ) $ is a simple eigenvalue of $ T_k ( B ) $ and
\begin{equation}
\label{eq:propla}
\begin{gathered}
 ( k , B ) \mapsto \lambda ( k, B ) , \ \ \
 k \in \Omega_\epsilon :=  \mathbb C \setminus ( \mathcal K_0 + D ( 0 , \epsilon ) ), \ \ 
 B \in 
D ( 0 , \delta ) 
 \\
\lambda ( k + p  , B ) = \lambda ( k, B ) , \ \ \ p \in \Lambda^* , \ \ k , k+p \in 
\Omega_\epsilon  , \ \  B \in D ( 0 , \delta ) , \\
\lambda ( k, B ) = \overline  { \lambda ( \bar k, \bar B ) } = \lambda ( \omega k, \omega B ) = 
\lambda ( - k , B ) , \\
\lambda ( k, 0 ) = \underline \lambda, \ \ \ \partial_B \partial_k^2 \lambda ( 0 , 0 ) \in \mathbb R \setminus \{ 0 \}.
\end{gathered}
\end{equation}
In particular, for $B \in D ( 0 , \delta) \subset \mathbb C $, 
\begin{equation}
\label{eq:morela}  \lambda ( k , B ) = \underline \lambda + B^3 \lambda_0 (B^3 ) + c_1 B k^2 +  \mathcal O ( 
 B^4 k^2)  + \mathcal O ( B^2 k^4)  + \mathcal O ( B k^8 ) , \end{equation}
 where $  c_1 \in \mathbb R \setminus \{ 0 \} $ and $  \lambda_0 ( z )  =\overline{
\lambda_0 ( \bar z ) } $.  
\end{prop}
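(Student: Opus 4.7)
The plan is to view $(k,B) \mapsto T_k(B)$ from \eqref{eq:BS} as a jointly holomorphic family of compact operators on $L^2_0$ for $k \in \mathbb{C} \setminus \mathcal{K}_0$ and $B \in \mathbb{C}$. By hypothesis $\underline{\lambda}$ is an algebraically simple eigenvalue of $T_k(0)$, whose spectrum is independent of $k$ by \eqref{eq:specc}. Standard Kato perturbation theory then produces, near each $(k_0, 0)$ with $k_0 \in \Omega_\epsilon$, a rank-one Riesz projector $P(k, B) = \tfrac{1}{2 \pi i}\oint_\gamma (z - T_k(B))^{-1}\,dz$ depending holomorphically on $(k, B)$, and hence a holomorphic simple eigenvalue branch $\lambda(k,B)$. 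Using $\tau(p)^* T_k(B) \tau(p) = T_{k+p}(B)$, which follows from the analogous identity for $D_B(\alpha)$ since $\mathcal{B}$ and $\tau(p)$ commute, one descends to the compact quotient $(\mathbb{C}/\Lambda^*) \setminus (\mathcal{K}_0 + D(0,\epsilon))$ and extracts a uniform $\delta > 0$ by compactness.

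\textbf{Symmetries.} Each identity in \eqref{eq:propla} is inherited from a symmetry of $D_B(\alpha)$ combined with uniqueness of the simple branch. Periodicity in $k$ modulo $\Lambda^*$ is immediate from the conjugation above; the reflection $\lambda(-k, B) = \lambda(k, B)$ follows from \eqref{eq:reflection}; the rotation $\lambda(\omega k, \omega B) = \lambda(k, B)$ follows from the identity $\mathcal{R} D_B(\alpha) \mathcal{R}^* = \omega D_{\bar{\omega} B}(\alpha)$ used in the proof of Theorem \ref{t:symD}; the conjugation identity uses the third line of \eqref{eq:symD}, applied at $B = 0$ to conclude $\underline{\lambda} \in \mathbb{R}$ and then at general $(k,B)$ to identify the conjugate branch with the original one.

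\textbf{Expansion from the Grushin problem.} For the quantitative expansion I would use the Grushin problem \eqref{eq:Grushin1}, which reduces the condition $\lambda(k,B) = 1/\alpha$ to the scalar equation $E^B_{-+}(k, \alpha) = 0$. Combining \eqref{eq:EBpm}, \eqref{eq:G2g} and \eqref{eq:newE-+} gives
\begin{equation*}
E^B_{-+}(k, \alpha) = -c(k) c^*(k)\, g_0(\underline{\alpha})\, \frac{\theta(z(k))^2}{\theta(\tfrac{1}{2})^2}\, B \;-\; g_1(k, \underline{\alpha})(\alpha - \underline{\alpha}) \;+\; \mathcal{O}\bigl(|B|^2 + |\alpha - \underline{\alpha}|^2\bigr).
\end{equation*}
The missing ``indirect argument'' is that $g_1(0, \underline{\alpha}) \neq 0$: by Proposition \ref{p:magicD} and the Grushin characterization, $\alpha \mapsto E^0_{-+}(k, \alpha)$ has at $\alpha = \underline{\alpha}$ a zero of order equal to the algebraic multiplicity of $\underline{\lambda}$ as an eigenvalue of $T_k$, which equals $1$ by simplicity, so $\partial_\alpha E^0_{-+}(k, \underline{\alpha}) = -g_1(k, \underline{\alpha}) \neq 0$ for every $k \notin \mathcal{K}_0$. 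The holomorphic implicit function theorem then solves for $\alpha - \underline{\alpha}$ as a holomorphic function of $(k, B)$, and converting via $\lambda = 1/\alpha$ together with $\theta(z(k))^2 = \theta'(0)^2\, z(k)^2 + \mathcal{O}(k^4)$ near $k = 0$ yields the leading $c_1 B k^2$ term with $c_1 \in \mathbb{R} \setminus \{0\}$ proportional to $g_0(\underline{\alpha})/\bigl(g_1(0,\underline{\alpha})\, \underline{\alpha}^2\bigr)$.

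\textbf{Taylor-series structure and main obstacle.} The specific form of \eqref{eq:morela} follows by expanding $\lambda(k,B) - \underline{\lambda} = \sum_{m,n \geq 0} a_{mn}\, k^m B^n$ and imposing the symmetries from the second paragraph: reflection forces $m$ even, rotation forces $m+n \equiv 0 \pmod 3$, the identity $\lambda(k,0) \equiv \underline{\lambda}$ (from \eqref{eq:specc}) forces $a_{m0} = 0$ for $m > 0$, and conjugation makes $a_{mn} \in \mathbb{R}$. The pure-$B$ contribution assembles into $B^3 \lambda_0(B^3)$ with $\lambda_0$ holomorphic and satisfying $\lambda_0(z) = \overline{\lambda_0(\bar z)}$, because at $k = 0$ the map $B \mapsto \lambda(0,B) - \underline{\lambda}$ is $\omega$-invariant and vanishes at $B = 0$. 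The remaining admissible monomials produce the $c_1 Bk^2$ leading mixed term and the error terms $\mathcal{O}(B^4 k^2)$, $\mathcal{O}(B^2 k^4)$, $\mathcal{O}(B k^8)$ of \eqref{eq:morela}. In my view the main obstacle is the nonvanishing of $g_1$; once this indirect point is understood via the Grushin characterization of simplicity, the rest is careful bookkeeping with the symmetries and holomorphic implicit functions.
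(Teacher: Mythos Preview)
Your proposal is correct and follows the paper closely on existence, holomorphy, periodicity, and the symmetry-derived constraints on the Taylor coefficients. The one genuine divergence is in how you establish $c_1\neq 0$ (equivalently $g_1(0,\underline\alpha)\neq 0$).

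You argue directly: the scalar effective Hamiltonian $\alpha\mapsto E^0_{-+}(k,\alpha)$ has a zero at $\underline\alpha$ whose order equals the algebraic multiplicity of $\underline\lambda$ for $T_k$, which is $1$ by assumption, so $\partial_\alpha E^0_{-+}(k,\underline\alpha)\neq 0$ and the implicit function theorem applies. This is valid, but note that the bridge you invoke is not quite Proposition~\ref{p:magicD} alone: one needs the Schur-complement identity $(D(\alpha)+k)^{-1}=E-E_+E_{-+}^{-1}E_-$ with $E,E_\pm$ holomorphic in $\alpha$, together with the fact that for a compact operator with geometrically simple eigenvalue the pole order of the resolvent equals the algebraic multiplicity. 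Since $\dim\ker_{H^1_0}(D(\underline\alpha)+k)=1$ (Proposition~\ref{p:bhz2}), the Jordan block is unique and your identification goes through.

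The paper proceeds in the opposite order. It does \emph{not} assume or first prove $g_1\neq 0$; instead it writes the Grushin relation \eqref{eq:Bg0} with an a priori unknown exponent $P$ in $(\mu-\underline\lambda)^P$, substitutes the symmetry-constrained expansion \eqref{eq:morela} already obtained, and compares leading powers of $B$ and $k$ to force $P=1$ and $c_1=c_0\neq 0$. Only after this does it read off $g_1(\underline\alpha)\neq 0$ as a corollary from \eqref{eq:newE-+}. Your route is shorter and conceptually cleaner once the Grushin/Gohberg--Sigal correspondence is granted; the paper's route is more self-contained, needing nothing beyond the explicit form of $E^B_{-+}$ and the symmetry selection rules \eqref{eq:symmetries}. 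Either way the proportionality $c_1\sim g_0(\underline\alpha)/g_1(\underline\alpha)$ (up to the explicit theta-function and $\underline\alpha$ factors) falls out.
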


\noindent
{\bf Remarks.}  1. It follows from the proof that the constant $ c_1 $ can be computed using the constants 
$ g_0 ( \underline \alpha ) $ and $ g_1 ( \underline\alpha ) $ defined in 
\eqref{eq:G2g} and \eqref{eq:newE-+} respectively:
    \[ c_1 = -\frac{3\theta'(0)^2  }{16\pi^2 \theta (\tfrac12 )^2} \frac{g_0(\underline \alpha)}{g_1(\underline \alpha)}
.\]

\noindent
2. In view of \eqref{eq:BS}, \eqref{eq:morela}, shows that when $ \underline \alpha $ is
is magical and $ \underline \lambda = 1/\underline \alpha $ satisfies the assumptions of
Proposition \ref{p:SB}, then for $ 0 < |B| \ll 1 $, $  k \in D_B ( \alpha )$, if and only if
\begin{equation}
\label{eq:morela1} 
\begin{gathered} % k \in D_B ( \alpha ) \ \Longleftrightarrow 
k^2 B  ( 1 +   f_0 ( k , B ) ) = 
c_1^{-1} \underline \alpha^{-2} ( \underline \alpha - \alpha -  \underline \alpha^2 B^3 \lambda_1  ( B^3 )) , \\
f_0 ( k , b ) =  \mathcal O ( 
 B^3)  + \mathcal O ( B k^2)  + \mathcal O ( k^6 ), \ \ \lambda_1 ( z ) = \lambda_0 ( z ) + \mathcal O ( z  ) .
 \end{gathered} 
 \end{equation}
 %\green{\[ 1/\alpha = 1/\underline \alpha + B^3 \lambda_0 ( B^3 ) + c_1 B k^2 ( 1 + f ) \] \[ \begin{split} \alpha & = \underline  \alpha ( 1 + \underline \alpha ( B^3 \lambda_0 ( B^3 ) + c_1 B k^2 ( 1 + f ) ))^{-1} \\ & = \underline \alpha - \underline \alpha^2 B^3 \lambda_0 ( B^3 ) ( 1 + \mathcal O ( B^3 ) ) -  \underline \alpha^2 c_1 B k^2 ( 1 + f_1 ) . \end{split} \]
% This seems to give \eqref{eq:morela1}?}
 %\red{Something is funny about the sign above, as it suggests $\alpha = \underline{\alpha} - \underline{\alpha}^2 B^3 \lambda_1(B^3)$ with a minus sign compared to Prop. 5.2?-Of course we are free to choose...} \blue{Hopefully it is correct and clearer now?}
 In particular, when $ B $ and $ \alpha $ are real then the eigenvalues of
 $ D_B ( \alpha ) $ bifurcate $ k = 0 $ when $ \alpha = \underline \alpha -
  \underline \alpha^2 B^3 \lambda_1  ( B^3) $ (we recall from \eqref{eq:morela} that 
  $ c_1 \in \mathbb R \setminus \{ 0 \} $ and $ \lambda _1 ( B^3 ) $ is real for $ B $ real).
  We see the same bifurcation for $ B = B_0 e^{ \pm 2 \pi i / 3} $, $ B_0 > 0 $,  obtained using \eqref{eq:symD}.

\noindent
3. Numerical evidence suggests (see Figure \ref{f:la0}) that $ \lambda_0 ( 0 ) < 0 $ for the 
Bistritzer--MacDonald potential. If $ B = B_0 e^{ 2 \pi i \theta } $ that means the $ \Gamma 
$ point (corresponding $ k = 0 $) is in the spectrum of $ D_B ( \alpha ) $, $ \alpha \in 
\mathbb R $, only if $ \theta \in \frac13 \mathbb Z $.

\begin{figure}
{\begin{tikzpicture}
\node at (0,0) {\includegraphics[width=7cm]{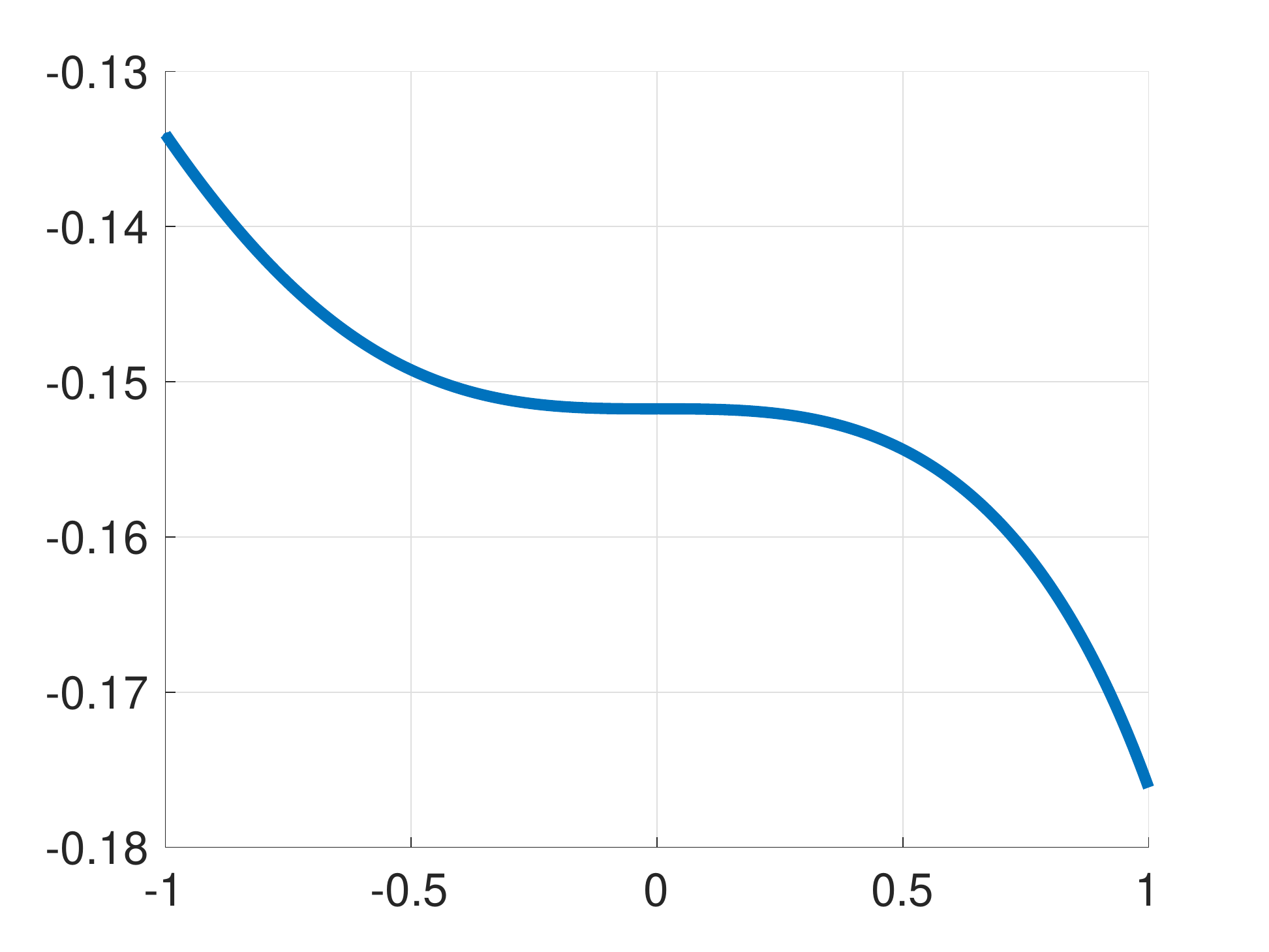}};
\node at (-3.6,0) {$\lambda_0(B^3)$};
\node at (0,-2.8) {$B$};
\node at (7,0) {\includegraphics[width=7cm]{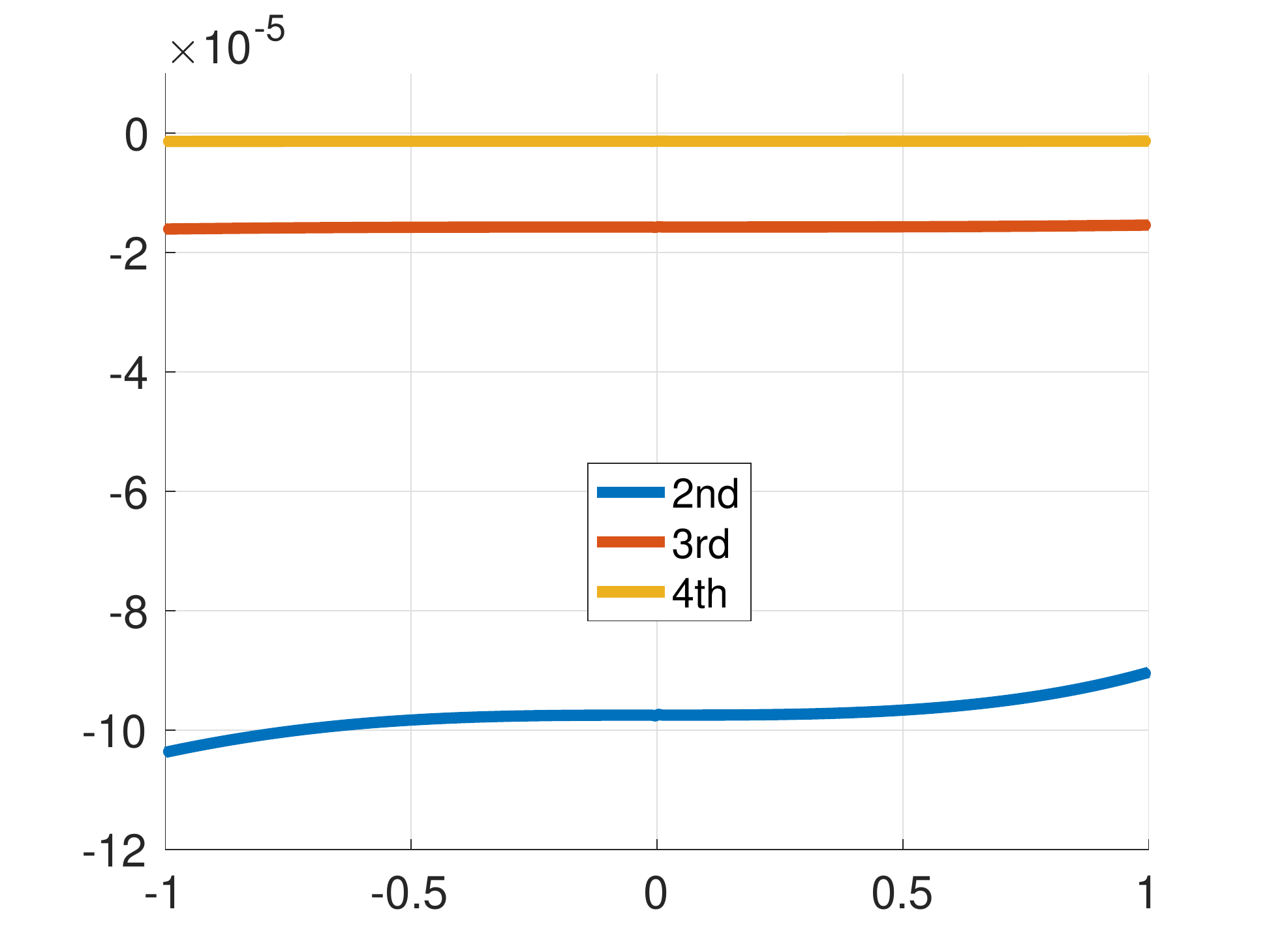}};
\node at (-3.6+7,0) {$\lambda_0(B^3)$};
\node at (7,-2.8) {$B$};
\end{tikzpicture}}
%\node at (0,-7) {\includegraphics[width=7cm]{1s.eps}};
%\node at (-3.6,-7) {$\lambda_0(B)$};
%\node at (0,-2.8-7) {$B$};
%\node at (7,-7) {\includegraphics[width=7cm]{2-4s.eps}};
%\node at (-3.6+7,-7) {$\lambda_0(B)$};
%\node at (7,-2.8-7) {$B$};
%\end{tikzpicture}}
\caption{\label{f:la0}  Plot of $B \mapsto \lambda_0(B^3 )$ 
of Proposition \ref{p:SB} for the first (left) and second to fourth magic angle (right).}
%Bottom row: Figure showing $\lambda_0(B)$ for the first (left) and second to fourth magic angle (right) as in Proposition \ref{p:SB1}.}
\end{figure}

\begin{proof}[Proof of Proposition \ref{p:SB}]
Let  $ U \Subset \mathbb C \setminus \mathcal K_0 $ be an open set. Then for $ k \in U $, 
$ T_k ( B ) = T_k ( 0 ) + \mathcal O ( B )_{ L^2_0 \to L^2_0 } $ and if  $ 0 < \epsilon_0  \ll1 $,
then the projection, 
$   \Pi ( k , B ) := ( 2 \pi i )^{-1}
\int_{ \partial D ( \underline \lambda , \epsilon_0 ) } ( \zeta - T_k ( B ) )^{-1} d\zeta$, 
is holomorphic in $ k $ and $ B $ and has a fixed rank. We assumed
that $ T_k ( 0 ) $ has a simple eigenvalue at $ \underline \lambda $
(independent of $ k $ -- see \eqref{eq:specc}), which then implies that the rank is one, and
$ T_k ( B ) $ has a simple eigenvalue $ \lambda = \lambda ( k, B ) $. 
Since $ \lambda( k , B ) = \tr ( T_k ( B ) \Pi ( k , B ) ) $
it follows that $ \lambda ( k , B ) $ is holomorphic in $ k $ and $ B $. 

From  \eqref{eq:symD} and \eqref{eq:BS} we have 
\begin{equation} 
\label{eq:symT}  \Spec_{L^2_0}(T_k(B)) = \overline{\Spec_{L^2_0}(T_{\bar k}(\bar B))} = \Spec_{L^2_0}(T_{-k}(B)) = \Spec_{L^2_0} ( T_{\omega k } ( \omega B ) ) ,\end{equation}
which gives 
\begin{equation}
\label{eq:symla}    \lambda ( k , B ) = \lambda ( -k , B ) = \overline{\lambda ( \bar k , \bar B ) } =
\lambda ( \omega  k , \omega B ) . 
\end{equation}
From \eqref{eq:BS} and the periodicity of the spectrum of $ D_B ( \alpha ) $ with 
respect to $ \Lambda^* $ (see Lemma \ref{l:ev2ker}) we also note that
\begin{equation}
\label{eq:perla}  \lambda ( k + p , B ) = \lambda ( k , B ) ,  \ \ p \in \Lambda^*  
\end{equation} 
provided that $ k , k + p \in U $. This allows an extension of $ \lambda (k, B )  $ to $ \Omega_\epsilon $
in the statement of the proposition, provided that $ |B | < \delta $ for some sufficiently small $ \delta $. 
The properties of the expansion \eqref{eq:morela} come from the fact that individual terms
in the Taylor expansion satisfy the symmetries \eqref{eq:symla}:
\begin{equation}
\label{eq:symmetries}
\begin{split} 
 & a k^p B^q = a (-1)^p k^p B^q = \overline a k^p B^q = a \omega^{p+q} k^p B^q 
\ \Longrightarrow \\
& \ \ \ \ \ \ \ \ \ \ \ \ \ \ \ \ \ \ \ \ \  \ \ \ \ \ \ \ \ \ \ \ \ \ \ \ \ \ \ a \in \mathbb R, \  p \in 2 \mathbb N , \  -p \equiv q \hspace{-0.35cm} \mod 3 . 
\end{split} 
\end{equation}
This proves \eqref{eq:propla} and \eqref{eq:morela} except for the non-degeneracy of $ \partial_B \partial^2_k 
\lambda ( 0 , 0 ) $. 

To prove it, we compare the Taylor expansion of $ \lambda ( k , B ) $ with the effective
Hamiltonian \eqref{eq:newE-+}. Hence, with  $ \mu :=
1/\alpha $, $ k \in \Spec_{L^2_0} D_B ( \alpha ) $ then (with some modification of notation)
\begin{equation}
\label{eq:Bg0} 
B g_0 ( \underline \alpha ) \theta_2 ( 0 )^{-2} ( \theta ( z ( k ) )^2 + B f_1 (k, B ) ) + 
\sum_{ p=1}^P
( \mu  - \underline \lambda )^p B^{k_p}  F_p ( k , B , \mu ) = 0, 
\end{equation}
where $ F_P ( 0, 0 , \underline \lambda ) \neq 0 $ and $ k_P = 0 $, 
$ k_p > 0 $, $ p < P $.  That has to be so by noting that $ 0 \notin \Spec_{L^2 } ( 
D_B ( \alpha )) $ for $ 0 < | \alpha - \underline \alpha | \ll 1 $. 
Hence, (note that $ \lambda ( k , B ) $ is even in $ k $)
\[ \begin{split} ( \lambda ( k , B ) - \underline \lambda )^P &  =  c_0 B k^2 +  B^2 \tilde f_1 ( k,  B) + 
B k^4 \tilde f_2 ( k , B ) \\
& \ \  + \sum_{p=1}^{P-1} ( \lambda ( k ,  B ) - \underline \lambda )^p B^{k_p} \widetilde F_p ( k, B , \lambda ( k ,  B ) )
 , \ \ c_0 \neq 0. \end{split} \]
Combined with \eqref{eq:morela} this gives 
\[ ( B^3 \lambda_0 (  B^3 ) + c_1 B k^2 +  \mathcal O ( B^2 k^2 + 
 B^2 k^4 + B k^8 ) )^P = c_0 B k^2 + \mathcal O ( B^2 ) + \mathcal O ( B k^4) , \ \ , \] 
 which implies that $ P = 1 $ and $ c_1 =c_0 $, and concludes the proof.
 
We also note that comparing this conclusion with \eqref{eq:newE-+} 
shows that $ g_1 ( \underline \alpha ) \neq 0 $. 
\end{proof} 

At the bifurcation point, the Bloch eigenvalues exhibit a quadratic well, see Figure \ref{f:c2w}.
\begin{prop}
\label{prop:well}
Under the assumptions, and in the notation, of Proposition \ref{p:SB} and \eqref{eq:morela1}, 
let 
\[ \alpha^* = \underline \alpha  +  \underline \alpha^2 B^3 \lambda_1 ( B^3 ) , \]
so that $ 0 \in \Spec_{L^2_0 } D_B ( \alpha^* ) $. 
Then the two Bloch eigenvalues $E_{\pm 1}$ of $H_k^B(\alpha)$ closest to zero, defined in \eqref{eq:eigs}, satisfy
\[ E_{\pm 1}(\alpha^*,k) = \pm \vert \gamma_1 B k^2\vert + \mathcal O(B^2 +\vert k \vert^4 ), \ \ \gamma_1 >  0 . \]
\end{prop}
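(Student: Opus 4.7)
The plan is to read the expansion of $E_{\pm 1}(\alpha^*,k)$ directly off the effective Hamiltonian produced by the Grushin problem \eqref{eq:GrH}, which has already been computed in \S\ref{s:proof}. Specifically, I would start from \eqref{eq:det}, which gives that the two Bloch eigenvalues of $H_k^B(\alpha)$ closest to zero are
\[ E_{\pm 1}(\alpha,k) = \pm \bigl| \gamma_1 B k^2 + \gamma_0(\alpha - \underline\alpha) + \mathcal O(|B|^2 + |\alpha - \underline\alpha|^2 + |k|^4)\bigr|, \]
valid for $|B|$, $|\alpha - \underline\alpha|$, $|k|$ small, where $\gamma_0,\gamma_1 \neq 0$ under the hypothesis $g_0(\underline\alpha)\neq 0$. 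The symmetry $E_1 = -E_{-1}$ is built into this from the chiral symmetry \eqref{eq:defW}, which extends to the Grushin problem.

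Next I would substitute $\alpha = \alpha^* = \underline\alpha + \underline\alpha^2 B^3 \lambda_1(B^3)$, so that $\alpha^* - \underline\alpha = \mathcal O(B^3)$. Then the linear term $\gamma_0(\alpha^* - \underline\alpha)$ is $\mathcal O(B^3)$ and $(\alpha^*-\underline\alpha)^2 = \mathcal O(B^6)$, both absorbed in $\mathcal O(B^2)$. Hence
\[ E_{\pm 1}(\alpha^*,k) = \pm \bigl| \gamma_1 B k^2 + \mathcal O(B^2 + |k|^4)\bigr|. \]
Applying the elementary inequality $\bigl||a+r|-|a|\bigr|\le |r|$ with $a = \gamma_1 B k^2$ and $r = \mathcal O(B^2 + |k|^4)$ yields the claimed expansion $E_{\pm 1}(\alpha^*,k) = \pm |\gamma_1 B k^2| + \mathcal O(B^2 + |k|^4)$.

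The last step is to verify that $\gamma_1$ can be taken strictly positive. The coefficient $\gamma_1$ is determined (up to a phase that can be absorbed into a normalization of $u_0$) by the Grushin construction; comparing \eqref{eq:det} with the independent expansion of $\lambda(k,B)$ in \eqref{eq:morela}–\eqref{eq:morela1}, which is forced to hold because the vanishing loci of $E_{\pm 1}(\alpha,\cdot)$ and of $\lambda(\cdot,B) - 1/\alpha$ must coincide by \eqref{eq:BS}, pins down $|\gamma_1|$ in terms of $|c_1|$. Since only $|\gamma_1|$ appears in the final formula, we may declare $\gamma_1 > 0$.

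I expect the main obstacle to be essentially bookkeeping rather than conceptual: the expansion \eqref{eq:det} is the heart of the matter, and one must be careful that the $\mathcal O(|\alpha-\underline\alpha|^2)$ term in that expansion really is order $B^6$ after specialization and thus harmless, and that all implicit constants are uniform in $B$ and $k$ in a fixed neighborhood of the origin. Once this is checked, the absolute-value manipulation at the end is standard.
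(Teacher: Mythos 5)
Your proposal is correct and follows exactly the paper's own (one-line) proof, which simply cites \eqref{eq:det} and \eqref{eq:morela1}: you substitute $\alpha^*-\underline\alpha=\mathcal O(B^3)$ into \eqref{eq:det}, absorb it into the $\mathcal O(B^2)$ error, and use $\bigl||a+r|-|a|\bigr|\le|r|$, with the sign of $\gamma_1$ immaterial since only $|\gamma_1|$ enters. The only thing you add beyond the paper is the (correct) bookkeeping that $(\alpha^*-\underline\alpha)^2=\mathcal O(B^6)$ is harmless.
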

\begin{proof}
This follows from \eqref{eq:det} and \eqref{eq:morela1}.
\end{proof}

The next proposition deals with the vertices of the boundary of the  Brillouin zone.
In view of \eqref{eq:perla} it is enough to consider one of the vertices, say,
\begin{equation} 
\label{eq:defk1} \underline k_1 :=   2 \pi i /\sqrt 3 . \end{equation}
We then have 

\begin{prop}
\label{p:SB1}
Suppose that $ \underline \lambda $ is a simple eigenvalue of $ T_k = T_k ( 0 )$
and that assumptions of Theorem \ref{th:Gamma} hold for $ \underline \alpha = 1/\underline \lambda$.
Then, for $ k $ near $ \underline k_1  $ given in \eqref{eq:defk1}, 
\begin{equation}
\label{eq:morela2}  \lambda ( k , B ) =\underline \lambda + B \lambda_2 (  B ) + 
c_2 B^{q} (k - \underline k_1 )^2   + \mathcal O ( 
 |B|^{q+1} |k- \underline k_1 |^2 )
 %+ | k - \underline k_1 |^4 ),
  \end{equation}
 where $q \geq 1$, $ c_2   \in \mathbb R \setminus \{ 0 \} $
 and $  \lambda_2 ( z )  =\overline{
\lambda_2( \bar z ) }.$ 
\end{prop}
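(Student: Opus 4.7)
The plan is to parallel the proof of Proposition \ref{p:SB}, this time centering at the edge-midpoint $\underline k_1 = 2\pi i/\sqrt{3}$ of the Brillouin zone rather than at the $\Gamma$ point. First, one checks $\underline k_1 \notin \mathcal K_0$ (since $\mathcal K_0 = \{\pm K\} + \Lambda^*$ with $K \in \mathbb R$, while $\underline k_1 \in i\mathbb R \setminus \Lambda^*$), so the Riesz-projection construction used for Proposition \ref{p:SB} applies verbatim: $\underline\lambda$ being a simple eigenvalue of $T_{\underline k_1}$ (independent of $k$ by \eqref{eq:specc}) produces a unique simple eigenvalue $\lambda(k, B)$ of $T_k(B)$ holomorphic in $(k, B)$ near $(\underline k_1, 0)$, with $\lambda(k, 0) \equiv \underline\lambda$.

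The structure of the expansion comes from exploiting the symmetries via the crucial arithmetic fact that $2\underline k_1 = 4\pi i/\sqrt 3 \in \Lambda^*$. Writing $\tilde k := k - \underline k_1$, this gives $-\underline k_1 \equiv \underline k_1$ and $\overline{\underline k_1} \equiv \underline k_1 \pmod{\Lambda^*}$, so combining \eqref{eq:symla} with $\Lambda^*$-periodicity \eqref{eq:perla} yields
\[
\lambda(\underline k_1 + \tilde k, B) = \lambda(\underline k_1 - \tilde k, B) = \overline{\lambda(\underline k_1 + \bar{\tilde k}, \bar B)}.
\]
Thus the Taylor series in $\tilde k$ contains only even powers, with coefficients that are holomorphic in $B$ and real when $B$ is real. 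Combined with $\lambda(\cdot, 0) \equiv \underline\lambda$ forcing every $\tilde k$-dependent Taylor coefficient to vanish at $B = 0$, we obtain
\[
\lambda(\underline k_1 + \tilde k, B) = \underline\lambda + a_0(B) + a_1(B)\tilde k^2 + \mathcal O(\tilde k^4), \qquad a_0(0) = a_1(0) = 0.
\]
Setting $\lambda_2(B) := a_0(B)/B$ (holomorphic at $0$), letting $q \geq 1$ be the order of vanishing of $a_1$ at $B = 0$, and defining $c_2 := \lim_{B \to 0} B^{-q} a_1(B)$, then gives the form claimed in \eqref{eq:morela2}; the reality of $\lambda_2$ and $c_2$ is immediate from the conjugation symmetry.

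The only nontrivial point, and the main obstacle, is the nondegeneracy $q < \infty$ (i.e.\ $a_1 \not\equiv 0$). I would extract this from the Grushin equation \eqref{eq:Bg0} established in the proof of Proposition \ref{p:SB}. At $\underline k_1$ one has $z(\underline k_1) = 1/2$, and crucially $\theta(1/2) \neq 0$; this is the essential difference from the $\Gamma$-point case, where $\theta(z(0)) = \theta(0) = 0$ produced the $Bk^2$ behavior. Using the identity $\theta(1 - z) = \theta(z)$ (which follows from the quasi-periodicity and oddness in \eqref{eq:theta}), one deduces $\theta'(1/2) = 0$, so the Taylor expansion in $\eta = z(k) - 1/2 = \sqrt 3\,\tilde k/(4\pi i)$ contains only even powers:
\[
\theta(z(k))^2 = \theta(1/2)^2\bigl(1 + \kappa\,\tilde k^2 + \mathcal O(\tilde k^4)\bigr), \qquad \kappa = -\frac{3\,\theta''(1/2)}{16\pi^2\,\theta(1/2)}.
\]
Substituting into \eqref{eq:Bg0} and solving for $\lambda(k, B) - \underline\lambda$ using $g_0(\underline\alpha) \neq 0$ (Theorem \ref{th:Gamma}) and $g_1(\underline\alpha) \neq 0$ (established in the proof of Proposition \ref{p:SB}) yields $q = 1$ and $c_2 = -g_0(\underline\alpha)\,\kappa/g_1(\underline\alpha) + \mathcal O(B)$. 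The remaining obstruction is thus the scalar inequality $\theta''(1/2) \neq 0$, which reduces to a direct theta-function computation at $\tau = \omega$ and is numerically immediate.
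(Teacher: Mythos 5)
Your setup and the symmetry analysis are exactly the paper's: the Riesz projection gives a holomorphic simple eigenvalue $\lambda(k,B)$, and the fact that $2\underline k_1\in\Lambda^*$ combined with \eqref{eq:symla} and \eqref{eq:perla} forces an expansion in even powers of $k-\underline k_1$ with real coefficients (for real $B$) that all vanish at $B=0$. Up to the display $\lambda=\underline\lambda+B\lambda_2(B)+c_2B^q(k-\underline k_1)^2+\cdots$ with $c_2\in\mathbb R$, you and the paper coincide.

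The gap is in the nondegeneracy $c_2\neq 0$, which is the only nontrivial point and which you try to extract directly from the Grushin equation \eqref{eq:Bg0}. That route requires two nonvanishing inputs that you do not establish: (i) $\theta''(\tfrac12)\neq 0$, which you merely assert is ``numerically immediate,'' and (ii) the nonvanishing of the $\alpha$-derivative of the effective Hamiltonian \emph{at} $k=\underline k_1$, i.e.\ $g_1(\underline k_1,\underline\alpha)\neq 0$ --- note that the quantity proved nonzero in the proof of Proposition \ref{p:SB} is $g_1(\underline\alpha)=g_1(0,\underline\alpha)$, taken at $k=0$, and it does not transfer to $\underline k_1$. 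The paper's Remark 2 after the proposition makes this explicit: the formula $c_2=g_0(\underline\alpha)/g_1(\underline k_1,\underline\alpha)$ and the value $q=1$ are stated only conditionally on a numerical check of $g_1(\underline k_1,\underline\alpha)\neq0$; this is precisely why the proposition claims only $q\geq 1$ rather than the $q=1$ you derive. The paper instead proves $c_2\neq 0$ by an indirect counting argument: if the $(k-\underline k_1)^2$ coefficient vanished identically in $B$, then (since $k\mapsto\lambda(k,B)$ is holomorphic and, by Proposition \ref{p:SB}, non-constant for $B\neq0$) the leading term would be $c_4B^q(k-\underline k_1)^{2p}$ with $p>1$, so a fixed value $\mu=\lambda(k,B)$ close to $\underline\lambda$ would be attained at $2p>2$ points $k$ near $\underline k_1$, contradicting the two-point spectral count of Theorem \ref{th:Gamma}. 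To close your argument you would either need to supply proofs of (i) and (ii) --- the latter depending on the potential and the magic angle and not currently available analytically --- or replace this step by the counting argument.
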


\noindent
{\bf Remarks.} 1. We again have a bifurcation result similar to \eqref{eq:morela1} but less precise:
\begin{equation}
\label{eq:morela3} 
\begin{gathered} B^q (k-\underline k_1)^2 ( 1 + f_1 ( k , B ) ) = c_3^{-1} \underline \alpha^{-2} ( 
\underline \alpha - \alpha - B \lambda_3 ( B ) ) , \\
c_3 \in \mathbb R \setminus \{ 0 \} ,  \ \ 
\ {\overline{\lambda_3 ( \bar z  ) } = \lambda_3 ( z )} , \ \ 
f_1 ( k , B ) = \mathcal O ( B ) . 
\end{gathered}
\end{equation}
For $ B $ real we see a bifurcation at $ \alpha_* = \underline \alpha + B \lambda_3 ( B ) $,
with similar bifurcations for $ B = B_0 e^{ \pm 2 \pi i / 3} $, $ B_0 > 0 $,  obtained using \eqref{eq:symD}.

\noindent
2. If we know (which can be checked numerically for the Bistritzer--MacDonald
potential) that $ g_1 ( \underline k_1 , \underline \alpha ) \neq 0 $ and $ q =1 $ then
\[ c_2 = \frac{ g_0 ( \underline \alpha ) } { g_1 ( \underline k_1 , \underline \alpha ) }  . \]
This follows from a comparison with the effective Hamiltonian in \eqref{eq:newE-+}.

\begin{proof}[Proof of Proposition \ref{p:SB1}]
From \eqref{eq:symla} and periodicity, $\lambda(k+\Lambda^*,B) = \lambda(k,B)$, 
we conclude that {(note that $ 2 k_1 =  4 \pi i /\sqrt 3 \in \Lambda^* $)
\begin{equation}
\begin{split}
\lambda(\underline k_1 + z ,B) &= \lambda(-\underline k_1-z ,B) =\lambda(\underline k_1 - z ,B) \\
&= \overline{ \lambda(- \bar z - \underline k_1,\bar B)} =\overline{ \lambda( \underline k_1 - \bar z ,\bar B)}.
\end{split}
\end{equation}
We also note that for $ k\notin  \mathcal K_0 + D ( 0 , \epsilon ) $, $ \lambda ( k , 0 ) = 
\underline \lambda $ (since $ k \in \Spec_{L^2_0 } D_0 ( \alpha ) $ only at $ \alpha = 
\underline \alpha = 1/\underline \lambda $. 
Hence, 
 \begin{equation}
 \label{eq:lamb2} 
 \begin{gathered} \lambda(k,B) = \underline \lambda 
 + B \lambda_2 ( B ) + c_2 B^{q} ( k- \underline k_1)^2 
 + \mathcal O(B^{q+1} (k- \underline k_1)^2 ) , \\
 c_2 \in \mathbb R , \ \ \  \lambda_2 ( \bar z) = \overline {\lambda_2 ( z ) } 
\end{gathered}
\end{equation}
Suppose that $ c_2 = 0 $. 
\{Proposition \ref{p:SB} already shows that $ \lambda ( k , B ) $ cannot
be independent of $ k $. This and and  $ c_3 = 0 $  imply that 
for some $  q \geq 1$, $ p > 1 $, 
\[ \begin{split} \lambda(k,B) & = \underline \lambda + B c_2+  B^2 \lambda_2 ( B ) + c_4 B^q (k-\underline k_1 )^{2p} ( 1 +  \mathcal O ( B) )   . \end{split} \]}
Now fix $ \mu = \lambda ( k , B ) $ close $ \underline \lambda $. Then
for a fixed $ 0 < |B| \ll 1 $ 
 we would have $ 2 p$ solutions $ k $ near $ \underline k_1 $ (recall that $ k \mapsto \lambda ( k, B ) $
 is holomorphic near $ \underline k_1 $). 
 However, Theorem \ref{th:Gamma} shows that there are at most two solutions for a fixed
 $ \lambda (k  , B ) $. {This gives a desired contradiction and shows that $ c_2 \neq 0 $.}
\end{proof}

\begin{figure}
{\begin{tikzpicture}
\node at (0,0) {\includegraphics[width=13.5cm,height=8cm]{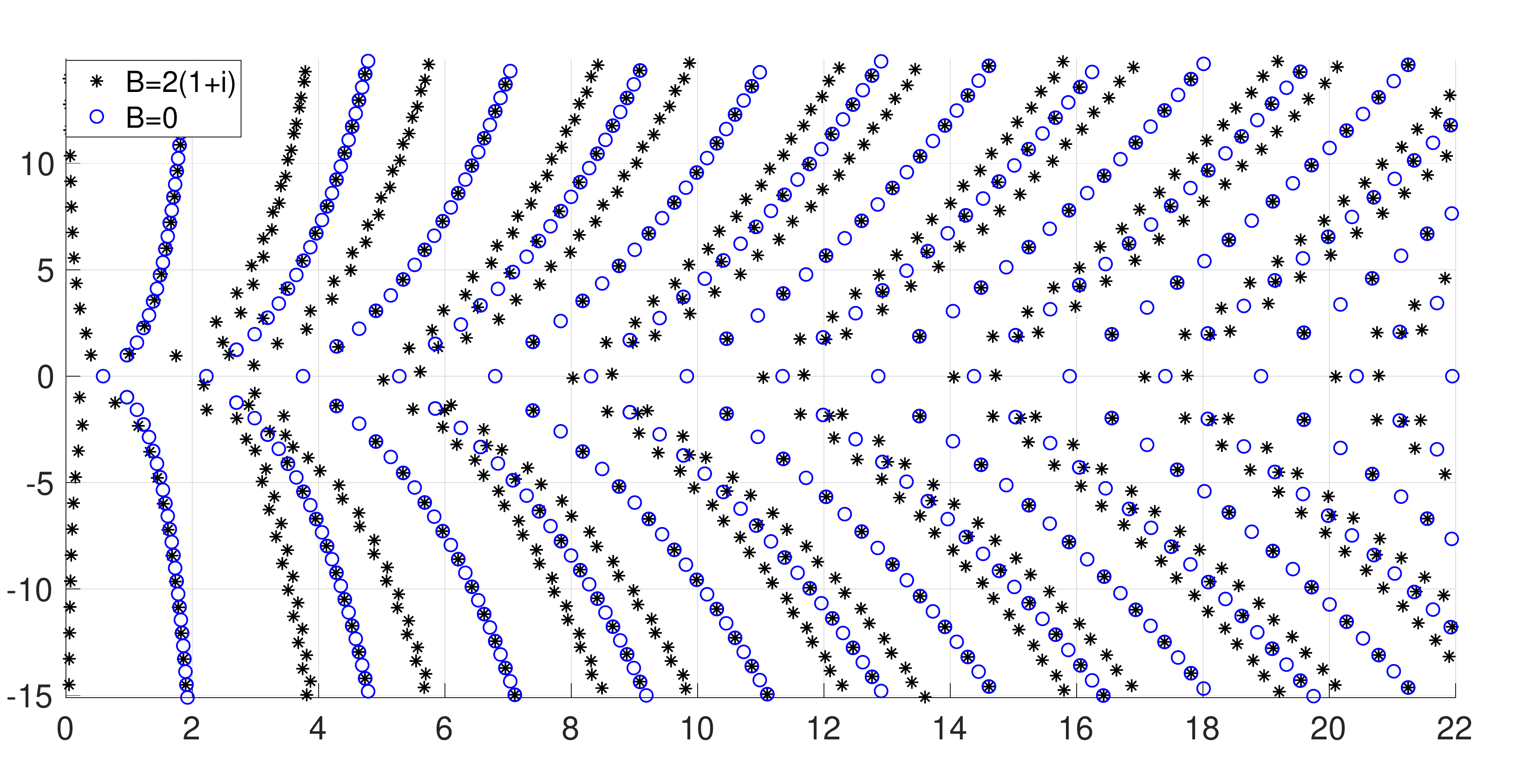}};
\node at (0,-4) {$\Re \alpha$};
\node at (-7,0) {$\Im \alpha$};
\end{tikzpicture}}
{\begin{tikzpicture}
\node at (0,-4) {\includegraphics[width=13.5cm,height=8cm]{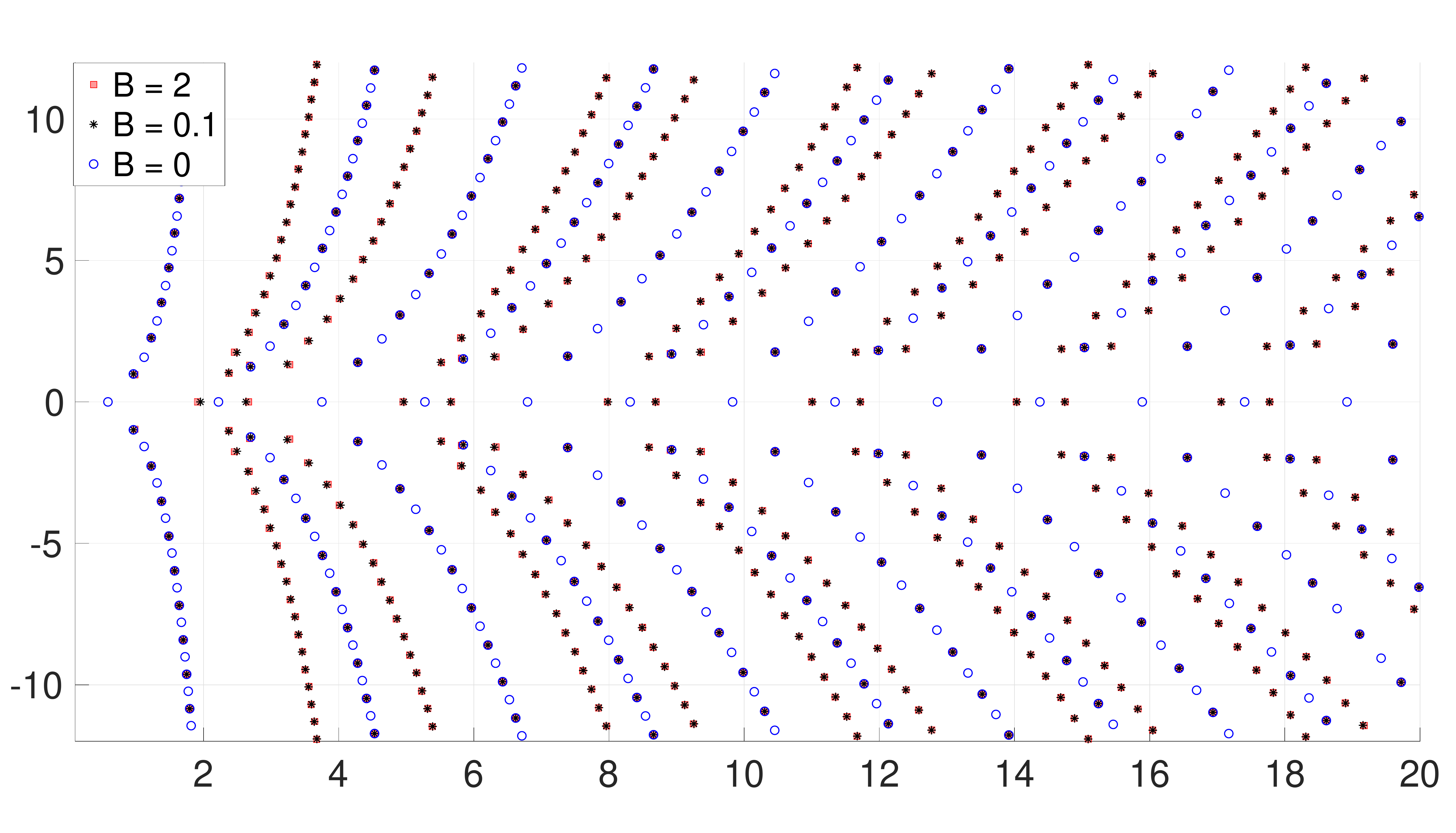}};
\node at (0,-8) {$\Re \alpha$};
\node at (-7,-4) {$\Im \alpha$};
\end{tikzpicture}}
\caption{\label{fig:Dirac} Top figure showing $\alpha \in \CC$ such that $1/\alpha \in \Spec_{L^2_0}(\widetilde T_K(B))$ or $K \in \Spec_{L^2_0}(D_B(\alpha)).$ We see that indeed for $B \in \mathbb R \setminus \{0\}$ the trajectory of Dirac points passes through $K,K'.$
Bottom figure showing $\alpha \in \CC$ such that $1/\alpha \in \Spec_{L^2_0}(\widetilde T_K(B))$ or $K \in \Spec_{L^2_0}(D_B(\alpha)).$ For general $B \notin \RR$ the trajectory of Dirac points for varying $\alpha \in \mathbb R$ does not pass through $K$ between successive real magic angles.}
\end{figure}

%\begin{figure}
%{\begin{tikzpicture}
%\node at (0,0) {\includegraphics[width=13.5cm,height=8cm]{hittinggamma.eps}};
%\node at (0,-4) {$\Re \alpha$};
%\node at (-7,0) {$\Im \alpha$};
%\end{tikzpicture}}
%\caption{\label{fig:hitting_gamma} Figure showing $\alpha \in \CC$ such that $1/\alpha \in \Spec_{L^2_0}(T_0(B))$ or $0 \in \Spec_{L^2_0}(D_B(\alpha)).$ }
%\end{figure}

%The first theorem is a simple consequence of perturbation theory. 
%\begin{theo}
%\label{th:nonmagic}
%Suppose that $ \underline{\alpha} \notin \mathcal A $. Then there exists
%$ \delta $ such that  the spectrum of $ D_B ( \alpha ) $ on $ L^2_0 $ is
%discrete for $ |\underline {\alpha } - \alpha | + | B | < \delta $. 
%\end{theo}

\section{Proofs of Theorems \ref{th:Gamma} and \ref{th:lin}}
\label{s:proofs}

Combining the results of previous of sections we can now prove the main 
results of this paper.

\begin{proof}[Proof of Theorem \ref{th:Gamma}]
In the notation of \eqref{eq:newE-+}  we see the effective Hamiltonian for 
$ D_B ( \underline \alpha ) $ for $ B $ small 
\begin{equation} 
\label{eq:newnewE} E_{-+}^B (k , \alpha ) = - B c( k ) c^* ( k )  ( c_0 \theta ( z ( k ) )^2 + \mathcal O ( B ) ) 
+ \mathcal O ( \alpha - \underline \alpha)  
. \end{equation}
Since $ \theta ( z ( k ) ) \neq 0 $ for 
$ k \notin \Lambda^* $ (see \S \ref{s:theta}) 
there exists a constant $ a_1 $ such that if $ |\alpha - \underline \alpha| < a_1 |B| $
then $ E_{-+}^B ( k ) $ is not identically $ 0 $ (provided that $ B $ is small enough). 
This shows invertibility at some $ k $ and hence discreteness of the spectrum
(by the analytic Fredholm theory applied to $ k \mapsto  ( D_B ( \alpha ) - k )^{-1} $
-- see for instance \cite[Theorem C.8]{res}) for 
\begin{equation}
\label{eq:Om1}   ( B , \alpha ) \in \Omega_1  := \{ ( B , \alpha ) : |B| < \delta_1, \ \ 
| \alpha - \underline \alpha | < a_1 |B |\}.  
\end{equation}
On the other hand, we can put $ k = 0 $ and recall from the proof of
Proposition \ref{p:SB} (see \eqref{eq:newE-+}) that
\[ E_{-+}^B ( 0, \alpha ) = c_0 ( \alpha - \underline \alpha ) ( 1 + 
\mathcal O ( \alpha - \underline \alpha ) + \mathcal O ( B ) ) + \mathcal O ( B^2 )  , \ \ c_0 \neq 0 .\]
Hence $ E_{-+}^B ( 0 , \alpha ) $ does not vanish if, for some constant $ A_1 $, and small $\delta_2>0$, 
\begin{equation}  ( B , \alpha ) \in \Omega_2  := \{ ( B , \alpha ) : \ \ 
 A_1 |B|^2 < | \alpha - \underline \alpha | < \delta_2 \}. 
\end{equation}
Again, that implies discreteness of the spectrum. We now note that there exists $ \delta_0 > 0 $ such that
\[  (D( 0 , \delta_0 ) \setminus \{ 0 \} )  \times D ( \underline \alpha , \delta_0 ) \subset \Omega_1 \cup \Omega_2 , \]
and this proves discreteness of the spectrum of $ D_B ( \alpha ) $ for $ 0 < |B |  < \delta_0 $
and $ | \alpha - \underline \alpha | < \delta_0 $. 

We also see that \eqref{eq:newnewE} implies \eqref{eq:nearG}: for $ U \Subset \mathbb C $, 
for any epsilon there exits $ \rho > 0 $ such that 
$ |\theta ( z ( k ) )^2| > \rho $ for $ z \in \mathcal U \setminus ( \Lambda^* + D( 0 , \epsilon ) ) $. 
But then 
\[ | E_{-+}^B ( k , \alpha ) | >    c_0 c( k ) c^* ( k ) |B | \rho - \mathcal O ( B^2 ) - \mathcal O
( |\alpha - \underline \alpha | ) > 0 , \]
if $  0 < |B| \leq \rho/C  $ and $ | \alpha - \underline \alpha | <  \rho |B | /C $ for some
(large) constant $ C $. 

It remains to prove \eqref{eq:count}. Let $ F $ be a fundamental domain of $ \Lambda^* $
containing $ 0 $ and such that there are no eigenvalues on $ \partial F $ (that can 
be arranged as under our assumptions the spectrum of $ D_B ( \alpha ) $ 
 is discrete and periodic with respect to $ \Lambda^* $).
Then, 
\[  \vert \Spec_{L^2_0}(D_B(\alpha)) \cap \CC/\Gamma^* \vert =
\frac{1}{ 2 \pi i } \tr \int_{ \partial F }  ( \zeta - D_B ( \alpha ) )^{-1} d \zeta . \]
As long $ D_B ( \alpha ) $ has no eigenvalue on $ \partial F $ for $ ( B , \alpha) \in
K \subset \mathbb C^2 $, this value remains constant for $ ( B, \alpha ) \in K $.
Choosing a small $ \epsilon $ and $ \delta $ needed for \eqref{eq:nearG} and 
putting  $ K = \{ ( B, \alpha ) : |B | < \delta,\  | \alpha - \underline \alpha |< a_0 \delta | B | \} $, 
we see that (using \cite[Proposition 4.2]{SZ})
\[ \begin{split} \frac{1}{ 2 \pi i } \tr \int_{ \partial F }  ( \zeta - D_B ( \alpha ) )^{-1} d \zeta & =
\frac{1}{ 2 \pi i } \tr \int_{ \partial D( 0 , \epsilon) }  ( \zeta - D_B ( \underline \alpha ) )^{-1} d \zeta
\\
& = \frac{1}{ 2 \pi i }  \int_{ \partial D( 0 , \epsilon ) } E_{-+}^B ( \zeta  )^{-1} d_\zeta E_{-+}^B 
( \zeta )  \\
& =  \frac{1}{ 2 \pi i }\int_{ \partial D( 0 , \epsilon ) } ( \zeta^2 + \mathcal O ( B ) )^{-1} ( 2 \zeta + \mathcal O 
( B ) ) d \zeta \\
& = 2 + \mathcal O  (B ) = 2 , \end{split}  \]
provided $ B $ is small enough (depending on $ \epsilon $, note that $ \alpha = \underline
\alpha $ in the calculation; the answer has to be an integer). 

We now need to account for the possibility that $ D_B ( \alpha ) $ has an 
eigenvalue on $ \partial F $. Periodicity of the spectrum shows that if $ k_1 \in 
\Spec D_B ( \alpha ) \cap \partial F $ then $ k_1 + \gamma \in \partial F $ 
for a finite number of  $ \gamma \in \Lambda^* $ (from the definition of a fundamental domain).
Only one of these points can be in the fundamental domain $ F $ and a small deformation
includes it in the interior of (the new) $ F$, while excludes all others from $ \partial F $.
The previous argument shows that the number of eigenvalues remains $ 2$.
\end{proof}

\begin{proof}[Proof of Theorem \ref{th:lin}]
When $ B, \alpha \in \mathbb R  $ then 
the last identity in \eqref{eq:symD} gives
\begin{equation}
\label{eq:real}  \Spec_{ L^2_0 } D_{   B}  ( \alpha )  = - \Spec_{L^2_0 } D_B ( \alpha ) = 
\overline{ \Spec_{L^2_0 } D_B ( \alpha ) }. \end{equation}

From Theorem \ref{t:symD} we know that for $ \alpha \notin \mathcal A $, 
\[ \Spec_{ L^2_0 } ( D_B ( \alpha ) ) = \{ d ( \alpha ), - d( \alpha )  \} + \Lambda^* \]
 (we fix $ B \in 
\mathbb R $ here) and 
\eqref{eq:real} shows that
\[    \overline {d ( \alpha )} \equiv  d ( \alpha ) \! \! \! \mod \Lambda^* \ \text{ or } \ 
 \overline {d ( \alpha )} \equiv  - d ( \alpha ) \! \! \! \mod \Lambda^* . \]
 Since $ \overline{ \Lambda^* } = \Lambda^* $ this means that 
$  \Spec_{L^2_0 } D_B ( \alpha ) \subset ( \mathbb R + \Lambda^* ) \cup ( i \mathbb R + 
\Lambda^* )  $ which is the same as \eqref{eq:rect1}.  

To prove \eqref{eq:bigcup} we recall that $\CC \times (\CC\setminus \mathcal K_0) \ni (B, k) \mapsto T_k(B)$, is a holomorphic family of compact operators with simple eigenvalue $\mu=1/\underline \alpha \in \Spec(T_k(0))$. We define $\mathscr K:=\mathscr R \setminus \bigcup_{k' \in \mathcal K_0}  B (k', \epsilon )$, then by periodicity of the spectrum of $D_B(\alpha)$ it suffices to restrict us to a fundamental domain: Since $\mathscr K/\Lambda^*$ is a compact set, the spectrum of $\mathscr K \ni k \mapsto T_k(B)$ is uniformly continuous on $ B $ in compact sets. Thus, for $0<\vert B \vert <\delta_0$ small enough, the operator $T_k(B)$ has precisely one eigenvalue in a $\delta_1$ neighbourhood of $\mu$ for every $k.$ This implies that for every $k \in \mathscr K/\Lambda^*$ there is precisely one $\mu_k$ such that $\mu_k \in \Spec(T_k(B))$ and $\vert \mu_k - \mu \vert < \delta_1.$ From Propositions \ref{p:SB} and \ref{p:SB1} we conclude that $\mu_k \in \mathbb R$ and the result follows.
\end{proof}

\noindent
{\bf Remark.} While our proof does not show that for $B \in \mathbb R\setminus \{0\}$ the points $K,K'$ are also in the spectrum of $D_B(\alpha)$ for some real $\alpha$ between successive magic angle, the bottom figure in Figure \ref{fig:Dirac} shows that this is indeed the case. For general $B \notin \RR$ this is however false, as the top figure in Figure \ref{fig:Dirac} shows. Both figures exhibit an interesting universal pattern for $\vert \alpha \vert $ large.


\begin{thebibliography}{0}

\bibitem[Be*21]{suppl} S.~Becker, M.~Embree, J.~Wittsten and M.~Zworski,
{\em Spectral characterization of magic angles in twisted bilayer graphene,} Phys. Rev. B {\bf 103}, 165113, 2021.

\bibitem[Be*22]{beta}  S.~Becker, M.~Embree, J.~Wittsten and M.~Zworski,
{\em Mathematics of magic angles in a model of twisted bilayer graphene},
to appear in Probability and Mathematical Physics.

\bibitem[BHZ22a]{bhz1} S.~Becker, T.~Humbert, and M.~Zworski,
{\em Integrability in chiral model of magic angles,} \arXiv{2208.01620}. 

\bibitem[BHZ22b]{bhz2} S.~Becker, T.~Humbert and M.~Zworski,
{\em Fine structure of flat bands in a chiral model of magic angles}, \arXiv{2208.01628}. 

\bibitem[BHZ23]{bhz3} S.~Becker, T.~Humbert and M.~Zworski,
{\em Degenerate flat bands in twisted bilayer graphene,} \arXiv{2306.02909}. 


\bibitem[BKZ22]{BKZ} S.~Becker, J.~Kim, X.~Zhu, {\em Magnetic response of twisted bilayer graphene}, \arXiv{2201.02170}.


\bibitem[CGG22]{CGG} E.~Cancès, L.~Garrigue, D.~Gontier, \emph{A simple derivation of moiré-scale continuous models for twisted bilayer graphene.} \arXiv{2206.05685}.  

\bibitem[DyZw19]{res} S. Dyatlov and M. Zworski,
\emph{Mathematical Theory of Scattering Resonances,\/}
         AMS 2019, \url{http://math.mit.edu/~dyatlov/res/}

\bibitem[dGGM12]{dGGM12} R. de Gail, M. O. Goerbig, and G. Montambaux, \emph{Magnetic spectrum of trigonally warped bilayer graphene: Semiclassical analysis, zero modes, and topological winding numbers}, PhysRevB.86.045407, 2012.

\bibitem[H\"oI]{H1} L. H{\"o}rmander,
\emph{The Analysis of Linear Partial Differential Operators I. Distribution Theory and Fourier Analysis,\/}
	Springer Verlag, 1983.
  
%\bibitem[Ka80]{kato} T. Kato, \emph{Perturbation Theory for Linear Operators}, Corrected second edition, Springer, Berlin, (1980).    	

\bibitem[KhZa15]{voca}  S.~Kharchev and A.~Zabrodin, {\em Theta vocabulary I.} J. Geom. Phys. {\bf 94}(2015), 19--31.


\bibitem[KPS20]{kps} Y.H.~Kwan, S.A.~Parameswaran, and S.L.~Sondhi,
\emph{Twisted bilayer graphene in a parallel magnetic field}
Phys. Rev. B {\bf 101}(2020), 205116.
	
\bibitem[MLFP18]{MLFP} G.~Montambaux, L.-K.~Lim, J.-N.~Fuchs, and F.~Pi\'echon, \emph{Winding Vector: How to Annihilate Two Dirac Points with the Same Charge},
Phys. Rev. Lett. 121, 256402, 2018. 

\bibitem[Mu83]{tata} 
D.~Mumford, \emph{Tata Lectures on Theta. I.} Progress in Mathematics, {\bf 28}, Birkh\"auser, Boston, 1983.


\bibitem[QiMa21]{wema} W.~Qin and A.H.~MacDonald,
\emph{In-Plane Critical Magnetic Fields in Magic-Angle Twisted Trilayer Graphene},
Phys. Rev. Lett. {\bf 127}(2021), 097001. 


\bibitem[RY13]{RY} B.~Roy and K.~Yang. \emph{Bilayer graphene with parallel magnetic field and twisting: Phases and phase transitions in a highly tunable Dirac system}, Phys. Rev. B 88, 241107(R), 2013. 


\bibitem[SZ07]{SZ} J. Sj\"ostrand and M. Zworski,
        \emph{Elementary linear algebra for advanced spectral problems,\/}\\
	Ann.~Inst.~Fourier \textbf{57}, 2095--2141, 2007.



\bibitem[TaZw23]{notes} Z.~Tao and M.~Zworski, 
{\em PDE methods in condensed matter physics,} Lecture Notes, 2023, \\ \url{https://math.berkeley.edu/~zworski/Notes_279.pdf}.

\bibitem[TKV19]{magic}
G. Tarnopolsky,  A.J. Kruchkov and A. Vishwanath,
\emph{Origin of magic angles in twisted bilayer graphene},
Phys. Rev. Lett. 122, 106405, 2019

\bibitem[WaLu21]{lawa}  A.~Watson and M.~Luskin,
{\em Existence of the first magic angle for the chiral model of bilayer graphene},   J.~Math.~Phys.
{\bf 62}, 091502 (2021).

\bibitem[Wa*22]{Wa22}A. B. Watson, T. Kong, A. H. MacDonald, and M. Luskin \emph{Bistritzer-MacDonald dynamics in twisted bilayer graphene}, \arXiv{2207.13767}.


\end{thebibliography}
\end{document}